\renewcommand*{\backrefalt}[4]{%
    \ifcase #1 \footnotesize{(Not cited.)}%
    \or        \footnotesize{(Cited on page~#2.)}%
    \else      \footnotesize{(Cited on pages~#2.)}%
    \fi}
\newtheorem{theorem}{Theorem}[section]
\newtheorem{corollary}[theorem]{Corollary}
\newtheorem{lemma}[theorem]{Lemma}
\newtheorem{proposition}[theorem]{Proposition}
\newtheorem{definition}[theorem]{Definition}
\newtheorem{example}{Example}[section]
\newtheorem{remark}[theorem]{Remark}
\numberwithin{equation}{section}
\newcommand{\sa}{\mathbf a}
\newcommand{\argmin}{\mathop{\rm argmin}}
\newcommand{\argmax}{\mathop{\rm argmax}}
\newcommand{\GCal}{\mathcal{G}}
\newcommand{\XCal}{\mathcal{X}}
\newcommand{\br}{\mathbb{R}}
\newcommand{\ba}{\begin{array}}
\newcommand{\ea}{\end{array}}
\newcommand{\PCal}{\mathcal{P}}
\newcommand{\EE}{{\mathbb{E}}}
\newcommand{\PP}{\mathbb{P}}
\newcommand{\NCal}{\mathcal{N}}
\newcommand{\one}{\textbf{1}}
\newcommand{\zero}{\textbf{0}}
\begin{document}


\begin{center}

{\bf{\LARGE{Adaptive, Doubly Optimal No-Regret Learning in \\[.1cm] Strongly Monotone and Exp-Concave Games with \\[.3cm] Gradient Feedback}}}

\vspace*{.2in}
{\large{ \begin{tabular}{c}
Michael I. Jordan$^{\diamond, \dagger}$ \and Tianyi Lin$^\ddagger$ \and Zhengyuan Zhou$^\square$ \\
\end{tabular}
}}

\vspace*{.2in}

\begin{tabular}{c}
Department of Electrical Engineering and Computer Sciences$^\diamond$ \\
Department of Statistics$^\dagger$ \\
University of California, Berkeley \\
Department of Industrial Engineering and Operations Research (IEOR), Columbia University$^\ddagger$ \\
Stern School of Business, New York University$^\square$
\end{tabular}

\vspace*{.2in}

\today

\vspace*{.2in}

\begin{abstract}
Online gradient descent (OGD) is well known to be doubly optimal under strong convexity or monotonicity assumptions: (1) in the single-agent setting, it achieves an optimal regret of $\Theta(\log T)$ for strongly convex cost functions; and (2) in the multi-agent setting of strongly monotone games, with each agent employing OGD, we obtain last-iterate convergence of the joint action to a unique Nash equilibrium at an optimal rate of $\Theta(\frac{1}{T})$. While these finite-time guarantees highlight its merits, OGD has the drawback that it requires knowing the strong convexity/monotonicity parameters. In this paper, we design a fully adaptive OGD algorithm, \textsf{AdaOGD}, that does not require a priori knowledge of these parameters. In the single-agent setting, our algorithm achieves $O(\log^2(T))$ regret under strong convexity, which is optimal up to a log factor. Further, if each agent employs \textsf{AdaOGD} in strongly monotone games, the joint action converges in a last-iterate sense to a unique Nash equilibrium at a rate of $O(\frac{\log^3 T}{T})$, again optimal up to log factors.  We illustrate our algorithms in a learning version of the classical newsvendor problem, where due to lost sales, only (noisy) gradient feedback can be observed. Our results immediately yield the first feasible and near-optimal algorithm for both the single-retailer and multi-retailer settings. We also extend our results to the more general setting of exp-concave cost functions and games, using the online Newton step (ONS) algorithm. 
\end{abstract}

\end{center}

\section{Introduction}\label{sec:introduction}
The problem of online learning with gradient feedback~\citep{Blum-1998-Online, Shalev-2012-Online, Hazan-2016-Introduction} can be described in its essential form by the following adaptive decision-making process:
\begin{enumerate}
\item An agent interfaces with the environment by choosing an \emph{action} $x^t \in \XCal$ at period $t$ where $\XCal \subseteq \br^d$ is a convex and compact set. For example, the action is a route in a traffic network or an output quantity in an oligopoly. The agent chooses $x^t$ through an online learning algorithm, which makes its choice adaptively based on observable historical information.
\item The environment then returns a cost function $f_t(\cdot)$ so that the agent incurs cost $f_t(x^t)$ and receives $\nabla f_t(x^t)$ as feedback. The process then moves to the next period $t+1$ and repeats.
\end{enumerate}
One appealing feature of the online learning framework is that one need not impose any statistical regularity assumption: $f_1(\cdot), \dots, f_T(\cdot)$ can be an arbitrary fixed sequence of cost functions, hence accommodating a non-stationary or even adversarial environment. Further, the cost function $f_t(\cdot)$ does need not to be known by the agent (and indeed in many applications it is not known); only the gradient feedback is needed. In this general framework, the standard metric for judging the performance of an online learning algorithm is \emph{regret}~\citep{Blum-2007-External}---the difference between the total cost incurred by the algorithm up to $T$ and the total cost incurred by the best fixed action in hindsight:
\begin{equation}\label{def:regret}
\text{Regret}(T) = \sum_{t=1}^T f_t(x^t) - \min_{x \in \XCal} \left\{\sum_{t=1}^T f_t(x)\right\}.
\end{equation}
If the average regret (obtained by dividing by $T$) goes to zero, then the algorithm is referred to as a ``no-regret learning algorithm.''

A canonical example of a no-regret learning algorithm is online gradient descent (OGD), where the agent takes a gradient step (given the current action $x_t$) and performs a projection step onto $\XCal$ to obtain the next action $x_{t+1}$.  Analyzing OGD in the standard setting where the cost function $f_t$ is convex,~\citet{Zinkevich-2003-Online} proved that the algorithm with the stepsize rule $\eta_t = \frac{1}{\sqrt{T}}$ achieves a regret bound of $\Theta(\sqrt{T})$, which is known to be minimax optimal~\citep{Hazan-2016-Introduction}. If $f_t$ is further assumed to be $\mu$-strongly convex,~\citet{Hazan-2007-Logarithmic} proved that the algorithm with the stepsize rule $\eta_t = \frac{1}{\mu t}$ achieves a regret bound of $\Theta(\log T)$; again this rate is minimax optimal. In fact, the $\Theta(\log T)$ regret bound is achievable even for a class of cost functions that are more general than strongly convex cost functions: if $f_t$ is exp-concave---a class of functions properly subsuming strongly convex functions that has found widespread applications (see Section~\ref{sec:SA} for a detailed discussion)---then the online Newton step (ONS)~\citep{Hazan-2007-Logarithmic} achieves the minimax optimal regret bound of $\Theta(d\log T)$. In summary, OGD and ONS provide two of the most well-known optimal no-regret learning algorithms in the online learning/online convex optimization literature, with their algorithmic simplicity and theoretical elegance being matched by their broad applicability in practice.   

Given their appealing theoretical and practical properties, the aforementioned no-regret learning algorithms such as OGD and ONS have also served as natural candidates for game-theoretic learning.  In this setting, each agent makes online decisions in an environment consisting of other agents, each of whom are making adaptive decisions. Even if the game is fixed, the fact that all agents are adjusting their strategies simultaneously means that the stream of costs seen by any single agent is non-stationary and complex. It might be hoped that no-regret learning---given its robustness to assumptions---can cope with the complexity of the multi-agent setting.  An extensive literature has shown that this hope is borne out---under no-regret dynamics, the time average of the joint actions converges to equilibria in various classes of games~\citep{Cesa-2006-Prediction, Shoham-2008-Multiagent, Viossat-2013-Noregret, Bloembergen-2015-Evolutionary, Monnot-2017-Limits}.

Further progress in the online learning and optimization literature has yielded more refined statements regarding convergence; in particular, last-iterate rate guarantees have been established for many no-regret learning algorithms.  This leads to an analogous question for the game-theoretic setting: \emph{If each agent employs a no-regret learning algorithm to minimize its own regret, can the joint action converge to a Nash equilibrium at an optimal last-iterate rate?}

An affirmative answer to this question\footnote{For instance, if each vehicle in a traffic network employs an optimal no-regret learning algorithm (such as OGD) to choose their route adaptively over a certain horizon, would the system converge to a stable traffic distribution or devolve to perpetual congestion as users ping-pong between different routes? If it does converge to a stable distribution, is it Nash? Because if not, each agent is being irrational---by \textit{not} following the no-regret learning algorithm, agents can do individually do better.} would establish a remarkable ``double optimality" for an online algorithm: while the algorithm itself is only designed for maximizing the (transient) performance for a finite time horizon $T$, the resulting long-run performance would also be optimal for all agents, in the sense that any agent, by unilaterally deviating from the action suggested by the algorithm, could only incur higher cost (by the definition of a Nash equilibrium). Without this double optimality, an agent could incur ``regret" in the long term, since it may do better by not following such an algorithm.

To address this question, it is necessary to study the last-iterate convergence of algorithms (i.e., the convergence of the \textit{actual} joint action), a problem that has been recognized to be considerably more difficult
than the characterization of convergence of time averages~\citep{Krichene-2015-Convergence, Balandat-2016-Minimizing, Zhou-2017-Mirror, Zhou-2018-Learning, Mertikopoulos-2018-Optimistic, Mertikopoulos-2019-Learning}. For instance, as pointed out by~\citet{Mertikopoulos-2018-Cycles}, there are situations where the time average of the iterate converges to a Nash equilibrium, but the last iterate cycles around the equilibrium point. Progress has been made on this problem during the past five years, but much of it only provides qualitative or asymptotic results, with only a few quantitative (finite-time, last-iterate convergence guarantees) results obtained, for games having special structures or using metrics other than the distance to Nash equilibria (see Section~\ref{subsec:related}). In particular, \citet{Zhou-2021-Robust} established the last-iterate convergence of multi-agent OGD to the unique Nash equilibrium in strongly monotone games\footnote{When the strongly monotone games have Lipschitz gradients---a condition that does not hold in many games of interest---a classic result from the variational inequality literature implies that multi-agent OGD converges to the unique Nash equilibrium at a geometric rate due to a contraction.} at an optimal rate of $\Theta({\frac{1}{T}})$. This optimal convergence rate continues to hold even when the gradient feedback is corrupted by certain forms of noise, in which case we have $E[\|x_t - x^\star\|_2^2] = \Theta(\frac{1}{T})$, where $x_t$ is the (random) joint action of all agents and $x^\star$ is the unique Nash equilibrium.\footnote{This result is further generalized in~\citet{Loizou-2021-Stochastic}, who obtain the same last-iterate convergence rate for strongly variationally stable games, under weaker noise assumptions.} Thus, OGD is doubly optimal when a strong convexity structure is available (i.e., the game is strongly monotone or the cost functions are strongly convex from a single-agent perspective), giving a compelling argument for its adoption in single-agent and multi-agent settings. 

However, this argument suffers from a key, subtle weakness: the theoretical guarantees for OGD require choices of step size, in both single-agent~\citep{Hazan-2007-Logarithmic} and multi-agent~\citep{Zhou-2021-Robust} settings, and these choices require prior knowledge of problem parameters. In particular, it is generally assumed that the strong convexity parameter of the cost functions (single-agent) or the strong monotonicity parameter of the game (multi-agent) are known. Thus, OGD's appealing guarantees are not feasible in practice if the choice of step sizes cannot be made \emph{fully adaptive} to problem parameters. Further, the feasibility issue is more acute in the multi-agent setting: in addition to requiring prior knowledge of problem parameters for step-size designs, recent work on adaptive OGD has assumed that each agent determines their step size using global information from all agents~\citep{Lin-2020-Finite, Antonakopoulos-2021-Adaptive, Hsieh-2021-Adaptive}. This is a practical and theoretical conundrum---if the agents can achieve this level of coordination, learning would be unnecessary in the first place. The same issue occurs for ONS in the single-agent setting, where the exp-concave parameter is needed as an input to the algorithm.\footnote{The multi-agent ONS has not yet been explored, and even the time-average convergence of ONS still remains to be established.} Consequently, the feasibility considerations lead us to consider the following question: \textit{Can we design a feasible and doubly optimal variant of OGD under strong convexity and strong monotonicity? What about ONS?}

Our answer is a ``yes'' in a strong sense. We present a single feasible OGD algorithm---and hence a single parameter-adaptive scheme---that simultaneously (up to log factors) achieves optimal regret in the single-agent setting and optimal last-iterate convergence rate to the unique Nash equilibrium in the multi-agent setting. This analysis is different from and more challenging than that involved in the design of feasible OGD algorithms separately for single-agent and multi-agent settings. In particular, it could be that an effective adaptive scheme for the strong convexity parameter in single-agent setting is different from that for the strong monotonicity parameter in multi-agent setting, in which case one has \textit{at best} either a feasible algorithm with optimal regret or a feasible algorithm that has an optimal convergence-to-Nash guarantee, but not both. Such results would still be of considerable value but our results in this paper show that such intermediate results can be bypassed; indeed, the best of both worlds can be achieved. We also develop a single feasible variant of ONS that (up to log factors) achieves optimal regret in the single-agent setting with exp-concave loss functions and optimal time-average convergence rate to the unique Nash equilibrium in the multi-agent setting.  For the latter result we introduce and analyze a new class of exp-concave games.   

Our results can also be cast in the framework of variational inequalities (VIs).  Indeed, they can be viewed as contributing to the VI literature by presenting a decentralized, feasible optimization algorithm for finding a solution of a strongly monotone VI.  We prefer to emphasize, however, the online learning perspective, and the design of no-regret algorithms, given the direct connection of those algorithms to game-theoretic settings.  In multi-agent games, it is natural to focus on decentralized algorithms and on algorithms that make minimal assumptions about their environment, allowing that environment to consist of other agents that may be responding in complex ways to an agent's actions. Our double optimality contributions are best understood as a further weakening of these assumptions, allowing interacting agents to choose actions effectively in an unknown, possibly adversarial, environment.

\subsection{Related Work}\label{subsec:related}
In both single-agent online learning and offline optimization,  considerable attention has been paid to the development of adaptive gradient-based schemes. In particular,~\citet{Duchi-2011-Adaptive} presented an adaptive gradient algorithm (known as \textsf{AdaGrad}) for online learning with convex cost functions that updates the step sizes without needing to know the problem parameters. This algorithm is guaranteed to achieve a minimax-optimal regret of $O(\sqrt{T})$. Subsequently, \textsf{Adam} was proposed in the offline optimization setting to further exploit geometric aspects of iterate trajectories, exhibiting better empirical convergence performance~\citep{Kingma-2015-Adam}. Theoretical guarantees have been obtained for \textsf{Adam} and other related adaptive algorithms in both offline optimization and online learning settings~\citep{Reddi-2018-Adam, Zou-2019-Sufficient}. In parallel, the norm version of \textsf{AdaGrad} was developed and theoretical guarantees were established for related convex and/or nonconvex optimization problems~\citep{Levy-2017-Online, Levy-2018-Online, Ward-2019-Adagrad, Li-2019-Convergence}. 

This adaptive family of algorithms has also been studied in the online learning literature under an assumption of strong convexity~\citep{Mukkamala-2017-Variants,Wang-2020-SAdam}. The algorithms are guaranteed to achieve a minimax-optimal regret of $O(\log(T))$.  However, unlike in the convex setting, these adaptive algorithms are not feasible in practice since they often require  knowledge of the strong convexity parameter. In the offline optimization setting, the gradient-based methods can be made adaptive to the strong convexity parameter by exploiting the Polyak stepsize~\citep{Polyak-1987-Introduction, Hazan-2019-Revisiting}. However, these algorithms do not extend readily to online learning since the sub-optimality gap is not well-defined. A recent line of research has shown that adaptive algorithms can be  designed in the finite-sum setting, where they exhibit adaptivity to the strong convexity parameter~\citep{Roux-2012-Stochastic, Defazio-2014-Saga, Xu-2017-Adaptive, Lei-2017-Less, Lei-2020-Adaptivity, Vaswani-2019-Painless, Nguyen-2022-Finite}. In particular,~\citet{Lei-2017-Less, Lei-2020-Adaptivity} showed that the use of random, geometrically-distributed epoch length yields full adaptivity in variance-controlled stochastic optimization under an assumption of strong convexity. However, these algorithm are not no-regret and thus their strategies do not extend readily to the online setting. 

In terms of the last-iterate convergence to Nash equilibria, due to the difficulties mentioned earlier, much of the existing literature provides only qualitative convergence guarantees for non-adaptive (and hence infeasible) no-regret learning algorithms for various games~\citep{Krichene-2015-Convergence, Balandat-2016-Minimizing, Zhou-2017-Mirror, Zhou-2018-Learning, Mertikopoulos-2018-Optimistic, Mertikopoulos-2019-Learning}. More recently, finite-time last-iterate convergence rates have been obtained for specially structured games, such as strongly monotone games~\citep{Zhou-2021-Robust,Loizou-2021-Stochastic},  unconstrained cocoercive games~\citep{Lin-2020-Finite}, unconstrained smooth games~\citep{Golowich-2020-Tight} and constrained smooth games~\citep{Cai-2022-Finite}. Except for a class of strongly monotone games, the last-iterate convergence rate is measured in metrics other than $\|x^t - x^\star\|_2^2$. Further, among these results, only~\citet{Lin-2020-Finite} provides an adaptive online learning algorithm that does not require knowing the cocoercivity parameter. However, their algorithm falls short in two respects: (i) it may not be no-regret; (ii) each agent needs to know all other agents' gradients, thus again rendering it infeasible in practice. Recently,~\citet{Antonakopoulos-2021-Adaptive} has developed an adaptive extragradient algorithm for strictly monotone games that converges asymptotically in a last-iterate sense to the unique Nash equilibrium. However, their algorithm also requires each agent to know all others' gradients and the no-regret property cannot be guaranteed; indeed, the original extragradient algorithm was shown to not be no-regret~\citep{Golowich-2020-Tight}. \citet{Hsieh-2021-Adaptive} has proposed a no-regret adaptive online learning algorithm based on optimistic mirror descent and established a regret of $O(\sqrt{T})$ for convex cost functions. They also proved asymptotic last-iterate convergence to the unique Nash equilibrium for strictly variationally stable games (a superset of strictly monotone games).

Another line of relevant literature focuses on stochastic approximation methods for solving strongly monotone VIs. An early proposal using such an approach was presented by~\citet{Jiang-2008-Stochastic},  who proposed a stochastic projection method for solving strongly monotone VIs with an almost-sure convergence guarantee.~\citet{Koshal-2012-Regularized} and~\citet{Yousefian-2013-Regularized} proposed various regularized iterative stochastic approximation methods for solving monotone VIs and also established almost-sure convergence. A survey of these methods, as well as applications and the theory behind stochastic VI, can be found in~\citet{Shanbhag-2013-Stochastic}.~\citet{Juditsky-2011-Solving} was among the first to establish an iteration complexity bound for stochastic VI methods by extending the mirror-prox method~\citep{Nemirovski-2004-Prox} to stochastic setting.~\citet{Yousefian-2014-Optimal} further extended the stochastic mirror-prox method with a more general step size choice and proved the same iteration complexity. They also proved an improved complexity bound for the stochastic extragradient method for solving strongly monotone VIs.~\citet{Chen-2017-Accelerated} studied a specific class of VIs and proposed a method that combines the stochastic mirror-prox method with Nesterov's acceleration~\citep{Nesterov-2018-Lectures}, resulting in an optimal iteration complexity for such problem class.~\citet{Kannan-2019-Optimal} analyzed a general variant of an extragradient method (which uses general distance-generating functions) and proved an optimal iteration bound under a slightly weaker assumption than strong monotonicity. Several other stochastic methods have also been shown to yield an optimal iteration bound for solving strongly monotone VIs~\citep{Kotsalis-2022-Simple, Huang-2022-New}. In recent years, there have been developments in variance-reduction-based methods~\citep{Balamurugan-2016-Stochastic, Iusem-2017-Extragradient, Iusem-2019-Variance, Jalilzadeh-2019-Proximal, Yu-2022-Fast, Alacaoglu-2022-Stochastic, Jin-2022-Sharper, Huang-2022-Accelerated}. In the line of research aiming to model multistage stochastic VI (as compared to the single-stage VI considered in the above literature), the dynamics between the actions and the arrival of future information plays a central role. For details regarding multistage stochastic VI, we refer to~\citet{Rockafellar-2017-Stochastic} and~\citet{Rockafellar-2019-Solving}. 

We are also aware of two related topics: (1) parameter-free online learning~\citep{Foster-2015-Adaptive, Foster-2017-Parameter, Orabona-2016-Coin, Cutkosky-2018-Black, Jun-2019-Parameter, Cutkosky-2020-Better, Cutkosky-2020-Parameter} and (2) online learning with adaptive/dynamic regret~\citep{Zinkevich-2003-Online, Hazan-2007-Adaptive, Daniely-2015-Strongly, Hazan-2016-Introduction}.  The former line of works consider an alternative form of regret: $\text{P-Regret}_T(x) = \sum_{t=1}^T f_t(x^t) - \sum_{t=1}^T f_t(x)$, where $x \in \XCal$ is an unknown \textit{competitor}. The goal is to achieve expected regret bounds that have optimal dependency not only on $T$ but also $\|x\|$. Notably, their framework provides a way to design algorithms that achieve minimax-optimal regret with respect to any competitor, without imposing a bounded set for the competitor nor any parameter to tune in online convex optimization. However, the strong convexity parameter and the exp-concavity parameter both characterize \textit{the lower bound for curvature information}. Estimating these quantities will require new techniques. The second line of works consider two other form of regrets: $\text{D-Regret}_T(x_1, \ldots, x_T) = \sum_{t=1}^T f_t(x^t) - \sum_{t=1}^T f_t(x_t)$ (dynamic) where $x_1, \ldots, x_T \in \XCal$ are any competitors, and $\text{A-Regret}_T(\tau) = \max_{1 \leq s \leq T+1-\tau}\left\{\sum_{t=s}^{s+\tau-1} f_t(x^t) - \min_{x \in \XCal} \left\{\sum_{t=s}^{s+\tau-1} f_t(x)\right\}\right\}$ (adaptive), which is the maximum static regret over an interval with the length $\tau$. In this context, it is impossible to obtain a sublinear dynamic regret with arbitrarily varying sequences and we often assume certain conditions, such as an upper bound for path-length $P_T = \sum_{t=2}^T \|x_{t-1} - x_t\|$. The seminar work of~\citet{Zinkevich-2003-Online} established the first general-case bound of $O(\sqrt{T}(1+P_T))$ for OGD and~\citet{Zhang-2018-Dynamic, Zhang-2018-Adaptive} improved it to $O(\sqrt{T(1+P_T)})$. Several recent studies have further investigated the dynamic regret by leveraging the curvature of loss functions, such as exponential concavity~\citep{Baby-2021-Optimal} and strong convexity~\citep{Baby-2022-Optimal}. There is also the worst-case variant of dynamic regret: $\text{D-Regret}_T(x_1, \ldots, x_T) = \sum_{t=1}^T f_t(x^t) - \sum_{t=1}^T f_t(x_t^\star)$ where $x_t^\star = \argmin_{x \in \XCal} f_t(x_t)$ and has been extensively studied in operations research and computer science~\citep{Besbes-2015-Non, Jadbabaie-2015-Online, Mokhtari-2016-Online, Yang-2016-Tracking}. Moreover, the previous works in adaptive regret minimization mainly focus on the setting of online convex optimization~\citep{Hazan-2007-Adaptive, Daniely-2015-Strongly, Jun-2017-Online, Jun-2017-Improved, Zhang-2019-Adaptive}. The seminar work of~\citet{Hazan-2007-Adaptive} first introduced the notion of adaptive regret in a weak form and proposed an algorithm with an $O(d\log^2(T))$ adaptive regret bound for exponentially concave losses. However, the weak form of adaptive regret could be dominated by long intervals and hence, cannot respect short intervals well. This issue was addressed by~\citet{Daniely-2015-Strongly} who put forth the strongly adaptive regret $\text{A-Regret}_T(\cdot)$ and design a two-layer algorithm which combines OGD and the geometric covering intervals construction. In addition, a few projection-free online algorithms were developed to investigate adaptive/dynamic regret minimization~\citep{Kalhan-2021-Dynamic, Wan-2021-Projection, Wan-2023-Improved, Garber-2022-New, Lu-2023-Projection}. However, all of these algorithms require knowing the strong convexity parameter and the exp-concavity parameters. 

In summary, the possibility of designing an online algorithm that is both \textit{doubly optimal} and \textit{feasible} under strong convexity still remains open. 

\subsection{Our Contributions}
We present a feasible variant of OGD that we refer to as \textsf{AdaOGD} that does not require knowing any problem parameter. It is guaranteed to achieve a minimax optimal (up to a log factor) regret bound of $O(\log^2(T))$ in the single-agent setting with strongly convex cost functions. Further, in a strongly monotone game, if each agent employs \textsf{AdaOGD}, we show that the joint action converges to the unique Nash equilibrium in a \textit{last-iterate sense} at a rate of $O(\frac{\log^3(T)}{T})$, again optimal up to log factors. In comparison, the existing single-agent OGD~\citep{Hazan-2016-Introduction} and multi-agent OGD~\citep{Zhou-2021-Robust} methods require prior knowledge of the strong convexity/monotonicity parameter (respectively) to perform the step-size design with theoretical guarantees. It is worth noting that if the game has only a single agent, the strong monotonicity parameter degenerates to the strong convexity parameter. However, when there are multiple agents, the strong monotonicity parameter depends on all agents' cost functions. As such, one would naturally think that these two settings would require two different adaptive schemes. Surprisingly, our \textsf{AdaOGD} algorithm, which is based on a single adaptive principle, works in both settings, achieving optimal regret in the single-agent setting and optimal last-iterate convergence in the multi-agent setting (up to log factors). A particularly important application of our results is the problem of learning to order in the classical newsvendor problem, where due to lost sales, only (noisy) gradient feedback can be observed. Our results immediately yield the first feasible near-optimal algorithm---both in the single-retailer setting~\citep{Huh-2009-Nonparametric} and in the multi-retailer setting~\citep{Netessine-2003-Centralized}.  This is in contrast to previous work that requires problem parameters to be known. Indeed, the direct application of our results to the VI setting yields a decentralized, feasible optimization algorithm for finding a solution of a strongly monotone VI. 

Additionally, we provide a feasible variant of ONS (that we refer to as \textsf{AdaONS}) that again does not require prior knowledge of any problem parameter. It is also guaranteed to achieve a minimax optimal regret bound of $O(d\log^2(T))$ (up to a log factor) in the single-agent setting with exp-concave cost functions. Further, we propose a new class of exp-concave (EC) games and show that if each agents employs \textsf{AdaONS}, an optimal time-average convergence rate of $O(\frac{d\log^2(T)}{T})$ is obtained. Much like strongly monotone games that provide a multi-agent generalization of strongly convex cost functions, the EC games that we introduce are a natural generalization of exp-concave cost functions from the single-agent to a multi-agent setting. Again, in this case, a single adaptive scheme works for both of these two settings. One thing to note here is that we first establish an $O(\frac{d\log(T)}{T})$ time-average convergence rate for the multi-agent version of classical ONS that is non-adaptive (and hence not feasible). To the best of our knowledge, results of this kind have not appeared in the game-theoretic literature and they yield a decentralized, feasible optimization algorithm for finding a solution of a new class of VIs in that line of literature. 

Perhaps the most surprising takeaway from our work is that both \textsf{AdaOGD} and \textsf{AdaONS} are based on a simple and unifying randomized strategy that selects the step size based on a set of independent and identically distributed geometric random variables.

\section{Feasible Single-Agent Online Learning under Strongly Convex Costs}\label{sec:SA}
In this section, we present adaptive OGD (\textsf{AdaOGD}), a feasible single-agent online learning algorithm, and prove that \textsf{AdaOGD} achieves a near-optimal regret of $O(\log^2(T))$ for a class of strongly convex cost functions. We also show that our algorithm can be used to solve the problem of adaptive ordering in newsvendor problems with lost sales. To our knowledge, this is the first feasible no-regret learning algorithm for the newsvendor-with-lost-sales problem with strong regret guarantees. 

\subsection{Algorithmic Scheme}
We continue with the setup in the introduction, focusing on strongly convex cost functions $f_t$:
\begin{definition}\label{def:SC-function}
A function $f: \br^d \mapsto \br$ is $\beta$-strongly convex if $f(\cdot) - 0.5\beta\|\cdot\|^2$ is convex. 
\end{definition}
We work with a more general (and relaxed) model of gradient feedback~\citep{Flaxman-2005-Online}:
\begin{enumerate}
\item At each round $t$, an unbiased and bounded gradient is observed. That is, the observed noisy gradient $\xi^t$ satisfies $\EE[\xi^t \mid x^t] = \nabla f_t(x^t)$ and $\EE[\|\xi^t\|^2 \mid x^t] \leq G^2$ for all $t \geq 1$.
\item The action set $\XCal$ is bounded by a diameter $D > 0$, i.e., $\|x - x'\| \leq D$ for all $x, x' \in \XCal$.
\end{enumerate}
A lower bound of $\Omega(\log(T))$ has been established in~\citet[Theorem~18]{Hazan-2014-Beyond} under an assumption of perfect gradient feedback. However, even with noisy gradient feedback, OGD with a particular step size can achieve the minimax-optimal regret bound of $\Theta(\log(T))$~\citep{Hazan-2007-Logarithmic}. In particular, we write OGD as $x^{t+1} \leftarrow \PCal_\XCal(x^t - \frac{1}{\beta(t+1)}\nabla f_t(x^t))$, which is equivalent to 
\begin{equation*}
\eta^{t+1} \leftarrow \beta(t+1), \qquad x^{t+1} \leftarrow \argmin_{x \in \XCal}\{(x - x^t)^\top \nabla f_t(x^t) + \tfrac{\eta^{t+1}}{2}\|x - x^t\|^2\}.
\end{equation*}
The value of $\beta(t+1)$ comes from the key inequality for $\beta$-strongly convex functions:
\begin{equation}\label{def:SC-inequality}
f(x') \geq f(x) + (x' - x)^\top\nabla f(x) + \tfrac{\beta}{2}\|x' - x\|^2. 
\end{equation}
Despite the elegance of OGD (and its optimal regret guarantee), however, it is inadequate since it requires knowledge of the problem parameter $\beta$. We can address this issue by a simple randomization strategy based on independent, identically distributed geometric random variables, $M^t \sim \textsf{Geometric}(p_0)$, for $p_0 = \frac{1}{\log(T+10)}$; i.e., $\PP(M^t = k) = (1-p_0)^{k-1}p_0$ for $k \in \{1, 2, \ldots\}$. See Algorithm~\ref{alg:AdaOGD-SA}. 
\begin{algorithm}[!t]
\caption{\textsf{AdaOGD}($x^1$, $T$)}\label{alg:AdaOGD-SA}
\begin{algorithmic}[1]
\STATE \textbf{Input:} initial point $x^1 \in \XCal$ and the total number of rounds $T$.  
\STATE \textbf{Initialization:} $p_0 = \frac{1}{\log(T+10)}$. 
\FOR{$t = 1, 2, \ldots, T$}
\STATE sample $M^t \sim \textsf{Geometric}(p_0)$. 
\STATE set $\eta^{t+1} \leftarrow \tfrac{t+1}{\sqrt{1 + \max\{M^1, \ldots, M^t\}}}$. 
\STATE update $x^{t+1} \leftarrow \argmin_{x \in \XCal}\{(x - x^t)^\top \xi^t + \tfrac{\eta^{t+1}}{2}\|x - x^t\|^2\}$. 
\ENDFOR
\end{algorithmic}
\end{algorithm} 
\begin{remark}[Compared with doubling trick] 
In the context of online learning, the doubling trick~\citep{Shalev-2012-Online} is commonly used to make OGD adaptive to \textbf{specific} unknown parameters under \textbf{convex} costs. In particular, for any algorithm that enjoys a regret bound of $O(\sqrt{T})$ but requires the knowledge of $T$ (such as OGD under convex costs), the doubling trick converts such an algorithm into an algorithm that does not require the knowledge of $T$. The idea is to divide the time into periods of increasing size and run the original algorithm on each period: for $m = 0, 1, 2, \ldots$, we run the original algorithm on the $2^m$ rounds $t=2^m, \ldots, 2^{m+1}-1$. The resulting algorithm still enjoys a regret bound of $O(\sqrt{T})$. However, it is nontrivial to apply the doubling trick to make OGD adaptive to the strongly convex parameter $\beta$ under \textbf{strongly convex} costs. In particular, a natural adaptation of the doubling trick under strongly convex costs is as follows: for $m = 0, 1, 2, \ldots$, we run OGD with $\eta^t = \frac{t}{2^m}$ on the rounds $t=2^m, \ldots, 2^{m+1}-1$. The analysis contains two parts: (i) for $0 \leq m \leq \lfloor \log_2(1/\beta)\rfloor$, the regret for each round is $O(2^m)$. This leads to a total constant regret of $O(1/\beta)$; (ii) for $\lceil \log_2(1/\beta)\rceil\leq m \leq \lceil \log_2(T)\rceil$, the regret for each round is $O(2^m)$. This unfortunately leads to a linear regret of $O(T)$. This argument of course does not eliminate the possibility that some variant of doubling could lead to doubly optimal learning algorithms for strongly monotone games. Our results do suggest that a fruitful way to search for such a procedure would be via some form of randomization. 
\end{remark}
\begin{remark}[Comparison with online learning with adaptive regret] 
The static regret defined in Eq.~\eqref{def:regret} can be unsuitable as the environments are non-stationary and the best action is drifting over time~\citep{Hazan-2016-Introduction}. To tackle this issue, there are two independently proposed notions: dynamic regret~\citep{Zinkevich-2003-Online, Besbes-2015-Non} and adaptive regret~\citep{Hazan-2007-Adaptive, Daniely-2015-Strongly}. Several recent studies have further investigated the adaptive/dynamic regret minimization by leveraging the curvature of loss functions, such as exponential concavity~\citep{Hazan-2007-Adaptive, Baby-2021-Optimal} and strong convexity~\citep{Mokhtari-2016-Online, Baby-2022-Optimal}. However, the adaptivity in regret is \textit{different} from that in learning rates and all of these algorithms require knowing the strong convexity parameter and the exp-concavity parameters. 
\end{remark}
\begin{remark}[Comparison with geometrization]
In the offline finite-sum optimization, the geometrization trick~\citep{Lei-2017-Less, Lei-2020-Adaptivity}---which sets the length of each epoch as a geometric random variable---has been used to make the stochastic variance-reduced gradient (SVRG) algorithm of~\citet{Johnson-2013-Accelerating} adaptive to both strong convexity parameter and target accuracy. While similar in spirit, our approach is not a straightforward application of the geometrization trick. The key difference between OGD and SVRG is their dependence on the strongly convexity parameter. The former needs that parameter to set the stepsize while the latter algorithm needs it to set the length of each epoch. The intuition behind the geometrization trick is to randomly set the length of each epoch using geometric random variables, allowing terms to telescope across the outer and inner loops; such telescoping does not happen in SVRG, a fact which leads to the loss of adaptivity for SVRG (see~\citet[Section~3.1]{Lei-2020-Adaptivity}). In contrast, our technique is designed to randomly set the stepsize using geometric random variables, implementing a trade-off for bounding the regret and last-iterate convergence rate. The telescoping from~\citep{Lei-2020-Adaptivity} occurs due to the specific nature of SVRG (or more broadly, the setting of offline finite-sum optimization), and does not appear in our analysis for \textsf{AdaOGD} and its multi-agent generalization. 
\end{remark}
\subsection{Regret Guarantees}
 We present our result on the regret minimization property in the following theorem. 
\begin{theorem}\label{Thm:AdaOGD-regret}
For an arbitrary fixed sequence of $\beta$-strongly convex functions $f_1, \ldots, f_T$, where each $f_t$ satisfies $\EE[\xi^t \mid x^t] = \nabla f_t(x^t)$ and $\EE[\|\xi^t\|^2 \mid x^t] \leq G^2$ for all $t \geq 1$, and $\|x - x'\| \leq D$ for all $x, x' \in \XCal$. If the agent employs Algorithm~\ref{alg:AdaOGD-SA}, we have 
\begin{equation*}
\EE[\text{Regret}(T)] \leq \tfrac{D^2}{2}(1 + e^{\frac{1}{\beta^2\log(T+10)}}) + \tfrac{G^2\log(T+1)}{2}\sqrt{1 + \log(T+10) + \log(T)\log(T+10)}. 
\end{equation*}
As a consequence, we have $\EE[\text{Regret}(T)] = O(\log^2(T))$. 
\end{theorem}
\begin{remark}
Theorem~\ref{Thm:AdaOGD-regret} demonstrates that Algorithm~\ref{alg:AdaOGD-SA} achieves a near-optimal regret since the upper bound matches the lower bound up to a log factor; indeed,~\citet{Hazan-2014-Beyond} proved the lower bound of $\Omega(\log(T))$ for this setting. Further, Algorithm~\ref{alg:AdaOGD-SA} dynamically adjusts $\eta^{t+1}$ without any prior knowledge of problem parameters, only utilizing the noisy feedback $\{\xi^t\}_{t \geq 1}$. 
\end{remark}
To prove Theorem~\ref{Thm:AdaOGD-regret}, we present a descent inequality for the iterates generated by Algorithm~\ref{alg:AdaOGD-SA}. 

\begin{lemma}\label{Lemma:AdaOGD}
For an arbitrary fixed sequence of $\beta$-strongly convex functions $f_1, \ldots, f_T$, where each $f_t$ satisfies $\EE[\xi^t \mid x^t] = \nabla f_t(x^t)$ and $\EE[\|\xi^t\|^2 \mid x^t] \leq G^2$ for all $t \geq 1$, and $\|x - x'\| \leq D$ for all $x, x' \in \XCal$. Letting the iterates $\{x^t\}_{t \geq 1}$ be generated by Algorithm~\ref{alg:AdaOGD-SA}, we have
\begin{equation*}
\sum_{t=1}^T \EE[f_t(x^t) - f_t(x)] \leq \tfrac{\eta^1}{2}\|x^1 - x\|^2 + \sum_{t=1}^T \EE\left[\left(\tfrac{\eta^{t+1} - \eta^t}{2} - \tfrac{\beta}{2}\right)\|x^t - x\|^2\right] + \tfrac{G^2}{2}\left(\sum_{t=1}^T \EE\left[\tfrac{1}{\eta^{t+1}}\right]\right), \textnormal{ for all } x \in \XCal. 
\end{equation*}
\end{lemma}
The proof of this lemma is deferred to Appendix~\ref{app:AdaOGD}. 

\paragraph{Proof of Theorem~\ref{Thm:AdaOGD-regret}.} Recall that $\XCal$ is convex and bounded with a diameter $D > 0$ and we have $\eta^{t+1} = \frac{t+1}{\sqrt{1 + \max\{M^1, \ldots, M^t\}}}$ in Algorithm~\ref{alg:AdaOGD-SA}, we have
\begin{equation*}
\tfrac{\eta^1}{2}\|x^1 - x\|^2 \leq \tfrac{D^2}{2}, \qquad \eta^{t+1} - \eta^t \leq \tfrac{1}{\sqrt{1 + \max\{M^1, \ldots, M^t\}}}. 
\end{equation*}
By Lemma~\ref{Lemma:AdaOGD}, we have
\begin{equation}\label{inequality:AdaOGD-regret-first}
\sum_{t=1}^T \EE[f_t(x^t) - f_t(x)] \leq \tfrac{D^2}{2} + \sum_{t=1}^T \EE\left[\left(\tfrac{1}{2\sqrt{1 + \max\{M^1, \ldots, M^t\}}} - \tfrac{\beta}{2}\right)\|x^t - x\|^2\right] + \tfrac{G^2}{2}\left(\sum_{t=1}^T \EE\left[\tfrac{1}{\eta^{t+1}}\right]\right).
\end{equation}
Further, we have 
\begin{equation}\label{inequality:AdaOGD-regret-second}
\sum_{t=1}^T \tfrac{1}{\eta^{t+1}} \leq \sqrt{1 + \max\{M^1, \ldots, M^T\}}\left(\sum_{t=1}^T \tfrac{1}{t+1}\right) \leq \sqrt{1 + \max\{M^1, \ldots, M^T\}}\log(T+1). 
\end{equation}
Plugging Eq.~\eqref{inequality:AdaOGD-regret-second} into Eq.~\eqref{inequality:AdaOGD-regret-first} yields that 
\begin{equation*}
\sum_{t=1}^T \EE[f_t(x^t) - f_t(x)] \leq \tfrac{D^2}{2} + \sum_{t=1}^T \EE\left[\left(\tfrac{1}{2\sqrt{1 + \max\{M^1, \ldots, M^t\}}} - \tfrac{\beta}{2}\right)\|x^t - x\|^2\right] + \tfrac{G^2\log(T+1)}{2}\EE\left[\sqrt{1 + \max\{M^1, \ldots, M^T\}}\right]. 
\end{equation*}
Since $\XCal$ is convex and bounded with a diameter $D > 0$, we have
\begin{equation*}
\sum_{t=1}^T \left(\tfrac{1}{2\sqrt{1 + \max\{M^1, \ldots, M^t\}}} - \tfrac{\beta}{2}\right)\|x^t - x\|^2 \leq \tfrac{D^2}{2}\left(\sum_{t = 1}^T \max\left\{0, \tfrac{1}{\sqrt{1 + \max\{M^1, \ldots, M^t\}}} - \beta\right\}\right). 
\end{equation*}
This implies that 
\begin{equation*}
\EE[\text{Regret}(T)] \leq \tfrac{D^2}{2} + \tfrac{D^2}{2}\underbrace{\EE\left[\sum_{t = 1}^T \max\left\{0, \tfrac{1}{\sqrt{1 + \max\{M^1, \ldots, M^t\}}} - \beta\right\}\right]}_{\textbf{I}} + \tfrac{G^2\log(T+1)}{2}\underbrace{\EE\left[\sqrt{1 + \max\{M^1, \ldots, M^T\}}\right]}_{\textbf{II}}. 
\end{equation*}
It remains to bound the terms $\textbf{I}$ and $\textbf{II}$ using Proposition~\ref{Prop:GRV} (cf. Appendix~\ref{app:RV}). Indeed, we have
\begin{eqnarray*}
\textbf{I} & = & \sum_{t = 1}^T \EE\left[\max\left\{0, \tfrac{1}{\sqrt{1 + \max\{M^1, \ldots, M^t\}}} - \beta\right\}\right] \ \leq \ \sum_{t = 1}^T \PP\left(\tfrac{1}{\sqrt{1 + \max\{M^1, \ldots, M^t\}}} - \beta \geq 0\right) \\
& = & \sum_{t = 1}^T \PP\left(\sqrt{1 + \max\{M^1, \ldots, M^t\}} \leq \tfrac{1}{\beta}\right) \ \leq \ \sum_{t = 1}^T \PP\left(\max\{M^1, \ldots, M^t\} \leq \tfrac{1}{\beta^2}\right). 
\end{eqnarray*}
Since $M^1, \ldots, M^t$ are i.i.d.\ geometric random variables with $p_0 = \frac{1}{\log(T+10)}$,  Proposition~\ref{Prop:GRV} implies that 
\begin{equation*}
\sum_{t = 1}^T \PP\left(\max\{M^1, \ldots, M^t\} \leq \tfrac{1}{\beta^2}\right) \leq e^{\frac{p_0}{\beta^2}} = e^{\frac{1}{\beta^2\log(T+10)}}. 
\end{equation*}
Putting these pieces together yields that $\textbf{I} \leq e^{\frac{1}{\beta^2\log(T+10)}}$. 

By using Jensen's inequality and the concavity of $g(x) = \sqrt{x}$, we have
\begin{equation*}
\textbf{II} \leq \sqrt{1 + \EE\left[\max\{M^1, \ldots, M^T\}\right]}. 
\end{equation*}
Using Proposition~\ref{Prop:GRV} and $p_0 = \frac{1}{\log(T+10)}$, we have
\begin{equation*}
\EE\left[\max\{M^1, \ldots, M^T\}\right] \leq \tfrac{1 + \log(T)}{p_0} = \log(T+10) + \log(T)\log(T+10). 
\end{equation*}
Putting these pieces together yields that $\textbf{II} \leq \sqrt{1 + \log(T+10) + \log(T)\log(T+10)}$. Therefore, we conclude that 
\begin{eqnarray*}
\EE[\text{Regret}(T)] & \leq & \tfrac{D^2}{2}(1 + e^{\frac{1}{\beta^2\log(T+10)}}) + \tfrac{G^2\log(T+1)}{2}\sqrt{1 + \log(T+10) + \log(T)\log(T+10)} \\
& = & \tfrac{D^2}{2}(1 + e^{\frac{1}{\beta^2\log(T+10)}}) + \tfrac{G^2\log(T+1)}{2}\sqrt{1 + \log(T+10) + \log(T)\log(T+10)}. 
\end{eqnarray*}
This completes the proof.

\subsection{Application: Feasible Learning for Newsvendors with Lost Sales}
The single-retailer version of the newsvendor problem is a well known model for perishable inventory control~\citep{Huh-2009-Nonparametric}.  The assumption in this model is that unsold inventory perishes at the end of each period. A retailer sells a product over a time horizon $T$ and then makes inventory-ordering decisions $x^t \in [0, \bar{x}]$ at the beginning of each period $t$ to maximize the profit. The unknown demand $D^t$ is random and only realized with a value $d^t$ after the retailer makes her decision. It is often assumed in the inventory control literature that the $D^t$ are independent, corresponding to a stationary
environment. Here, we do not need to make this assumption and we allow $D^t$ to be arbitrary.
Further, in the lost-sales setting, any unmet demand is lost and hence the retailer does \textit{not} observe $d^t$; instead, she only observes the sales quantity $\min\{x^t, d^t\}$. The retailer's cost functions are defined by 
\begin{equation}\label{Eq:RIG-cost}
f_t(x^t) = (p-c) \cdot \EE[\max\{0, D^t - x^t\}] + c \cdot \EE[\max\{0, x^t - D^t\}],
\end{equation}
where the unit purchase cost is $c > 0$ and the unit selling price is $p \geq c$. It is known that minimizing this cost is equivalent to maximizing the profit, $\EE[p \cdot \min\{x^t, D^t\} - c \cdot x^t]$, where:
\begin{eqnarray*}
\lefteqn{\EE[p \cdot \min\{x^t, D^t\} - c \cdot x^t]} \\
& = & \EE\left[p \cdot \left(D^t - \max\{0, D^t - x^t\}\right) - c \cdot \left(D^t - \max\{0, D^t - x^t\} + \max\{0, x^t - D^t\}\right)\right] \\
& = & (p-c) \cdot \EE[D^t] - (p-c) \cdot \EE[\max\{0, D^t - x^t\}] - c \cdot \EE[\max\{0, x^t - D^t\}] \\
& = & (p-c) \cdot \EE[D^t] - f_t(x^t). 
\end{eqnarray*}
Note that the first term $(p-c) \cdot \EE[D^t]$ is independent of $x^t$. Thus, the maximization of the profit $\EE[p \cdot \min\{x^t, D^t\} - c \cdot x^t]$ is equivalent to minimizing the cost $f_t(\cdot)$ in Eq.~\eqref{Eq:RIG-cost}. 

In this context,~\citet{Huh-2009-Nonparametric} have shown that the cost function $f_t(\cdot)$ is convex in general and $\alpha p$-strongly convex if the demand is a random variable with a continuous density function $q$ such that $\inf_{d \in [0, \bar{x}]} q(d) \geq \alpha > 0$. Moreover, the retailer only observes the sales quantity $\min\{x^t, d^t\}$ where $d^t$ is a realization of $D^t$. Thus, the (noisy) bandit feedback is not observable. However, a noisy gradient feedback signal can be obtained:
\begin{equation*}
\xi^t = \begin{cases}
c, & \text{if } x^t \geq \min\{x^t, d^t\}, \\
c - p, & \text{otherwise},
\end{cases}
\end{equation*}
which is an unbiased and bounded gradient estimator. In this setting, the parameter $\alpha > 0$ is not available since the distribution of the demand is unknown. However, the retailer can apply our \textsf{AdaOGD} algorithm (cf. Algorithm~\ref{alg:AdaOGD-SA}) and obtain a near-optimal regret of $O(\log^2(T))$. 
\begin{algorithm}[!t]
\caption{\textsf{Newsvendor-AdaOGD}($x^1$, $T$)}\label{alg:AdaOGD-Newsvendor}
\begin{algorithmic}[1]
\STATE \textbf{Input:} initial point $x^1 \in \XCal$ and the total number of rounds $T$. 
\STATE \textbf{Initialization:} $p_0 = \frac{1}{\log(T+10)}$.  
\FOR{$t = 1, 2, \ldots, T$}
\STATE sample $M^t \sim \textsf{Geometric}(p_0)$. 
\STATE set $\eta^{t+1} \leftarrow \tfrac{t+1}{\sqrt{1 + \max\{M^1, \ldots, M^t\}}}$. 
\STATE observe the sales quantity of $\min\{x^t, d^t\}$. 
\STATE update $x^{t+1} \leftarrow \left\{\begin{array}{ll}
\argmin_{x \in [0, \bar{x}]}\{(x - x^t)c + \tfrac{\eta^{t+1}}{2}(x - x^t)^2\}, & \text{if } x^t \geq \min\{x^t, d^t\}, \\ \argmin_{x \in [0, \bar{x}]}\{(x - x^t)(c - p) + \tfrac{\eta^{t+1}}{2}(x - x^t)^2\}, & \text{otherwise}. 
\end{array} \right. $
\ENDFOR
\end{algorithmic}
\end{algorithm} 

We specialize Algorithm~\ref{alg:AdaOGD-SA} to the newsvendor problem in Algorithm~\ref{alg:AdaOGD-Newsvendor} and we present the corresponding result on the regret minimization property in the following corollary. 
\begin{corollary}\label{Thm:AdaOGD-Newsvendor-regret}
In the single-retailer newsvendor problem, the retailer's cost functions are defined by Eq.~\eqref{Eq:RIG-cost} where the unit purchase cost is $c > 0$ and the unit selling price is $p \geq c$. Also, the demand is a random variable with a continuous density function $q$ such that $\inf_{d \in [0, \bar{x}]} q(d) \geq \alpha > 0$. If the agent employs Algorithm~\ref{alg:AdaOGD-Newsvendor}, we have 
\begin{equation*}
\EE[\text{Regret}(T)] \leq \tfrac{\bar{x}^2}{2}(1 + e^{\frac{1}{(\alpha p)^2\log(T+10)}}) + \tfrac{p^2\log(T+1)}{2}\sqrt{1 + \log(T+10) + \log(T)\log(T+10)}. 
\end{equation*}
As a consequence, we have $\EE[\text{Regret}(T)] = O(\log^2(T))$. 
\end{corollary}
\begin{proof}
Recall that $f_t(x^t) = (p-c) \cdot \EE[\max\{0, D^t - x^t\}] + c \cdot \EE[\max\{0, x^t - D^t\}]$ and the noisy gradient feedback is given by
\begin{equation*}
\xi^t = \begin{cases}
c, & \text{if } x^t \geq \min\{x^t, d^t\}, \\
c - p, & \text{otherwise}. 
\end{cases}
\end{equation*}
So we have $\EE[\xi^t \mid x^t] = \nabla f_t(x^t)$ and $\EE[\|\xi^t\|^2 \mid x^t] \leq p^2$ for all $t \geq 1$. Since the demand has a continuous density function $q$ such that $\inf_{d \in [0, \bar{x}]} q(d) \geq \alpha > 0$, we have $f_t(\cdot)$ is $\alpha p$-strongly convex. In addition, $\XCal = [0, \bar{x}]$ implies that $D = \bar{x}$. Thus, Theorem~\ref{Thm:AdaOGD-regret} can be applied and implies the desired result. 
\end{proof}

\section{Feasible Multi-Agent Online Learning in Strongly Monotone Games}\label{sec:MA}
In this section, we consider feasible multi-agent learning in monotone games. Our main result is that if each agent applies \textsf{AdaOGD} in a strongly monotone game (the multi-agent generalization of strongly convex costs), the joint action of all agents converges in a last-iterate sense to the unique Nash equilibrium at a near-optimal rate. In contrast to previous work, our results provide the first feasible no-regret learning algorithm that is doubly optimal; in particular, in addition to not requiring any prior knowledge of the problem parameters, one does not need to adjust the step-size schedule based on whether an agent is in the single-agent setting or the multi-agent setting. It is important to note that these are two different merits, and our algorithm enjoys both of them.

\subsection{Basic Definitions and Notations}
We first review the definition of continuous games and consider a class of monotone games. In particular, we focus on continuous games played by a set of agents, $\NCal = \{1, 2, \ldots, N\}$. Each agent selects an \textit{action} $x_i$ from a convex and bounded $\XCal_i \subseteq \br^{d_i}$. The incurred cost for each agent is determined by the joint action $x = (x_i; x_{-i}) = (x_1, x_2, \ldots, x_N)$. We let $\|\cdot\|$ denote the Euclidean norm (Other norms can also be accommodated here and different $\XCal_i$'s can have different norms). 
\begin{definition}\label{def:game}
A continuous game is a tuple $\GCal=\{\NCal, \XCal=\Pi_{i=1}^N \XCal_i, \{u_i\}_{i=1}^N\}$, where $\NCal$ is a set of $N$ agents, $\XCal_i \subseteq \br^{d_i}$ is the $i^{\textnormal{th}}$ agent's action set that is both convex and bounded, and $u_i: \XCal \rightarrow \br$ is the $i^{\textnormal{th}}$ agent's cost function satisfying: (i) $u_i(x_i; x_{-i})$ is continuous in $x$ and continuously differentiable in $x_i$; (ii) $v_i(x)=\nabla_{x_i} u_i(x_i; x_{-i})$ is continuous in $x$. For simplicity, we denote $v(\cdot) = (v_1(\cdot), v_2(\cdot), \ldots, v_N(\cdot))$ as the joint profile of all agents' individual gradients.  
\end{definition}
We work with an analogous model of gradient feedback:
\begin{enumerate}
\item At each round $t$, an unbiased and bounded gradient is observed as a feedback signal. In particular, the observed noisy gradient $\xi^t$ satisfies $\EE[\xi^t \mid x^t] = v(x^t)$ and $\EE[\|\xi^t\|^2 \mid x^t] \leq G^2$ for all $t \geq 1$.
\item The action set $\XCal$ is bounded by a diameter $D > 0$, i.e., $\|x - x'\| \leq D$ for all $x, x' \in \XCal$.
\end{enumerate}
The study of monotone games dates to~\citet{Rosen-1965-Existence} who considered a class of games that satisfy the \textit{diagonal strict concavity} (DSC) condition.\footnote{This condition is equivalent to the notion of strict monotonicity in convex analysis~\citep{Bauschke-2011-Convex}; see~\citet{Facchinei-2007-Finite} for further discussion.} Further contributions appeared in~\citet{Sandholm-2015-Population} and~\citet{Sorin-2016-Finite}, where games that satisfy DSC are called ``contractive" and ``dissipative." In this context, a game $\GCal$ is \emph{monotone} if $(x'-x)^\top(v(x') - v(x)) \geq 0$ for any $x, x' \in \XCal$. Intuitively, the notion of monotonicity generalize the notion of convexity; indeed, the gradient operator of a convex function $f$ satisfies that $(x'-x)^\top(\nabla f(x') - \nabla f(x)) \geq 0$ for any $x, x' \in \XCal$. 

We now define the class of strongly monotone games:
\begin{definition}\label{def:SM-game}
A continuous game $\GCal$ is said to be \emph{$\beta$-strongly monotone} if we have $\langle x' - x, v(x') - v(x)\rangle \geq \beta\|x' - x\|^2$ for all $x, x' \in \XCal$. 
\end{definition}
A standard solution concept for non-cooperative games is the \textit{Nash equilibrium} (NE), where no agent has an incentive to deviate from her strategy~\citep{Osborne-1994-Course}. For the continuous games considered in this paper, we are interested in pure-strategy Nash equilibria since the randomness introduced by mixed strategies is unnecessary when each action lives in a continuum. 
\begin{definition}
An action profile $x^\star \in \XCal$ is called a \emph{Nash equilibrium} of $\GCal$ if it is resilient to unilateral deviations; that is, $u_i(x_i^\star; x_{-i}^\star) \leq u_i(x_i; x_{-i}^\star)$ for all $x_i \in \XCal_i$ and $i \in \NCal$. 
\end{definition}
\citet{Debreu-1952-Social} proved that any continuous game admits at least one Nash equilibrium if all action sets are convex and bounded, and all cost functions are individually convex (i.e., $u_i(x_i; x_{-i})$ is convex in $x_i$ for a fixed $x_{-i}$). Moreover, there is a variational characterization that forms the basis of equilibrium computation under an individual convexity condition~\citep{Facchinei-2007-Finite}. We summarize this characterization in the following proposition. 
\begin{proposition}\label{Prop:VC-games}
If all cost functions are in a continuous game $\GCal$ are individually convex, the joint action $x^\star \in \XCal$ is a Nash equilibrium if and only if $(x - x^\star)^\top v(x^\star) \geq 0$ for all $x \in \XCal$.
\end{proposition}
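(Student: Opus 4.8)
The plan is to establish the two directions of the equivalence separately, in each case exploiting two structural facts: that the joint action set factorizes as $\XCal = \Pi_{i=1}^N \XCal_i$, and that each $u_i(\cdot; x_{-i})$ is convex and continuously differentiable in its own argument (Definition~\ref{def:game} together with individual convexity).

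For the ``only if'' direction, suppose $x^\star$ is a Nash equilibrium. Fixing an agent $i$, the unilateral-deviation inequality says precisely that $x_i^\star$ minimizes the convex function $x_i \mapsto u_i(x_i; x_{-i}^\star)$ over the convex set $\XCal_i$. I would invoke the standard first-order optimality condition for minimizing a differentiable convex function over a convex set, which yields $v_i(x^\star)^\top(x_i - x_i^\star) \geq 0$ for every $x_i \in \XCal_i$. Summing these $N$ inequalities and using the product structure of $\XCal$---so that the coordinate blocks of an arbitrary $x \in \XCal$ may be chosen independently---reassembles the aggregate inequality $(x - x^\star)^\top v(x^\star) = \sum_{i=1}^N v_i(x^\star)^\top(x_i - x_i^\star) \geq 0$ for all $x \in \XCal$.

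For the ``if'' direction, I would run the argument in reverse, but the key move is to test the variational inequality only against single-agent deviations. Given any agent $i$ and any $x_i \in \XCal_i$, the point $x = (x_i; x_{-i}^\star)$ lies in $\XCal$ by the product structure and differs from $x^\star$ only in the $i$-th block, so all cross terms vanish and the variational inequality collapses to $v_i(x^\star)^\top(x_i - x_i^\star) \geq 0$. Feeding this into the gradient inequality for the convex function $u_i(\cdot; x_{-i}^\star)$ gives $u_i(x_i; x_{-i}^\star) \geq u_i(x_i^\star; x_{-i}^\star) + v_i(x^\star)^\top(x_i - x_i^\star) \geq u_i(x_i^\star; x_{-i}^\star)$, which is exactly the Nash condition for agent $i$; since $i$ was arbitrary, $x^\star$ is a Nash equilibrium.

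The argument is essentially routine, so there is no serious analytical obstacle; the only point requiring care is the bookkeeping around the product structure of $\XCal$. Specifically, one must verify that summing the per-agent first-order conditions genuinely recovers the full variational inequality (rather than a weaker statement), which relies on the fact that in a product set every coordinate block can be varied freely and independently, and conversely that testing against single-block deviations suffices to recover each agent's optimality condition. No smoothness beyond the differentiability of $u_i$ in $x_i$ assumed in Definition~\ref{def:game} is needed, and boundedness of $\XCal$ plays no role in this particular equivalence.
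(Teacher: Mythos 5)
Your proof is correct and is essentially the canonical argument: the paper itself does not prove Proposition~\ref{Prop:VC-games} but defers to \citet{Facchinei-2007-Finite}, and the proof there is exactly your two-direction decomposition---per-agent first-order optimality conditions summed over the product structure $\XCal = \Pi_{i=1}^N \XCal_i$ for necessity, and testing the variational inequality against single-block deviations $(x_i; x_{-i}^\star)$ combined with the gradient inequality for the convex function $u_i(\cdot; x_{-i}^\star)$ for sufficiency. Both directions are handled correctly, including the observation that convexity is only genuinely needed in the ``if'' direction.
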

The notion of strong monotonicity arises in various application domains. Examples include strongly-convex-strongly-concave zero-sum games, atomic splittable congestion games in networks with parallel links~\citep{Orda-1993-Competitive, Sorin-2016-Finite, Mertikopoulos-2019-Learning}, wireless network optimization~\citep{Weeraddana-2012-Weighted, Tan-2014-Wireless, Zhou-2021-Robust} and classical online decision-making problems~\citep{Cesa-2006-Prediction}.

Strongly monotone games satisfy the individual convexity condition and hence the existence of at least one Nash equilibrium is ensured. Moreover, every strongly monotone game admits a unique Nash equilibrium~\citep{Zhou-2021-Robust}. Thus, one appealing feature of strongly monotone games is that  finite-time convergence can be derived in terms of $\|\hat{x} - x^\star\|^2$ where $x^\star \in \XCal$ is a unique Nash equilibrium. Despite some recent progress on last-iterate convergence rates for non-strongly monotone games~\citep{Lin-2020-Finite, Golowich-2020-Tight, Cai-2022-Finite}, last-iterate convergence rates in terms of $\|\hat{x} - x^\star\|^2$ are only available for strongly monotone games~\citep{Bravo-2018-Bandit, Zhou-2021-Robust, Lin-2021-Optimal}. An important gap in the literature is that there currently do not exist \textit{doubly optimal} and \textit{feasible} learning algorithms for strongly monotone games.

\subsection{Algorithmic Scheme}
We review the multi-agent OGD method that is a generalization of single-agent OGD. Letting $x_i^1 \in \XCal_i$ for all $i \in \NCal$, the multi-agent version of OGD (cf.\ Definition~\ref{def:SM-game}) performs the following step at each round: 
\begin{equation}\label{def:MA-OGD}
\eta_i^{t+1} \leftarrow \beta(t+1), \qquad x_i^{t+1} \leftarrow \argmin_{x_i \in \XCal_i}\{(x_i - x_i^t)^\top \xi_i^t + \tfrac{\eta_i^{t+1}}{2}\|x_i - x_i^t\|^2\}. 
\end{equation}
In the following theorem, we summarize the results from~\citet{Zhou-2021-Robust} on the optimal last-iterate convergence rate of multi-agent OGD in Eq.~\eqref{def:MA-OGD} using a squared Euclidean distance function. 
\begin{theorem}\label{Thm:OGD-rate}
Suppose that a continuous game $\GCal$ is $\beta$-strongly monotone and let $G, D > 0$ be problem parameters satisfying $\EE[\xi^t \mid x^t] = v(x^t)$ and let $\EE[\|\xi^t\|^2 \mid x^t] \leq G^2$ for all $t \geq 1$, and $\|x - x'\| \leq D$ for all $x, x' \in \XCal$. If all agents employ multi-agent OGD in Eq.~\eqref{def:MA-OGD}, we have
\begin{equation*}
\EE[\|x^T - x^\star\|^2] \leq \tfrac{4G^2}{\beta^2 T},
\end{equation*}
where $x^\star \in \XCal$ denotes the unique Nash equilibrium.
As a consequence, we have $\EE[\|x^T - x^\star\|^2] = O(\frac{1}{T})$. 
\end{theorem}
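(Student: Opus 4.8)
The plan is to track the squared distance to equilibrium $a_t := \EE[\|x^t - x^\star\|^2]$ and show that it obeys a scalar one-step recursion whose solution decays like $1/t$. First I would observe that, since $\XCal = \prod_{i} \XCal_i$, the per-agent prox update in Eq.~\eqref{def:MA-OGD} is exactly the joint projected stochastic-gradient step $x^{t+1} = \proj_{\XCal}(x^t - \gamma_{t+1}\xi^t)$ with learning rate $\gamma_{t+1} = \tfrac{1}{\beta(t+1)}$, so the single-operator variational-inequality machinery applies directly to the stacked vector $x^t$. Writing the first-order optimality condition of the minimization at $x^{t+1}$ and testing it against $x=x^\star$, namely $(\xi^t + \eta^{t+1}(x^{t+1}-x^t))^\top(x^\star - x^{t+1}) \ge 0$, and then invoking the three-point identity $(x^{t+1}-x^t)^\top(x^\star - x^{t+1}) = \tfrac12(\|x^t - x^\star\|^2 - \|x^{t+1}-x^\star\|^2 - \|x^{t+1}-x^t\|^2)$, I obtain
\[
\|x^{t+1}-x^\star\|^2 \le \|x^t - x^\star\|^2 - \|x^{t+1}-x^t\|^2 - 2\gamma_{t+1}(\xi^t)^\top(x^{t+1}-x^\star).
\]
Splitting the last inner product as $(\xi^t)^\top(x^t-x^\star) + (\xi^t)^\top(x^{t+1}-x^t)$ and applying Young's inequality to the cross term absorbs the $-\|x^{t+1}-x^t\|^2$ term, leaving $\|x^{t+1}-x^\star\|^2 \le \|x^t-x^\star\|^2 - 2\gamma_{t+1}(\xi^t)^\top(x^t-x^\star) + \gamma_{t+1}^2\|\xi^t\|^2$.

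Second, I would take the conditional expectation given the history through round $t$. Unbiasedness $\EE[\xi^t\mid x^t] = v(x^t)$ replaces $\xi^t$ by $v(x^t)$ in the linear term, and $\EE[\|\xi^t\|^2\mid x^t]\le G^2$ controls the noise term. The crucial step is to lower-bound $v(x^t)^\top(x^t-x^\star)$: decomposing it as $(v(x^t)-v(x^\star))^\top(x^t-x^\star) + v(x^\star)^\top(x^t-x^\star)$, the first summand is at least $\beta\|x^t-x^\star\|^2$ by $\beta$-strong monotonicity (Definition~\ref{def:SM-game}), while the second is nonnegative by the variational characterization of the Nash equilibrium in Proposition~\ref{Prop:VC-games} (applied with $x=x^t$). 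Hence $v(x^t)^\top(x^t-x^\star)\ge \beta\|x^t-x^\star\|^2$, and substituting $\gamma_{t+1} = \tfrac{1}{\beta(t+1)}$ gives the recursion
\[
a_{t+1} \le \Big(1-\tfrac{2}{t+1}\Big)a_t + \tfrac{G^2}{\beta^2(t+1)^2}.
\]

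Finally, I would solve this recursion by induction, proving $a_t \le \tfrac{4G^2}{\beta^2 t}$ for all $t$. The inductive step reduces, after dividing through, to checking $\tfrac{4(t-1)}{t}+\tfrac{1}{t+1}\le 4$, i.e. $\tfrac{1}{t+1}\le\tfrac{4}{t}$, which holds for every $t\ge 1$. The step I expect to be the main obstacle is the base case $a_1 = \|x^1-x^\star\|^2 \le \tfrac{4G^2}{\beta^2}$, which does not follow from the diameter bound alone and instead hinges on the nonobvious estimate $D\le \tfrac{2G}{\beta}$. I would establish this by noting that strong monotonicity together with Cauchy--Schwarz yields $\|v(x)-v(x')\|\ge \beta\|x-x'\|$, while the second-moment bound (via Jensen) yields $\|v(x)\|\le G$ for every $x\in\XCal$; combining these gives $\beta\|x-x'\|\le 2G$ for all $x,x'\in\XCal$, hence $\beta D\le 2G$ and $a_1\le D^2\le \tfrac{4G^2}{\beta^2}$, closing the induction and delivering the claimed bound at $t=T$.
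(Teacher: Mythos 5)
Your proof is correct, but note that the paper never proves Theorem~\ref{Thm:OGD-rate} itself --- it is imported from \citet{Zhou-2021-Robust} --- so the natural in-paper comparison is with the machinery used for the adaptive analogue (Lemma~\ref{Lemma:AdaOGD-MA} and Theorem~\ref{Thm:AdaOGD-rate}). Your ingredients coincide with that machinery: prox first-order optimality tested at $x^\star$, a three-point/Young argument to reach $\|x^{t+1}-x^\star\|^2 \le \|x^t-x^\star\|^2 - 2\gamma_{t+1}(\xi^t)^\top(x^t-x^\star)+\gamma_{t+1}^2\|\xi^t\|^2$, and the decomposition of $v(x^t)^\top(x^t-x^\star)$ through $\beta$-strong monotonicity plus the variational characterization of the equilibrium (Proposition~\ref{Prop:VC-games}). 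Where you genuinely diverge is the last step. The paper's style of argument (Lemma~\ref{Lemma:AdaOGD-MA}) multiplies by the step sizes and telescopes the weighted distances; specialized to $\eta^t=\beta t$ that route only yields a bound of order $(D^2+G^2\beta^{-2}\log T)/T$, i.e.\ an extra logarithmic factor. Your choice to instead solve the scalar recursion $a_{t+1}\le(1-\tfrac{2}{t+1})a_t+\tfrac{G^2}{\beta^2(t+1)^2}$ by induction is exactly what delivers the clean $1/T$ rate with constant $4G^2/\beta^2$, and it is essentially the argument of the cited source.

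Two refinements on your base case, which you correctly flagged as the delicate point. First, your bound $D\le 2G/\beta$ uses $\|v(x)\|\le G$ for \emph{every} $x\in\XCal$ (in particular at $x^\star$), whereas the theorem's hypotheses only bound the noisy feedback along the trajectory; this is a benign reading of the oracle model, but to stay strictly within the stated assumptions it suffices to use $\|v(x^1)\|\le G$ (Jensen at $t=1$) together with the equilibrium inequality: $\beta\|x^1-x^\star\|^2 \le (x^1-x^\star)^\top\bigl(v(x^1)-v(x^\star)\bigr) \le (x^1-x^\star)^\top v(x^1) \le G\,\|x^1-x^\star\|$, which gives the stronger $a_1\le G^2/\beta^2$. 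Second, for $T\ge 2$ the base case is actually moot: the coefficient $1-\tfrac{2}{t+1}$ vanishes at $t=1$, so $a_2\le \tfrac{G^2}{4\beta^2}$ holds regardless of $a_1$ and the induction can be started at $t=2$; only the degenerate claim at $T=1$ requires any bound on $\|x^1-x^\star\|$ at all.
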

\begin{remark}
Theorem~\ref{Thm:OGD-rate} demonstrates that multi-agent OGD can achieve a near-optimal convergence rate in strongly monotone games; indeed, the convergence rate of multi-agent OGD matches the lower bound of $\Omega(\frac{1}{T})$ proved in~\citet{Nemirovski-1983-Problem} for strongly convex optimization. 
\end{remark}
A drawback of multi-agent OGD is that it requires  knowledge of the strong monotonicity parameters in order to update $\eta^{t+1}$ and is thus not feasible in practice. We are not aware of any research that addresses this key issue. This is possibly because existing adaptive techniques are specialized to \textit{upper curvature information}, e.g., the Lipschitz constant of function values or gradients~\citep{Duchi-2011-Adaptive, Kingma-2015-Adam, Mukkamala-2017-Variants, Levy-2017-Online, Levy-2018-Online, Bach-2019-Universal, Antonakopoulos-2021-Adaptive, Hsieh-2021-Adaptive}, and are not suitable for estimating \textit{lower curvature information} such as that encoded by the strong monotonicity parameter. The goal of this section is to extend Algorithm~\ref{alg:AdaOGD-SA} to multi-agent learning in games, showing that our adaptive variant of OGD is both doubly optimal and feasible.   

We again employ a randomization strategy such that the resulting algorithm is adaptive to the strong monotonicity parameter and other problem parameters. We again choose independently identical distributed geometric random variables, i.e., $M_i^t \sim \textsf{Geometric}(p_0)$, for $p_0 = \frac{1}{\log(T + 10)}$.\footnote{We can define agent-specific probabilities $p_0^i \in (0, 1)$ and prove the same finite-time convergence guarantee. For simplicity, we use an agent-independent probability $p_0 = \frac{1}{\log(T + 10)}$.} Note that all the other updates are analogous to those of Algorithm~\ref{alg:AdaOGD-SA} and the resulting algorithm is decentralized. This defines our adaptive multi-agent variant of OGD, as detailed in Algorithm~\ref{alg:AdaOGD-MA}. 
\begin{algorithm}[!t]
\caption{\textsf{MA-AdaOGD}($x_1^1, x_2^1, \ldots, x_N^1, T$)}\label{alg:AdaOGD-MA}
\begin{algorithmic}[1]
\STATE \textbf{Input:} initial points $x_i^1 \in \XCal_i$ for all $i \in \NCal$ and the total number of rounds $T$. 
\STATE \textbf{Initialization:} $p_0 = \frac{1}{\log(T+10)}$. 
\FOR{$t = 1, 2, \ldots, T$}
\FOR{$i = 1, 2, \ldots, N$}
\STATE sample $M_i^t \sim \textsf{Geometric}(p_0)$. 
\STATE set $\eta_i^{t+1} \leftarrow \tfrac{t + 1}{\sqrt{1 + \max\{M_i^1, \ldots, M_i^t\}}}$. 
\STATE update $x_i^{t+1} \leftarrow \argmin_{x_i \in \XCal_i}\{(x_i - x_i^t)^\top \xi_i^t + \tfrac{\eta_i^{t+1}}{2}\|x_i - x_i^t\|^2\}$. 
\ENDFOR
\ENDFOR
\end{algorithmic}
\end{algorithm}

\subsection{Finite-Time Last-Iterate Convergence Guarantee}
In the following theorem, we summarize our main results on the last-iterate convergence rate of Algorithm~\ref{alg:AdaOGD-MA} using a distance function based on squared Euclidean norm. 
\begin{theorem}\label{Thm:AdaOGD-rate}
Suppose that a continuous game $\GCal$ is $\beta$-strongly monotone and let $G, D > 0$ be problem parameters satisfying $\EE[\xi^t \mid x^t] = v(x^t)$ and let $\EE[\|\xi^t\|^2 \mid x^t] \leq G^2$ for all $t \geq 1$, and $\|x - x'\| \leq D$ for all $x, x' \in \XCal$. If all agents employ Algorithm~\ref{alg:AdaOGD-MA}, we have 
\begin{eqnarray*}
\EE[\|x^T - x^\star\|^2] & \leq & \tfrac{D^2}{T}(1 + e^{\frac{1}{4\beta^2\log(T+10)}})\sqrt{1 + \log(T+10) + \log(NT)\log(T+10)} \\
& & + \tfrac{G^2}{T}\log(T + 1)(1 + \log(T+10) + \log(NT)\log(T+10)),
\end{eqnarray*}
where $x^\star \in \XCal$ is the unique Nash equilibrium.
As a consequence, we have $\EE[\|x^T - x^\star\|^2] = O(\frac{\log^3(T)}{T})$. 
\end{theorem}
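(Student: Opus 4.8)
The plan is to run a one-step contraction driven by global strong monotonicity, while packaging the $N$ distinct random step sizes into a single potential with benign weight drift. I write the update as $x_i^{t+1}=\proj_{\XCal_i}(x_i^t-(\eta_i^{t+1})^{-1}\xi_i^t)$ and, by non-expansiveness of the projection, get the pathwise inequality $(x_i^t-x_i^\star)^\top\xi_i^t\le\frac{\eta_i^{t+1}}{2}(\|x_i^t-x_i^\star\|^2-\|x_i^{t+1}-x_i^\star\|^2)+\frac{1}{2\eta_i^{t+1}}\|\xi_i^t\|^2$. Let $\mathcal F_t$ be generated by $\{M_i^s\}_{s\le t}$ and $\{\xi^s\}_{s<t}$, so that $x^t$ and all $\eta_i^{t+1}$ are $\mathcal F_t$-measurable while $\xi^t$ is fresh. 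Summing over $i$, taking $\EE[\cdot\mid\mathcal F_t]$ (so $\xi^t\mapsto v(x^t)$), and using $\sum_i(x_i^t-x_i^\star)^\top v_i(x^t)=(x^t-x^\star)^\top v(x^t)\ge\beta\|x^t-x^\star\|^2$ (strong monotonicity plus the variational characterization of the Nash equilibrium, Proposition~\ref{Prop:VC-games}) together with $\sum_i\EE[\|\xi_i^t\|^2\mid\mathcal F_t]\le G^2$, I obtain $\sum_i\frac{\eta_i^{t+1}}{2}\EE[\|x_i^{t+1}-x_i^\star\|^2\mid\mathcal F_t]\le\sum_i\frac{\eta_i^{t+1}}{2}\|x_i^t-x_i^\star\|^2-\beta\|x^t-x^\star\|^2+\frac{G^2}{2\eta_{\min}^{t+1}}$.

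The difficulty is that the monotonicity term couples all agents, whereas the weights $\eta_i^{t+1}=(t+1)/a_i^t$, with $a_i^t=\sqrt{1+\max_{s\le t}M_i^s}$, differ across $i$; replacing each weight by $\eta_{\min}^{t+1}$ or $\eta_{\max}^{t+1}$ produces a per-step expansion factor $a_{\max}^t/a_{\min}^t>1$ whose product over $t$ blows up. I therefore use the potential $\tilde\Phi^t:=\sum_i\|x_i^t-x_i^\star\|^2/a_i^t$. The decisive structural fact is that the running maximum is non-decreasing, so $a_i^{t+1}\ge a_i^t$ yields the pathwise drift bound $\eta_i^{t+2}\le\frac{t+2}{t+1}\eta_i^{t+1}$ with no future randomness. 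Feeding this, the bound $\sum_i\frac{\eta_i^{t+1}}{2}\|x_i^t-x_i^\star\|^2\le\frac{t+1}{2a_{\min}^t}\|x^t-x^\star\|^2$, and $1/\eta_{\min}^{t+1}=a_{\max}^t/(t+1)$ into the previous inequality collapses everything to the scalar recursion
\[
\EE[\tilde\Phi^{t+1}\mid\mathcal F_t]\le\Big(1-\tfrac{2\beta a_{\min}^t}{t+1}\Big)\tilde\Phi^t+\tfrac{G^2 a_{\max}^t}{(t+1)^2},
\]
with $a_{\min}^t=\min_i a_i^t$ and $a_{\max}^t=\max_i a_i^t$.

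Next I control the random coefficients by the geometric tails. From $\PP(M\ge k)=(1-p_0)^{k-1}$ with $p_0=1/\log(T+10)$, a union bound over the $NT$ variables gives $a_{\max}^t\le\bar A:=\sqrt{1+\log(T+10)\log(NT/\delta)}$ for all $t$ with probability $\ge1-\delta$ (the source of the $\log(NT)\log(T+10)$ factor), and a lower-tail estimate gives, on the same event, $a_{\min}^t\gtrsim\sqrt{\log(t)\log(T+10)}$ once $t\ge t_0$ for a $\beta$-dependent $t_0$, so that the effective exponent $2\beta a_{\min}^t$ exceeds $1$ for all $t\ge t_0$ no matter how small $\beta$ is — this is exactly what yields a $1/T$-type decay uniformly in $\beta$. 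Unrolling the recursion on the good event, the contraction product tames the noise sum $\sum_s\frac{G^2 a_{\max}^s}{(s+1)^2}\prod_{t>s}(\cdots)$ down to $O(G^2\bar A^2\log(T)/T)$; the early rounds $t<t_0$ are absorbed via $\tilde\Phi^t\le\|x^t-x^\star\|^2\le D^2$; and converting back through $\|x^T-x^\star\|^2\le a_{\max}^T\tilde\Phi^T\le\bar A\tilde\Phi^T$ (taking $\delta\asymp1/T$ for the complement) produces the two stated terms: the $D^2$ term carrying the scale $\bar A=\sqrt{1+\log(T+10)+\log(NT)\log(T+10)}$ and the near-threshold factor $e^{1/(4\beta^2\log(T+10))}$, and the $G^2$ term carrying the extra $\log(T+1)$ from the harmonic noise sum times $\bar A^2$.

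The step I expect to be hardest is the decoupling in the second paragraph — reconciling one global monotonicity inequality with $N$ distinct, random, and trajectory-correlated step sizes. The potential $\tilde\Phi^t=\sum_i\|x_i^t-x_i^\star\|^2/a_i^t$ paired with the pathwise drift bound $\eta_i^{t+2}\le\frac{t+2}{t+1}\eta_i^{t+1}$ is precisely what defeats the otherwise fatal per-step expansion $a_{\max}^t/a_{\min}^t$; the residual work is quantitative, namely pushing the lower bound on $a_{\min}^t$ far enough that $2\beta a_{\min}^t>1$ for small $\beta$ and carrying the two-sided concentration of the geometric maxima through the unrolled recursion to land exactly at the $O(\log^3 T/T)$ rate.
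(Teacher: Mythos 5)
Your proposal is correct in its essential strategy but follows a genuinely different route from the paper's proof. The paper (Lemma~\ref{Lemma:AdaOGD-MA}) keeps the per-agent weights $\eta_i^t$ and telescopes the weighted sum $\sum_{i}\eta_i^t\|x_i^t-x_i^\star\|^2$ directly: the cross-agent coupling is handled by the elementary bound $\sum_{i}(\eta_i^{t+1}-\eta_i^t)\|x_i^t-x_i^\star\|^2 - 2\beta\|x^t-x^\star\|^2 \le D^2\max\{0,\max_{1\le i\le N}\{\eta_i^{t+1}-\eta_i^t\}-2\beta\}$, so the non-contracting rounds enter \emph{additively}, and their total expected contribution is controlled by the tail-sum bound $\sum_{t}\PP(\max\{M_j^1,\ldots,M_j^t\}\le \tfrac{1}{4\beta^2}-1)\le e^{p_0/(4\beta^2)}$ of Proposition~\ref{Prop:GRV}, together with an exchangeability argument (the argmax index $i^t$ is uniformly distributed and the other agents' maxima are independent of agent $j$'s). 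In particular, the multiplicative blow-up $a_{\max}^t/a_{\min}^t$ you feared never arises on the paper's route either; your normalized potential $\tilde\Phi^t=\sum_i\|x_i^t-x_i^\star\|^2/a_i^t$ is an alternative, equally legitimate way to achieve the same decoupling. What your route buys is a transparent scalar contraction $\EE[\tilde\Phi^{t+1}\mid\mathcal{F}_t]\le(1-\tfrac{2\beta a_{\min}^t}{t+1})\tilde\Phi^t+\tfrac{G^2a_{\max}^t}{(t+1)^2}$ whose unrolled product $\prod_{t>s}(1-\tfrac{1}{t+1})=\tfrac{s+1}{T}$ makes the $1/T$ rate visible at a glance; what the paper's route buys is an analysis carried out entirely in expectation, with no good/bad-event splitting and with constants that land exactly on the stated display.

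Two points in your sketch need repair, and both are fixable. First, a measurability issue: your good event (in particular $\{a_{\max}^t\le\bar A \text{ for all } t\}$) is not $\mathcal{F}_t$-adapted, so you cannot multiply the conditional recursion by its indicator and iterate conditional expectations. The standard fix---which is exactly what the paper does---is to condition on the entire step-size randomness $\{M_i^t\}_{1\le i\le N,1\le t\le T}$ at the outset (legitimate because it is independent of the gradient noise), unroll the recursion with now-deterministic coefficients, and only then split into good and bad events; note also that by monotonicity of running maxima your event $\{2\beta a_{\min}^t\ge 1 \text{ for all } t\ge t_0\}$ is already determined at time $t_0$. Second, your burn-in time must satisfy $\PP(\exists i:\max_{s\le t_0}M_i^s\le\tfrac{1}{4\beta^2}-1)\le\delta$, which forces $t_0\approx e^{1/(4\beta^2\log(T+10))}\log(N/\delta)$; with $\delta\asymp 1/T$ the resulting term $D^2t_0/T$ carries an extra $\log(NT)$ factor relative to the theorem's first term. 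So your argument does deliver the $O(\frac{\log^3(T)}{T})$ consequence, but with slightly worse logarithmic factors than the exact inequality displayed in the theorem.
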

\begin{remark}
Theorem~\ref{Thm:AdaOGD-rate} demonstrates that Algorithm~\ref{alg:AdaOGD-MA} achieves a near-optimal convergence rate since the upper bound matches the lower bound~\citep{Nemirovski-1983-Problem} up to a log factor. This result together with Theorem~\ref{Thm:AdaOGD-regret} shows that our adaptive variant of OGD is a doubly optimal and feasible learning algorithm for strongly monotone games. 
\end{remark}
\begin{remark}[Importance of doubly optimality]
To further appreciate the concept of double optimality note that our starting point is single-agent that is learning in an online manner in an arbitrarily non-stationary and possibly adversarial environment. It is well known that online gradient descent achieves the minimax optimal regret bounds ($\Theta(\log) T$ in this setting, for strongly convex cost functions and $\Theta(\sqrt{T})$). As such, to an agent engaged in an online decision making process, it is natural to employ OGD in such an environment where statistical regularity may be lacking. In particular, the most common instantiation of such a non-stationary environment is one that comprises other agents who are simultaneously engaged in the online decision making process and whose actions impact all others' costs/rewards. In other words, each agent is acting in an environment whose cost/reward is determined by an \textit{opaque} game: the cost/reward is determined by an unknown underlying game, where  even the number of agents that comprise of the game may be unknown. Consequently, it is natural to ask if each agent adopts OGD to maximize its finite-horizon cumulative reward, would the system jointly converge to a Nash equilibrium, a multi-agent optimal outcome where no agent has any incentive to unilaterally deviate? If not, then in the long run an agent would be able to do better by deviating from what is prescribed by the online learning algorithm, given what all the other agents are doing, thereby producing ``regret.''  Consequently, an online learning algorithm that is doubly optimal---where a single agent adopting it attains optimal finite-time no-regret guarantees and when all agents adopt it they converge to a Nash equilibrium---effectively bridges optimal transient performance (i.e., finite-horizon performance) with optimal long-run performance (i.e., the equilibrium outcome). 
\end{remark}
To prove Theorem~\ref{Thm:AdaOGD-rate}, we provide another descent inequality for the iterates generated by Algorithm~\ref{alg:AdaOGD-MA}. 
\begin{lemma}\label{Lemma:AdaOGD-MA}
Suppose that a continuous game $\GCal$ is $\beta$-strongly monotone and let $G, D > 0$ be problem parameters satisfying $\EE[\xi^t \mid x^t] = v(x^t)$ and $\EE[\|\xi^t\|^2 \mid x^t] \leq G^2$ for all $t \geq 1$, and let $\|x - x'\| \leq D$ for all $x, x' \in \XCal$. Letting the iterates $\{x^t\}_{t \geq 1}$ be generated by Algorithm~\ref{alg:AdaOGD-MA}, we have
\begin{eqnarray*}
\lefteqn{\sum_{i = 1}^N \eta_i^T\EE\left[\|x_i^T - x_i^\star\|^2 \mid \{\eta_i^t\}_{1 \leq i \leq N, 1 \leq t \leq T}\right] \leq \sum_{i = 1}^N \eta_i^1\|x_i^1 - x_i^\star\|^2} \\ 
& & + D^2 \left(\sum_{t=1}^{T-1} \left(\max\left\{0, \max_{1 \leq i \leq N} \left\{\eta_i^{t+1} - \eta_i^t\right\} - 2\beta\right\}\right)\right) + G^2 \left(\sum_{t=1}^{T-1} \left(\max_{1 \leq i \leq N} \left\{\tfrac{1}{\eta_i^{t+1}}\right\}\right)\right),  
\end{eqnarray*}
where $x^\star \in \XCal$ is the unique Nash equilibrium.
\end{lemma}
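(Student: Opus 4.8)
The plan is to establish a per-round descent relation for a single agent, sum it over $\NCal$, and telescope over $t$. Throughout, every expectation is taken conditionally on the entire collection of step sizes $\{\eta_i^t\}$; because the geometric variables $M_i^t$ that generate these step sizes are drawn independently of the gradient noise, this conditioning preserves the unbiasedness $\EE[\xi_i^t \mid x^t, \{\eta_i^t\}] = v_i(x^t)$ and the second-moment bound, which are the only stochastic inputs needed. Since $x^t$ is a function of the initialization, the past noises, and the past step sizes, conditioning on all (including future) step sizes does not disturb the conditional unbiasedness of $\xi_i^t$ given $x^t$.

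First I would express the proximal update in line~7 of Algorithm~\ref{alg:AdaOGD-MA} through its first-order optimality condition: for every $y_i \in \XCal_i$, one has $(y_i - x_i^{t+1})^\top(\xi_i^t + \eta_i^{t+1}(x_i^{t+1} - x_i^t)) \geq 0$. Evaluating at $y_i = x_i^\star$ and applying the three-point identity to $(x_i^\star - x_i^{t+1})^\top(x_i^{t+1} - x_i^t)$ converts the inner product into squared distances, producing the single-agent relation
$$\tfrac{\eta_i^{t+1}}{2}\|x_i^{t+1} - x_i^\star\|^2 \leq \tfrac{\eta_i^{t+1}}{2}\|x_i^t - x_i^\star\|^2 - \tfrac{\eta_i^{t+1}}{2}\|x_i^{t+1} - x_i^t\|^2 - (x_i^{t+1} - x_i^\star)^\top\xi_i^t.$$
I would then split the last term as $-(x_i^{t+1} - x_i^t)^\top\xi_i^t - (x_i^t - x_i^\star)^\top\xi_i^t$, bound the first piece by Young's inequality so that its resulting quadratic cancels the negative $\|x_i^{t+1} - x_i^t\|^2$ above (leaving a residual $\tfrac{1}{2\eta_i^{t+1}}\|\xi_i^t\|^2$), and take the conditional expectation of the second piece to obtain $-(x_i^t - x_i^\star)^\top v_i(x^t)$.

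Multiplying by two and summing over $i \in \NCal$, using $\sum_i\|x_i^t - x_i^\star\|^2 = \|x^t - x^\star\|^2$ and $\sum_i\|\xi_i^t\|^2 = \|\xi^t\|^2$, I would reach an aggregate relation whose cross term is $-2\,\EE[(x^t - x^\star)^\top v(x^t)]$. Here the game structure enters: adding the strong monotonicity inequality (Definition~\ref{def:SM-game}) to the variational characterization $(x^t - x^\star)^\top v(x^\star) \geq 0$ of the equilibrium (Proposition~\ref{Prop:VC-games}) gives $(x^t - x^\star)^\top v(x^t) \geq \beta\|x^t - x^\star\|^2$, so the cross term is replaced by $-2\beta\sum_i\EE[\|x_i^t - x_i^\star\|^2]$, while the noise residual is bounded by $\max_i\{1/\eta_i^{t+1}\}\,\EE[\|\xi^t\|^2] \leq G^2\max_i\{1/\eta_i^{t+1}\}$.

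The remaining and most delicate step is the telescoping, which must reconcile the \emph{distinct, random} step sizes of the agents into the single $\max_i$ expression in the statement. Writing $a_i^t := \EE[\|x_i^t - x_i^\star\|^2 \mid \{\eta\}]$, the aggregate relation reads $\sum_i\eta_i^{t+1}a_i^{t+1} \leq \sum_i(\eta_i^{t+1} - 2\beta)a_i^t + G^2\max_i\{1/\eta_i^{t+1}\}$, and to telescope against $\sum_i\eta_i^t a_i^t$ I would control the gap $\sum_i(\eta_i^{t+1} - \eta_i^t - 2\beta)a_i^t$. The crucial observation is that since $a_i^t \geq 0$ and $\sum_i a_i^t = \EE[\|x^t - x^\star\|^2] \leq D^2$, one has $\sum_i c_i a_i^t \leq (\max_i c_i)\sum_i a_i^t$ with $c_i = \eta_i^{t+1} - \eta_i^t - 2\beta$, and then splitting on the sign of $\max_i c_i$ yields the bound $D^2\max\{0, \max_i\{\eta_i^{t+1} - \eta_i^t\} - 2\beta\}$. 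This pooling step is exactly where I expect the main obstacle to lie: a naive per-agent telescoping would instead produce the strictly weaker $\sum_i\max\{0, \eta_i^{t+1} - \eta_i^t - 2\beta\}$, and it is precisely the sharing of the single diameter budget $\sum_i a_i^t \leq D^2$ across agents that lets one $\max_i$ suffice. Telescoping from $t=1$ to $T-1$ and using the deterministic initialization $a_i^1 = \|x_i^1 - x_i^\star\|^2$ then gives the claimed inequality.
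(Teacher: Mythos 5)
Your proposal is correct and follows essentially the same route as the paper's own proof: the same first-order optimality condition with Young's inequality, the same conditional expectation given the noise-independent step sizes, the same combination of strong monotonicity with the variational characterization of the equilibrium to get $(x^\star - x^t)^\top v(x^t) \leq -\beta\|x^t - x^\star\|^2$, and, crucially, the same pooling step $\sum_i c_i a_i^t \leq (\max_i c_i)\sum_i a_i^t \leq D^2 \max\{0, \max_i c_i\}$ with $c_i = \eta_i^{t+1} - \eta_i^t - 2\beta$ before telescoping. No gaps.
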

We defer the proof of this lemma to Appendix~\ref{app:AdaOGD-MA}. 

\paragraph{Proof of Theorem~\ref{Thm:AdaOGD-rate}.} Since $D > 0$ satisfies that $\|x - x'\| \leq D$ for all $x, x' \in \XCal$ and $\eta_i^{t+1} = \tfrac{t+1}{\sqrt{1 + \max\{M_i^1, \ldots, M_i^t\}}}$ in Algorithm~\ref{alg:AdaOGD-MA}, we have
\begin{equation*}
\sum_{i = 1}^N \eta_i^1\|x_i^1 - x_i^\star\|^2 \leq D^2, \qquad \eta_i^{t+1} - \eta_i^t \leq \tfrac{1}{\sqrt{1 + \max\{M_i^1, \ldots, M_i^t\}}}.
\end{equation*}
By Lemma~\ref{Lemma:AdaOGD-MA}, we have
\begin{eqnarray}\label{inequality:AdaOGD-rate-first}
\lefteqn{\sum_{i = 1}^N \eta_i^T\EE\left[\|x_i^T - x_i^\star\|^2 \mid \{\eta_i^t\}_{1 \leq i \leq N, 1 \leq t \leq T}\right] \leq D^2} \\
& & + D^2 \left(\sum_{t=1}^{T-1} \left(\max\left\{0, \max_{1 \leq i \leq N} \left\{\tfrac{1}{\sqrt{1 + \max\{M_i^1, \ldots, M_i^t\}}}\right\} - 2\beta\right\}\right)\right) + G^2 \left(\sum_{t=1}^{T-1} \left(\max_{1 \leq i \leq N} \left\{\tfrac{1}{\eta_i^{t+1}}\right\}\right)\right). \nonumber 
\end{eqnarray}
Further, we have 
\begin{equation}\label{inequality:AdaOGD-rate-second}
\sum_{t=1}^{T-1} \left(\max_{1 \leq i \leq N} \left\{\tfrac{1}{\eta_i^{t+1}}\right\}\right) \leq \sqrt{1 + \max_{1 \leq i \leq N, 1 \leq t \leq T} \{M_i^t\}}\left(\sum_{t=1}^{T-1} \tfrac{1}{t+1}\right) \leq \log(T+1)\sqrt{1 + \max_{1 \leq i \leq N, 1 \leq t \leq T} \{M_i^t\}}. 
\end{equation}
Plugging Eq.~\eqref{inequality:AdaOGD-rate-second} into Eq.~\eqref{inequality:AdaOGD-rate-first} yields that 
\begin{eqnarray*}
\lefteqn{\sum_{i = 1}^N \eta_i^T\EE\left[\|x_i^T - x_i^\star\|^2 \mid \{\eta_i^t\}_{1 \leq i \leq N, 1 \leq t \leq T}\right] \leq D^2} \\
& & + D^2 \left(\sum_{t=1}^{T-1} \left(\max\left\{0, \max_{1 \leq i \leq N} \left\{\tfrac{1}{\sqrt{1 + \max\{M_i^1, \ldots, M_i^t\}}}\right\} - 2\beta\right\}\right)\right) + G^2\log(T+1)\sqrt{1 + \max_{1 \leq i \leq N, 1 \leq t \leq T} \{M_i^t\}}. 
\end{eqnarray*}
By the definition of $\eta_i^T$, we have
\begin{equation*}
\eta_i^T \geq \tfrac{T}{\sqrt{1 + \max\{M_i^1, \ldots, M_i^T\}}} \geq \tfrac{T}{\sqrt{1 + \max_{1 \leq i \leq N, 1 \leq t \leq T} \{M_i^t\}}}, 
\end{equation*}
which implies that 
\begin{equation*}
\sum_{i = 1}^N \eta_i^T\EE\left[\|x_i^T - x_i^\star\|^2 \mid \{\eta_i^t\}_{1 \leq i \leq N, 1 \leq t \leq T}\right] \geq \tfrac{T}{\sqrt{1 + \max_{1 \leq i \leq N, 1 \leq t \leq T} \{M_i^t\}}} \cdot \EE\left[\|x^T - x^\star\|^2 \mid \{\eta_i^t\}_{1 \leq i \leq N, 1 \leq t \leq T}\right]. 
\end{equation*}
Putting these pieces together yields that  
\begin{eqnarray*}
\lefteqn{\left(\tfrac{T}{\sqrt{1 + \max_{1 \leq i \leq N, 1 \leq t \leq T} \{M_i^t\}}}\right)\EE\left[\|x^T - x^\star\|^2 \mid \{\eta_i^t\}_{1 \leq i \leq N, 1 \leq t \leq T}\right] \leq D^2} \\ 
& & + D^2\left(\sum_{t=1}^{T-1} \left(\max\left\{0, \max_{1 \leq i \leq N} \left\{\tfrac{1}{\sqrt{1 + \max\{M_i^1, \ldots, M_i^t\}}}\right\} - 2\beta\right\}\right)\right) + G^2\log(T+1)\sqrt{1 + \max_{1 \leq i \leq N, 1 \leq t \leq T} \{M_i^t\}}. 
\end{eqnarray*}
Rearranging and taking the expectation of both sides, we have
\begin{eqnarray}\label{inequality:AdaOGD-rate-third}
\lefteqn{T \cdot \EE[\|x^T - x^\star\|^2] \leq D^2\underbrace{\EE\left[\sqrt{1 + \max_{1 \leq i \leq N, 1 \leq t \leq T} \{M_i^t\}}\right]}_{\textbf{I}} + G^2\log(T+1)\underbrace{\EE\left[1 + \max_{1 \leq i \leq N, 1 \leq t \leq T} \{M_i^t\}\right]}_{\textbf{II}}} \nonumber \\ 
& & + D^2\underbrace{\EE\left[\sum_{t = 1}^{T-1} \max\left\{0, \max_{1 \leq i \leq N} \left\{\tfrac{\sqrt{1 + \max_{1 \leq i \leq N, 1 \leq t \leq T} \{M_i^t\}}}{\sqrt{1 + \max\{M_i^1, \ldots, M_i^t\}}}\right\} - 2\beta\sqrt{1 + \max_{1 \leq i \leq N, 1 \leq t \leq T} \{M_i^t\}}\right\}\right]}_{\textbf{III}}.  
\end{eqnarray}
By using the previous argument with Proposition~\ref{Prop:GRV} (cf. Appendix~\ref{app:RV}) and $p = \frac{1}{\log(T+10)}$, we have
\begin{equation}\label{inequality:AdaOGD-rate-I}
\textbf{I} \leq \sqrt{1 + \tfrac{1 + \log(NT)}{p}} = \sqrt{1 + \log(T+10) + \log(NT)\log(T+10)}. 
\end{equation}
and 
\begin{equation}\label{inequality:AdaOGD-rate-II}
\textbf{II} \leq 1 + \tfrac{1 + \log(NT)}{p} = 1 + \log(T+10) + \log(NT)\log(T+10). 
\end{equation}
It remains to bound the term $\textbf{III}$ using Proposition~\ref{Prop:GRV} and $p = \frac{1}{\log(T+10)}$. Indeed, we have
\begin{eqnarray*}
\textbf{III} & = & \EE\left[\sum_{t = 1}^{T-1} \max\left\{0, \max_{1 \leq i \leq N} \left\{\tfrac{\sqrt{1 + \max_{1 \leq i \leq N, 1 \leq t \leq T} \{M_i^t\}}}{\sqrt{1 + \max\{M_i^1, \ldots, M_i^t\}}}\right\} - 2\beta\sqrt{1 + \max_{1 \leq i \leq N, 1 \leq t \leq T} \{M_i^t\}}\right\}\right] \\ 
& \leq & \EE\left[\sum_{t = 1}^{T-1} \max_{1 \leq i \leq N} \left\{\tfrac{\sqrt{1 + \max_{1 \leq i \leq N, 1 \leq t \leq T} \{M_i^t\}}}{\sqrt{1 + \max\{M_i^1, \ldots, M_i^t\}}}\right\} \cdot \mathbb{I}\left(\max_{1 \leq i \leq N} \left\{\tfrac{1}{\sqrt{1 + \max\{M_i^1, \ldots, M_i^t\}}}\right\} - 2\beta \geq 0\right)\right]. 
\end{eqnarray*}
Defining $i^t = \argmax_{1 \leq i \leq N} \left\{\tfrac{1}{\sqrt{1 + \max\{M_i^1, \ldots, M_i^t\}}}\right\}$ as a random variable and recalling that $\{\max\{M_i^1, \ldots, M_i^t\}\}_{1 \leq i \leq N}$ are independent and identically distributed, we have that $i^t \in \{1, \ldots, N\}$ is uniformly distributed. This implies that 
\begin{eqnarray*}
\textbf{III} & \leq & \tfrac{1}{N} \sum_{t = 1}^{T-1} \sum_{j=1}^N \EE\left[\max_{1 \leq i \leq N} \left\{\tfrac{\sqrt{1 + \max_{1 \leq i \leq N, 1 \leq t \leq T} \{M_i^t\}}}{\sqrt{1 + \max\{M_i^1, \ldots, M_i^t\}}}\right\} \cdot \mathbb{I}\left(\max_{1 \leq i \leq N} \left\{\tfrac{1}{\sqrt{1 + \max\{M_i^1, \ldots, M_i^t\}}}\right\} - 2\beta \geq 0\right) \mid i^t = j\right] \\
& = & \tfrac{1}{N} \sum_{t = 1}^{T-1} \sum_{j=1}^N \EE\left[\tfrac{\sqrt{1 + \max_{1 \leq i \leq N, 1 \leq t \leq T} \{M_i^t\}}}{\sqrt{1 + \max\{M_j^1, \ldots, M_j^t\}}} \cdot \mathbb{I}\left(\tfrac{1}{\sqrt{1 + \max\{M_j^1, \ldots, M_j^t\}}} - 2\beta \geq 0\right)\right] \\
& \leq & \tfrac{1}{N} \sum_{t = 1}^{T-1} \sum_{j=1}^N \EE\left[\sqrt{1 + \max_{1 \leq i \leq N, 1 \leq t \leq T, i \neq j} \{M_i^t\}} \cdot \mathbb{I}\left(\tfrac{1}{\sqrt{1 + \max\{M_j^1, \ldots, M_j^t\}}} - 2\beta \geq 0\right)\right]. 
\end{eqnarray*}
Since $\max_{1 \leq i \leq N, 1 \leq t \leq T, i \neq j} \{M_i^t\}$ is independent of $\max\{M_j^1, \ldots, M_j^t\}$, we have
\begin{equation*}
\textbf{III} \leq \tfrac{1}{N} \sum_{t = 1}^{T-1} \sum_{j=1}^N \EE\left[\sqrt{1 + \max_{1 \leq i \leq N, 1 \leq t \leq T, i \neq j} \{M_i^t\}}\right] \cdot \PP\left(\tfrac{1}{\sqrt{1 + \max\{M_j^1, \ldots, M_j^t\}}} - 2\beta \geq 0\right).
\end{equation*}
Since $\{M_i^t\}_{1 \leq i \leq N, 1 \leq t \leq T}$ is a sequence of independent and identically distributed geometric random variables with $p = \frac{1}{\log(T+10)}$, Proposition~\ref{Prop:GRV} (cf. Appendix~\ref{app:RV}) implies that
\begin{equation*}
\EE\left[\max_{1 \leq i \leq N, 1 \leq t \leq T, i \neq j} \{M_i^t\}\right] \leq \tfrac{1 + \log(NT)}{p} = \log(T+10) + \log(NT)\log(T+10). 
\end{equation*}
and 
\begin{equation*}
\sum_{t=1}^{T-1} \PP\left(\tfrac{1}{\sqrt{1 + \max\{M_j^1, \ldots, M_j^t\}}} - 2\beta \geq 0\right) \leq e^{\frac{p}{4\beta^2}} = e^{\frac{1}{4\beta^2\log(T+10)}}. 
\end{equation*}
Putting these pieces together with Jensen's inequality yields that 
\begin{equation}\label{inequality:AdaOGD-rate-III}
\textbf{III} \leq e^{\frac{1}{4\beta^2\log(T+10)}}\sqrt{1 + \log(T+10) + \log(NT)\log(T+10)}. 
\end{equation}
Plugging Eq.~\eqref{inequality:AdaOGD-rate-I}, Eq.~\eqref{inequality:AdaOGD-rate-II} and Eq.~\eqref{inequality:AdaOGD-rate-III} into Eq.~\eqref{inequality:AdaOGD-rate-third} yields that 
\begin{eqnarray*}
\EE[\|x^T - x^\star\|^2] & \leq & \tfrac{D^2}{T}(1 + e^{\frac{1}{4\beta^2\log(T+10)}})\sqrt{1 + \log(T+10) + \log(NT)\log(T+10)} \\
& & + \tfrac{G^2}{T}\log(T + 1)(1 + \log(T+10) + \log(NT)\log(T+10)). 
\end{eqnarray*}
This completes the proof. 

It is worth remarking that our proof does not use the structure of $v(\cdot)$ (i.e., $v_i(x) = \nabla_{x_i} u_i(x_i; x_{-i})$). Thus, Theorem~\ref{Thm:AdaOGD-rate}, when extended to the VI setting, yields a decentralized, feasible optimization algorithm for finding a solution of a strongly monotone VI, contributing to that line of literature. 
\begin{corollary}\label{Cor:AdaOGD-rate}
Suppose that the variational inequality defined by $v(\cdot)$ (without the structure that $v_i(x) = \nabla_{x_i} u_i(x_i; x_{-i})$) and $\XCal$ (any convex set) is $\beta$-strongly monotone and let $G, D > 0$ be problem parameters satisfying $\EE[\xi^t \mid x^t] = v(x^t)$ and $\EE[\|\xi^t\|^2 \mid x^t] \leq G^2$ for all $t \geq 1$, and $\|x - x'\| \leq D$ for all $x, x' \in \XCal$. Then, by employing Algorithm~\ref{alg:AdaOGD-MA}, we have 
\begin{eqnarray*}
\EE[\|x^T - x^\star\|^2] & \leq & \tfrac{D^2}{T}(1 + e^{\frac{1}{4\beta^2\log(T+10)}})\sqrt{1 + \log(T+10) + \log(NT)\log(T+10)} \\
& & + \tfrac{G^2}{T}\log(T + 1)(1 + \log(T+10) + \log(NT)\log(T+10)),
\end{eqnarray*}
where $x^\star \in \XCal$ is the unique solution of he VI. As a consequence, we have $\EE[\|x^T - x^\star\|^2] = O(\frac{\log^3(T)}{T})$. 
\end{corollary}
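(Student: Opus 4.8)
The plan is to observe that the entire argument establishing Theorem~\ref{Thm:AdaOGD-rate} is agnostic to the game-theoretic origin of the operator $v(\cdot)$, so the corollary follows by transcribing that proof with the Nash equilibrium $x^\star$ replaced by the VI solution. To make this precise I would first supply directly, from the VI hypotheses, the two structural facts that the game setting previously supplied. The first is existence and uniqueness of $x^\star$: for a $\beta$-strongly monotone operator over a closed convex set (boundedness being retained through the diameter assumption on $\XCal$) this is classical, and uniqueness in particular is immediate. Indeed, if $x_1^\star, x_2^\star$ both solve the VI, then adding the two defining inequalities $\langle x - x_j^\star, v(x_j^\star)\rangle \geq 0$ (tested at $x = x_{3-j}^\star$) gives $\langle x_1^\star - x_2^\star, v(x_1^\star) - v(x_2^\star)\rangle \leq 0$, which together with strong monotonicity forces $\|x_1^\star - x_2^\star\|^2 = 0$. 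The second is the variational characterization $(x - x^\star)^\top v(x^\star) \geq 0$ for all $x \in \XCal$: in the game setting this was the content of Proposition~\ref{Prop:VC-games}, derived from individual convexity of the cost functions, whereas for a VI it is simply the definition of a solution. Both ingredients are thus available without reference to the structure $v_i(x) = \nabla_{x_i} u_i(x_i; x_{-i})$.

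Second, I would trace through Lemma~\ref{Lemma:AdaOGD-MA} and the proof of Theorem~\ref{Thm:AdaOGD-rate} and verify, line by line, that the only properties invoked are: (i) $\beta$-strong monotonicity of $v$; (ii) the variational characterization of $x^\star$ above; (iii) the unbiasedness and bounded-second-moment conditions $\EE[\xi^t \mid x^t] = v(x^t)$ and $\EE[\|\xi^t\|^2 \mid x^t] \leq G^2$; (iv) the diameter bound $\|x - x'\| \leq D$; and (v) the prox/projection update rule, which is well defined over any convex set. Every step displayed in the proof of Theorem~\ref{Thm:AdaOGD-rate}---the descent inequality of the lemma, the telescoping bound on $\sum_t \max_i \{1/\eta_i^{t+1}\}$, the lower bound on $\eta_i^T$, and the three expectation estimates $\textbf{I}$, $\textbf{II}$, $\textbf{III}$ obtained from Proposition~\ref{Prop:GRV}---manipulates only the step sizes $\eta_i^t$ and the geometric random variables $M_i^t$, never the gradient structure of $v$. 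Consequently the identical chain of inequalities produces the identical right-hand side, and the $O(\log^3(T)/T)$ conclusion follows.

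The one point requiring genuine care---and the place I would expect a reader to object---concerns the block-decentralized form of Algorithm~\ref{alg:AdaOGD-MA}, in which agent $i$ performs a prox step confined to its own set $\XCal_i$. The per-block update coincides with a single joint prox step over $\XCal$ only when $\XCal$ is a product set $\Pi_{i=1}^N \XCal_i$; over a non-product convex set the projection couples the coordinates and the algorithm as written no longer computes a projection onto $\XCal$. I would therefore read ``any convex set'' as the product of arbitrary convex sets $\XCal_i$, which is exactly the generality needed to drop the game structure while keeping the decentralized update meaningful; alternatively one may state the corollary for the centralized single-block VI, in which case the product caveat is vacuous. Under either reading the proof above is a verbatim transcription of the proof of Theorem~\ref{Thm:AdaOGD-rate}, and no new estimate is required.
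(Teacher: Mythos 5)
Your proposal is correct and takes essentially the same route as the paper: the paper's entire ``proof'' of Corollary~\ref{Cor:AdaOGD-rate} is the preceding remark that the proof of Theorem~\ref{Thm:AdaOGD-rate} never invokes the structure $v_i(x) = \nabla_{x_i} u_i(x_i; x_{-i})$, so the argument transfers verbatim to the VI setting. Your added details---uniqueness of $x^\star$ from strong monotonicity, the VI solution definition standing in for Proposition~\ref{Prop:VC-games}, and the observation that ``any convex set'' must be read as a product $\Pi_{i=1}^N \XCal_i$ (or a single block) for the decentralized update in Algorithm~\ref{alg:AdaOGD-MA} to be well defined---are correct refinements of points the paper leaves implicit.
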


\subsection{Applications: Feasible Multi-Agent Learning for Power Management and Newsvendors with Lost Sales}\label{subsec:examples}
We present two typical examples that satisfy the conditions in Definition~\ref{def:SM-game}, where (noisy) gradient feedback is available even when (noisy) bandit feedback is not available. 
\begin{example}[Power Management in Wireless Networks]\label{Example:PM}
Target-rate power management problems are well known in operations research and wireless communications~\citep{Rappaport-2001-Wireless, Goldsmith-2005-Wireless}. We consider a wireless network of $N$ communication links, each link consisting of a transmitter (e.g., phone, tablet and sensor) and an intended receiver (each transmitter consumes power to send signals to their intended receivers). Assume further that the transmitter in the $i^\textnormal{th}$ link transmits with power $a_i \geq 0$ and let $a = (a_1, \ldots, a_N) \in \br^N$ denote the joint power profile of all transmitters in the network. 
In this context, the quality-of-service rate of the $i^\textnormal{th}$ link depends not only on how
much power its transmitter is employing but also on how much power all the other transmitters are concurrently employing. Formally, the $i^\textnormal{th}$ link's quality-of-service rate is given by $r_i(a) = \tfrac{G_{ii} a_i}{\sum_{j \neq i}G_{ij}a_j+\eta_i}$ where $\eta_i$ is the thermal noise associated with the receiver of the $i^\textnormal{th}$ link and $G_{ij} \geq 0$ is the unit power gain between the transmitter in the $j^\textnormal{th}$ link and the receiver in the $i^\textnormal{th}$ link, which is determined by the network topology but is unknown in practice. Note that $\sum_{j \neq i}G_{ij}a_j$ is the interference caused by other links to link $i$---all else being equal, the larger the powers of transmitters in other links, the lower the service rate the $i^\textnormal{th}$ link. Intuitively, the transmitter in the $i^\textnormal{th}$ link aims to balance between two objectives: maintaining a target service rate $r_i^\star$ while consuming as little power as possible. This consideration leads to the following standard cost function~\citep{Rappaport-2001-Wireless, Goldsmith-2005-Wireless}:
\begin{equation*}
u_i(a) = \tfrac{a_i^2}{2}\left(1 - \tfrac{r_i^\star}{r_i(a)}\right)^2 = \tfrac{1}{2}\left(a_i - \tfrac{r_i^\star(\sum_{j \neq i}G_{ij} a_j+\eta_i)}{G_{ii}}\right)^2. 
\end{equation*}
Notably, $u_i(a) = 0$ if the realized service rate $r_i(a)$ is equal to the target rate $r_i^\star$. Otherwise, there will be a cost either due to not meeting $r_i^\star$ or due to consuming unnecessary power. Our prior work has shown that this is a $\beta$-strongly monotone game~\citep{Zhou-2021-Robust} with $\beta = \lambda_{\min}(I - \frac{1}{2}(W + W^\top))> 0$, where $W_{ii} = 0$ for all $1 \leq i \leq N$ and $W_{ij} = \frac{r_i^\star G_{ij}}{G_{ii}}$ for $i \neq j$. 

Finally, gradient feedback is available for the aforementioned class of strongly monotone games; indeed, the $i^\textnormal{th}$ link's quality-of-service rate $r_i(\sa)$ is available to the transmitter in the $i^\textnormal{th}$ link, who can use it to compute the gradient of $u_i(\sa)$ with respect to $a_i$. In this setting, the parameter $\beta > 0$ is not available in practice since finding the minimal eigenvalue is computationally prohibitive. However, the retailer can apply our \textsf{MA-AdaOGD} algorithm (cf. Algorithm~\ref{alg:AdaOGD-MA}) and obtain a near-optimal last-iterate convergence rate of $O(\frac{\log^3(T)}{T})$. 
\end{example}

\begin{example}[Newsvendor with Lost Sales]\label{Example:RIG}
We consider the multi-retailer generalization of newsvendor problem~\citep{Netessine-2003-Centralized}. For simplicity, we focus on a single product with same per-unit price $p > 0$ and perishable inventory control. Given the set of retailers $\NCal = \{1, 2, \ldots, N\}$, the $i^\textnormal{th}$ retailer's action $x_i$ is assumed to lie in the interval $[0, \bar{x}_i]$. For the $i^\textnormal{th}$ retailer, the demand is random and depends on the inventory levels of other retailers, therefore we denote it as $D_i(x_{-i})$ and let $d_i \geq 0$ denote a realization of this random variable. The $i^\textnormal{th}$ retailer only observes the sales quantity $\min\{x_i, d_i\}$. Using the previous argument, the $i^\textnormal{th}$ retailer's cost function is defined by 
\begin{equation}\label{Eq:RIG-cost-MA}
u_i(x) = (p - c_i) \cdot \EE[\max\{0, D_i(x_{-i}) - x_i\}] + c_i \cdot \EE[\max\{0, x_i - D_i(x_{-i})\}], 
\end{equation}
and the noisy gradient feedback signal is given by 
\begin{equation*}
\xi_i = \begin{cases}
c_i, & \text{if } x_i \geq \min\{x_i, d_i\}, \\
c_i - p, & \text{otherwise}.
\end{cases}
\end{equation*}
It remains to extend the analysis from~\citet[Section~3.5]{Huh-2009-Nonparametric} and provide a condition on the distribution of $D_i(x_{-i})$ that can guarantee that the multi-retailer newsvendor problem is $\beta$-strongly monotone for some constant $\beta > 0$. Indeed, after simple calculations, we have
\begin{equation*}
v_i(x) = \nabla_{x_i} u_i(x) = p \cdot \PP(D_i(x_{-i}) \leq x_i) - p + c_i = p \cdot F_i(x) - p + c_i. 
\end{equation*}
Letting $F = (F_1, F_2, \ldots, F_N)$ be an operator from $\prod_{i=1}^N [0, \bar{x}_i]$ to $[0, 1]^N$, if $F$ is $\alpha$-strongly monotone, we have 
\begin{equation*}
\langle x' - x, v(x') - v(x)\rangle = p \cdot \langle x' - x, F(x') - F(x)\rangle \geq \alpha p\|x' - x\|^2, \quad \textnormal{for all } x, x' \in \prod_{i=1}^N [0, \bar{x}_i]. 
\end{equation*}
As a concrete example, we can let the distribution of a demand $D_i(x_{-i})$ be given by 
\begin{equation*}
\PP(D_i(x_{-i}) \leq z) = 1 - \tfrac{1 + \sum_{j \neq i} x_j}{(1 + z + \sum_{j \neq i} x_j)^2}. 
\end{equation*}
It is clear that $\lim_{z \rightarrow +\infty} \PP(D_i(x_{-i}) \leq z) = 1$ for all $x_{-i} \in \prod_{j \neq i} [0, \bar{x}_j]$. Also, we have 
\begin{equation*}
\PP(D_i(x_{-i}) = 0) = 1 - \tfrac{1}{1 + \sum_{j \neq i} x_j}, 
\end{equation*}
which characterizes the dependence of the distribution of the demand for the $i^\textnormal{th}$ retailer on other retailers' actions $x_{-i}$. Indeed, the demand $D_i(x_{-i})$ is likely to be small if the total inventory provided by other retailers $\sum_{j \neq i} x_j$ is large. Two extreme cases are (i) $\PP(D_i(x_{-i}) = 0) \rightarrow 1$ as $\sum_{j \neq i} x_j \rightarrow +\infty$ and (ii) $\PP(D_i(x_{-i}) = 0) = 0$ as $\sum_{j \neq i} x_j = 0$. Finally, we can prove that this is a $\beta$-strongly monotone game with $\beta = \frac{p}{(1 + \sum_{k = 1}^N \bar{x}_k)^3} > 0$; see Appendix~\ref{app:example} for the details. 

Thus, if all the retailers agree to apply our \textsf{MA-AdaOGD} algorithm (cf. Algorithm~\ref{alg:AdaOGD-MA}), we can obtain a near-optimal last-iterate convergence rate of $O(\frac{\log^3(T)}{T})$. 
\end{example}
\begin{algorithm}[!t]
\caption{\textsf{Newsvendor-MA-AdaOGD}($x_1^1, x_2^1, \ldots, x_N^1, T$)}\label{alg:AdaOGD-MA-Newsvendor}
\begin{algorithmic}[1]
\STATE \textbf{Input:} initial point $x_i^1 \in \XCal_i$ for all $i \in \NCal$ and the total number of rounds $T$. 
\STATE \textbf{Initialization:} $p_0 = \frac{1}{\log(T+10)}$. 
\FOR{$t = 1, 2, \ldots, T$}
\FOR{$i = 1, 2, \ldots, N$}
\STATE sample $M_i^t \sim \textsf{Geometric}(p_0)$. 
\STATE set $\eta_i^{t+1} \leftarrow \tfrac{t+1}{\sqrt{1 + \max\{M_i^1, \ldots, M_i^t\}}}$. 
\STATE observe the sales quantity of $\min\{x^t, d^t\}$. 
\STATE update $x_i^{t+1} \leftarrow \left\{\begin{array}{ll}
\argmin_{x_i \in [0, \bar{x}_i]}\{(x_i - x_i^t)c_i + \tfrac{\eta_i^{t+1}}{2}(x_i - x_i^t)^2\}, & \text{if } x_i^t \geq \min\{x_i^t, d_i^t\}, \\ \argmin_{x_i \in [0, \bar{x}_i]}\{(x_i - x_i^t)(c_i - p) + \tfrac{\eta_i^{t+1}}{2}(x_i - x_i^t)^2\}, & \text{otherwise}. 
\end{array} \right. $
\ENDFOR
\ENDFOR
\end{algorithmic}
\end{algorithm} 

We summarize Algorithm~\ref{alg:AdaOGD-MA} specialized to the multi-retailer generalization of the newsvendor problem in Algorithm~\ref{alg:AdaOGD-MA-Newsvendor} and we present the corresponding result on the last-iterate convergence rate guarantee in the following corollary. 
\begin{corollary}\label{Thm:AdaOGD-Newsvendor-rate}
In the multi-retailer generalization of the newsvendor problem, each retailer's cost functions are defined by Eq.~\eqref{Eq:RIG-cost-MA} where the unit purchase cost is $c_i > 0$ and the unit selling price is $p \geq c$. Also, the demand is a random variable with a continuous density function defined in Example~\ref{Example:RIG}. If the agent employs Algorithm~\ref{alg:AdaOGD-MA-Newsvendor}, we have 
\begin{eqnarray*}
\EE[\|x^T - x^\star\|^2] & \leq & \tfrac{\sum_{i=1}^N \bar{x}_i^2}{T}(1 + e^{\frac{1}{4\beta^2\log(T+10)}})\sqrt{1 + \log(T+10) + \log(NT)\log(T+10)} \\
& & + \tfrac{Np^2}{T}\log(T + 1)(1 + \log(T+10) + \log(NT)\log(T+10)),
\end{eqnarray*}
As a consequence, we have $\EE[\|x^T - x^\star\|^2] = O(\frac{\log^3(T)}{T})$. 
\end{corollary}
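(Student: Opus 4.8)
The plan is to obtain Corollary~\ref{Thm:AdaOGD-Newsvendor-rate} as a direct specialization of Theorem~\ref{Thm:AdaOGD-rate}, exploiting the fact that Algorithm~\ref{alg:AdaOGD-MA-Newsvendor} is nothing more than Algorithm~\ref{alg:AdaOGD-MA} instantiated with the newsvendor gradient feedback: the two branches of the update in Algorithm~\ref{alg:AdaOGD-MA-Newsvendor} are exactly the general proximal step of Algorithm~\ref{alg:AdaOGD-MA} with $\xi_i^t = c_i$ in the overstock case and $\xi_i^t = c_i - p$ in the understock case. Consequently the only work is to verify the three problem constants $(\beta, D, G)$ entering the bound of Theorem~\ref{Thm:AdaOGD-rate} and to match them with the quantities appearing in the corollary.

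First I would invoke Example~\ref{Example:RIG}, whose detailed computation is deferred to Appendix~\ref{app:example}, to conclude that the multi-retailer newsvendor problem is a $\beta$-strongly monotone game with $\beta = \frac{p}{(1 + \sum_{k=1}^N \bar{x}_k)^3} > 0$. This supplies the strong monotonicity constant that appears in the factor $e^{\frac{1}{4\beta^2\log(T+10)}}$ of the stated bound, so no separate argument is needed for it here.

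Next I would check that $\xi^t$ is an unbiased and bounded estimator of $v(x^t)$, the two feedback hypotheses required by Theorem~\ref{Thm:AdaOGD-rate}. Unbiasedness is immediate from the derivation already carried out in Example~\ref{Example:RIG}: taking expectation over the random demand and using continuity of its distribution gives $\EE[\xi_i^t \mid x^t] = (c_i - p)\,\PP(D_i(x_{-i}^t) > x_i^t) + c_i\,\PP(D_i(x_{-i}^t) < x_i^t) = p\,F_i(x^t) - p + c_i = v_i(x^t)$. For boundedness, each coordinate satisfies $\xi_i^t \in \{c_i, c_i - p\}$, and since $0 < c_i \leq p$ we have both $|c_i| \leq p$ and $|c_i - p| = p - c_i \leq p$, hence $|\xi_i^t| \leq p$; summing over coordinates yields $\|\xi^t\|^2 = \sum_{i=1}^N (\xi_i^t)^2 \leq N p^2$, so $\EE[\|\xi^t\|^2 \mid x^t] \leq G^2$ holds with $G^2 = N p^2$.

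Finally I would compute the diameter of the joint action set $\XCal = \prod_{i=1}^N [0, \bar{x}_i]$: for any $x, x' \in \XCal$ one has $\|x - x'\|^2 = \sum_{i=1}^N (x_i - x_i')^2 \leq \sum_{i=1}^N \bar{x}_i^2$, so $D^2 = \sum_{i=1}^N \bar{x}_i^2$. Substituting $D^2 = \sum_{i=1}^N \bar{x}_i^2$ and $G^2 = N p^2$ into the conclusion of Theorem~\ref{Thm:AdaOGD-rate} reproduces the displayed inequality verbatim, and the asymptotic rate $O(\frac{\log^3 T}{T})$ follows exactly as in that theorem. I do not anticipate any genuine obstacle; the result is an application corollary, and the only points demanding care are the identification of the algorithm's feedback $\xi_i^t$ with the general update and the use of $0 < c_i \leq p$ to certify the coordinatewise bound $|\xi_i^t| \leq p$.
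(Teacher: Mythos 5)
Your proposal is correct and follows essentially the same route as the paper's own proof: verify the three constants $\beta = \frac{p}{(1 + \sum_{k=1}^N \bar{x}_k)^3}$ (via Example~\ref{Example:RIG} and Appendix~\ref{app:example}), $G^2 = Np^2$, and $D^2 = \sum_{i=1}^N \bar{x}_i^2$, then apply Theorem~\ref{Thm:AdaOGD-rate}. Your write-up is in fact slightly more explicit than the paper's, since you spell out the unbiasedness computation $\EE[\xi_i^t \mid x^t] = pF_i(x^t) - p + c_i = v_i(x^t)$ and the coordinatewise bound $|\xi_i^t| \leq p$, which the paper simply asserts.
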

\begin{proof}
Recall that $u_i(x^t) = (p - c_i) \cdot \EE[\max\{0, D_i(x_{-i}^t) - x_i^t\}] + c_i \cdot \EE[\max\{0, x_i^t - D_i(x_{-i}^t)\}]$ and the noisy gradient feedback is given by
\begin{equation*}
\xi_i^t = \begin{cases}
c_i, & \text{if } x_i^t \geq \min\{x_i^t, d_i^t\}, \\
c_i - p, & \text{otherwise}. 
\end{cases}
\end{equation*}
So we have $\EE[\xi^t \mid x^t] = v(x^t)$ and $\EE[\|\xi^t\|^2 \mid x^t] \leq Np^2$ for all $t \geq 1$. In Appendix~\ref{app:example}, we show that this is a $\beta$-strongly monotone game with $\beta = \frac{p}{(1 + \sum_{k = 1}^N \bar{x}_k)^3} > 0$. In addition, $\XCal_i = [0, \bar{x}_i]$ implies that $D^2 = \sum_{i=1}^N \bar{x}_i^2$. Thus, Theorem~\ref{Thm:AdaOGD-rate} can be applied and implies the desired result. 
\end{proof}

\section{Extensions to Exp-Concave Cost Functions and Games}
In this section, we extend our results---for both single-agent and multi-agent learning---to a broader class of cost functions and game structures defined in terms of exp-concavity. For exp-concave cost functions, our adaptive variant of an online Newton step (\textsf{AdaONS}) can achieve a near-optimal regret of $O(d\log^2(T))$. For exp-concave games, we propose a multi-agent online Newton step with a near-optimal time-average convergence rate of $O(\frac{d\log(T)}{T})$. We also extend it to a multi-agent adaptive online Newton step (\textsf{MA-AdaONS}) that achieves a near-optimal rate of $O(\frac{d\log^2(T)}{T})$. 

\subsection{Single-Agent Learning with Exp-Concave Cost}
We start by recalling the definition of exp-concave functions. 
\begin{definition}\label{def:EC-function}
A function $f: \br^d \mapsto \br$ is $\alpha$-exp-concave if $e^{-\alpha f(\cdot)}$ is concave. 
\end{definition}
Exp-Concave (EC) functions, as a strict subclass of convex functions and as a strict superclass of strongly convex functions, have found applications in many fields, including information theory~\citep{Cover-1999-Elements}, stochastic portfolio theory~\citep{Fernholz-2002-Stochastic}, optimal transport~\citep{Villani-2009-Optimal}, optimization~\citep{Hazan-2014-Logistic, Mahdavi-2015-Lower}, probability theory~\citep{Pal-2018-Exponentially} and statistics~\citep{Juditsky-2008-Learning, Koren-2015-Fast,Yang-2018-Simple}. 

We work with the following model of perfect gradient feedback:\footnote{Can we generalize the aforementioned results to the setting with noisy gradient feedback? We leave the answer to this question to future work.}:
\begin{enumerate}
\item At each round $t$, an exact gradient is observed. That is, we have $\nabla f_t(x^t)$ for all $t \geq 1$.
\item The action set $\XCal$ is bounded by a diameter $D > 0$, i.e., $\|x - x'\| \leq D$ for all $x, x' \in \XCal$.
\end{enumerate}
For this class of cost functions, a lower bound on regret of $\Omega(d\log(T))$ is known \citep[see Section~3 in][]{Abernethy-2008-Optimal}. Two typical examples which admit EC cost functions are online linear regression~\citep{Kivinen-1999-Averaging} and universal portfolio management~\citep{Cover-1991-Universal}. 
\begin{enumerate}
\item \textbf{In universal portfolio management}, we have $f_t(x) = -\log(a_t^\top x)$ where $x \in \Delta^d$ is a probability vector and $a^t \in \br_+^d$ is the $d$-dimensional vector of all $d$ assets' relative prices between period $t$ and period $t-1$. Here, $f_t$ is EC and~\citet{Ordentlich-1998-Cost} established a lower bound of $\Omega(d\log(T))$ for the cumulative regret.
\item \textbf{In online linear regression}, we have  $f_t(x) = (a_t^\top x - b_t)^2$, which is in general not a strongly convex function, but is indeed EC for any $a_t, b_t$. Formally,~\citet{Vovk-1997-Competitive} proved that there exists a randomized strategy of the adversary for choosing the vectors $a_t, b_t$ such that the expected cumulative regret scales as $\Omega(d\log(T))$.
\end{enumerate}
This lower bound was established without making the connectio to the general EC function class. Note also that the $\Omega(d\log(T))$ lower bound is interesting given that $d \geq 1$ does not enter the lower bound for strongly convex functions~\citep{Hazan-2014-Beyond}.

On the upper-bound side, there are many algorithms that achieve the minimax-optimal regret, such as the online Newton step (ONS) of~\citet{Hazan-2007-Logarithmic}. Formally, let $\{f_t\}_{t \geq 1}$ be $\alpha$-exp-concave cost functions and let $G > 0$ be problem parameters satisfying $\|\nabla f_t(x)\| \leq G$ for all $t \geq 1$. Then, the ONS algorithm is given by
\begin{equation}\label{def:ONS}
\begin{array}{ll}
& \eta \leftarrow \tfrac{1}{2}\min\left\{\tfrac{1}{4GD}, \alpha\right\}, \qquad A^{t+1} \leftarrow A^t + \nabla f_t(x^t)\nabla f_t(x^t)^\top,  \\
& x^{t+1} \leftarrow \argmin_{x \in \XCal}\{(x - x^t)^\top \nabla f_t(x^t) + \tfrac{\eta}{2}(x - x^t)^\top A^{t+1}(x - x^t)\}.
\end{array}
\end{equation}
The choice $\frac{1}{2}\min\{\tfrac{1}{4GD}, \alpha\}$ comes from~\citet[Lemma~3]{Hazan-2007-Logarithmic}: if $\|\nabla f(x)\| \leq G$ and $\|x - x'\| \leq D$ for all $x, x' \in \XCal$, the $\alpha$-exp-concave function $f$ satisfies 
\begin{equation}\label{def:EC-inequality}
f(x') \geq f(x) + (x' - x)^\top\nabla f(x) + \tfrac{1}{4}\min\{\tfrac{1}{4GD}, \alpha\}(x' - x)^\top(\nabla f(x)\nabla f(x)^\top)(x' - x). 
\end{equation}
Intuitively, this equation implies that any $\alpha$-exp-concave function can be approximated by a local quadratic function with a matrix $\eta \nabla f(x)\nabla f(x)^\top$ and moreover ONS can exploit such structure. Although ONS suffers from a quadratic dependence on the dimension in terms of per-iteration cost, there has been progress in alleviating this complexity. For example,~\citet{Luo-2016-Efficient} proposed a variant of ONS that attains linear dependence on the dimension using sketching techniques. 
\begin{algorithm}[!t]
\caption{\textsf{AdaONS}($x^1$, $T$)}\label{alg:AdaONS-SA}
\begin{algorithmic}[1]
\STATE \textbf{Input:} initial point $x^1 \in \XCal$ and the total number of rounds $T$. 
\STATE \textbf{Initialization:} $A^1 = I_d$ where $I_d \in \br^{d \times d}$ is an identity matrix and $p_0 = \frac{1}{\log(T+10)}$. 
\FOR{$t = 1, 2, \ldots, T$}
\STATE sample $M^t \sim \textsf{Geometric}(p_0)$. 
\STATE set $\eta^{t+1} \leftarrow \tfrac{1}{\sqrt{1 + \max\{M^1, \ldots, M^t\}}}$. 
\STATE update $A^{t+1} \leftarrow A^t + \nabla f_t(x^t)\nabla f_t(x^t)^\top$. 
\STATE update $x^{t+1} \leftarrow \argmin_{x \in \XCal}\{(x - x^t)^\top \nabla f_t(x^t) + \tfrac{\eta^{t+1}}{2}(x - x^t)^\top A^{t+1} (x - x^t)\}$.  
\ENDFOR
\end{algorithmic}
\end{algorithm}

\subsection{Feasible Single-Agent Online Learning under Exp-Concave Cost}
The scheme of ONS in Eq.~\eqref{def:ONS} is infeasible in practice since it requires prior knowledge of the problem parameters $\alpha$, $G$ and $D$. To address this issue, we invoke the same randomization strategy as described in Algorithm~\ref{alg:AdaOGD-SA}. This results in the single-agent adaptive variant of ONS (cf. Algorithm~\ref{alg:AdaONS-SA}). 

We summarize our main results on the regret minimization properties of the algorithm in the following theorem. 
\begin{theorem}\label{Thm:AdaONS-regret}
Consider an arbitrary fixed sequence of $\alpha$-exp-concave functions $f_1, \ldots, f_T$, where each $f_t$ satisfies $\|\nabla f_t(x)\| \leq G$ for all $t \geq 1$ and $\|x - x'\| \leq D$ for all $x, x' \in \XCal$. If the agent employs Algorithm~\ref{alg:AdaONS-SA}, we have 
\begin{equation*}
\EE[\text{Regret}(T)] \leq \tfrac{D^2}{2}(1 + G^2 e^{\frac{(\max\{8GD, 2\alpha^{-1}\})^2}{\log(T+10)}}) + \tfrac{d\log(TG^2+1)}{2}\sqrt{1 + \log(T+10) + \log(T)\log(T+10)}. 
\end{equation*}
Thus, we have $\EE[\text{Regret}(T)] = O(d\log^2(T))$. 
\end{theorem}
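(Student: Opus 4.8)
The plan is to follow the classical ONS analysis of Eq.~\eqref{def:ONS} but replace the single parameter-dependent step size by the random sequence $\eta^{t+1}$ from Algorithm~\ref{alg:AdaONS-SA}, absorbing the resulting mismatch through the geometric-variable tail estimates of Proposition~\ref{Prop:GRV}. Write $g_t=\nabla f_t(x^t)$, $h_t=x^t-x^\star$ with $x^\star\in\argmin_{x\in\XCal}\sum_t f_t(x)$, $\gamma_0=\tfrac14\min\{\tfrac{1}{4GD},\alpha\}$, and $\|z\|_A^2=z^\top A z$. First I would apply the exp-concavity inequality Eq.~\eqref{def:EC-inequality} with the pair $(x^t,x^\star)$ to obtain the per-round bound $f_t(x^t)-f_t(x^\star)\le g_t^\top h_t-\gamma_0 (g_t^\top h_t)^2$. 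Next, from the first-order optimality condition of the ONS subproblem at $x^{t+1}$ tested against $x^\star$, together with the three-point identity for the $A^{t+1}$-weighted norm and Young's inequality, I would derive $g_t^\top h_t\le \tfrac{\eta^{t+1}}{2}(\|h_t\|_{A^{t+1}}^2-\|h_{t+1}\|_{A^{t+1}}^2)+\tfrac{1}{2\eta^{t+1}} g_t^\top (A^{t+1})^{-1}g_t$. Substituting $A^{t+1}=A^t+g_tg_t^\top$ into the first term and using $(g_t^\top h_t)^2=\|h_t\|_{g_tg_t^\top}^2$, each summand of the regret splits into a telescoping piece $\tfrac{\eta^{t+1}}{2}(\|h_t\|_{A^t}^2-\|h_{t+1}\|_{A^{t+1}}^2)$, a step-size penalty $(\tfrac{\eta^{t+1}}{2}-\gamma_0)(g_t^\top h_t)^2$, and a noise piece $\tfrac{1}{2\eta^{t+1}} g_t^\top (A^{t+1})^{-1}g_t$.

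For the telescoping piece I would sum by parts; the boundary and cross terms carry the factor $\eta^{t+1}-\eta^t$, and since $\max\{M^1,\ldots,M^t\}$ is non-decreasing in $t$ the step sizes obey $\eta^{t+1}\le\eta^t$, so the cross terms are non-positive and can be dropped, leaving at most $\tfrac12\|h_1\|_{A^1}^2\le\tfrac{D^2}{2}$ (here $A^1=I_d$, $\|h_1\|\le D$, and the first step size is at most $1$). This produces the leading ``$1$''. For the penalty piece I would bound $(g_t^\top h_t)^2\le G^2D^2$ and $(\tfrac{\eta^{t+1}}{2}-\gamma_0)_+\le\tfrac12\,\mathbb{I}(\eta^{t+1}>2\gamma_0)$, then note that $\eta^{t+1}>2\gamma_0$ is precisely the event $\max\{M^1,\ldots,M^t\}<(\max\{8GD,2\alpha^{-1}\})^2-1$. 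Summing its probabilities over $t$ gives a geometric series (by independence of the $M^s$), which Proposition~\ref{Prop:GRV} with $p_0=\tfrac{1}{\log(T+10)}$ controls by $e^{(\max\{8GD,2\alpha^{-1}\})^2/\log(T+10)}$ --- the same device used for term \textbf{III} in the proof of Theorem~\ref{Thm:AdaOGD-rate}. In expectation this contributes $\tfrac{G^2D^2}{2}\,e^{(\max\{8GD,2\alpha^{-1}\})^2/\log(T+10)}$, completing the first displayed term.

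For the noise piece I would factor out $\max_{1\le t\le T}\tfrac{1}{\eta^{t+1}}=\sqrt{1+\max_{1\le s\le T}M^s}$ (legitimate since $1/\eta^{t+1}$ is non-decreasing) and bound the remaining sum by the log-determinant lemma: because $A^{t+1}=A^t+g_tg_t^\top$ and $A^1=I_d$, $\sum_{t=1}^T g_t^\top(A^{t+1})^{-1}g_t\le\log\det(A^{T+1})\le d\log(1+TG^2/d)\le d\log(TG^2+1)$ via $\|g_t\|\le G$ and AM--GM on the eigenvalues. Taking expectations, Jensen gives $\EE[\sqrt{1+\max_s M^s}]\le\sqrt{1+\EE[\max_s M^s]}$, and Proposition~\ref{Prop:GRV} bounds $\EE[\max_s M^s]\le(1+\log T)/p_0=\log(T+10)+\log(T)\log(T+10)$, which reproduces the second displayed term. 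Since the only randomness is in the $M^s$ (the gradients are exact), summing the three expected contributions yields the stated bound, and the $O(d\log^2 T)$ conclusion follows because the exponential factor is constant in $T$, the square root is $O(\log T)$, and $d\log(TG^2+1)=O(d\log T)$.

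The hard part will be the penalty piece and the choice of $p_0$ that sits behind it. The classical proof relies on the exact relation $\eta=2\gamma_0$ to cancel the quadratic term $(g_t^\top h_t)^2$ against the growth of $A^{t+1}$; with a random, parameter-free $\eta^{t+1}$ this cancellation is lost in the early rounds where $\eta^{t+1}>2\gamma_0$, and one must show the surviving positive terms sum to a quantity that is $O(1)$ in $T$ and free of $d$. This demands a delicate balance: $p_0$ must be small enough that $\max\{M^1,\ldots,M^t\}$ grows quickly and pushes $\eta^{t+1}$ below the \emph{unknown} threshold $2\gamma_0$ after only $O(1)$ expected rounds, yet large enough that $\EE[\max_s M^s]$, which scales like $(1+\log T)/p_0$, does not inflate the noise term past polylogarithmic order. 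Verifying that $p_0=\tfrac{1}{\log(T+10)}$ simultaneously keeps the penalty constant in $T$ (while exponential in the unknown threshold) and the noise term polylogarithmic is the crux; the remainder is the standard ONS bookkeeping adapted to a time-varying metric.
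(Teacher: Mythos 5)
Your proposal is correct and follows essentially the same route as the paper: the per-round decomposition into a telescoping piece, a step-size penalty $(\tfrac{\eta^{t+1}}{2}-\gamma_0)_+(g_t^\top h_t)^2$, and a noise piece is exactly the content of the paper's Lemma~\ref{Lemma:AdaONS} combined with the monotonicity $\eta^{t+1}\le\eta^t$, and your treatment of the three resulting terms (indicator bound plus the geometric tail estimate of Proposition~\ref{Prop:GRV} for the penalty; the log-determinant bound times $\sqrt{1+\max_s M^s}$ with Jensen and Proposition~\ref{Prop:GRV} for the noise) reproduces the paper's argument with the same constants. The only difference is organizational---you inline the descent inequality and telescope via Abel summation rather than stating it as a separate lemma---which does not change the substance of the proof.
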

\begin{remark}
Theorem~\ref{Thm:AdaONS-regret} demonstrates that Algorithm~\ref{alg:AdaONS-SA} achieves a near-optimal regret since the upper bound matches the lower bound up to a log factor; indeed, the lower bound of $\Omega(d\log(T))$ has been established for online linear regression and universal portfolio management. Further, Algorithm~\ref{alg:AdaONS-SA} dynamically adjusts $\eta^{t+1}$ without any prior knowledge of problem parameters. 
\end{remark}
To prove Theorem~\ref{Thm:AdaONS-regret}, we again make use of a key descent inequality. 
\begin{lemma}\label{Lemma:AdaONS}
Consider an arbitrary fixed sequence of $\alpha$-exp-concave functions $f_1, \ldots, f_T$, where each $f_t$ satisfies $\|\nabla f_t(x)\| \leq G$ for all $t \geq 1$ and $\|x - x'\| \leq D$ for all $x, x' \in \XCal$. Letting the iterates $\{x^t\}_{t \geq 1}$ be generated by Algorithm~\ref{alg:AdaONS-SA}, we have
\begin{eqnarray*}
\lefteqn{\sum_{t=1}^T f_t(x^t) - \sum_{t=1}^T f_t(x) \leq \tfrac{\eta^1}{2}(x^1 - x)^\top A^1(x^1 - x) + \tfrac{1}{2}\left(\sum_{t=1}^T \tfrac{1}{\eta^{t+1}}\nabla f_t(x^t)^\top (A^{t+1})^{-1}\nabla f_t(x^t)\right)} \\ 
& & + \sum_{t=1}^T (x^t - x)^\top \left(\tfrac{\eta^{t+1}}{2} A^{t+1} - \tfrac{\eta^t}{2}A^t - \tfrac{1}{4}\min\{\tfrac{1}{4GD}, \alpha\}\nabla f_t(x^t)\nabla f_t(x^t)^\top\right)(x^t - x), \quad \textnormal{for all } x \in \XCal. 
\end{eqnarray*}
\end{lemma}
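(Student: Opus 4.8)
The plan is to combine the second-order lower bound for exp-concave functions (Eq.~\eqref{def:EC-inequality}) with a generalized Pythagorean inequality for the weighted projection that defines the ONS update, and then to sum a potential function that telescopes only approximately because both $\eta^t$ and $A^t$ vary with $t$. First I would recast the update in Algorithm~\ref{alg:AdaONS-SA} as a projection: writing $g_t = \nabla f_t(x^t)$ and completing the square, $x^{t+1}$ is the projection onto $\XCal$, in the norm $\|z\|_{A^{t+1}}^2 = z^\top A^{t+1} z$, of the unconstrained Newton point $y^{t+1} = x^t - (\eta^{t+1})^{-1}(A^{t+1})^{-1} g_t$. Since $A^{t+1} \succ 0$, the Pythagorean inequality for this projection gives $\|x^{t+1} - x\|_{A^{t+1}}^2 \leq \|y^{t+1} - x\|_{A^{t+1}}^2$ for every $x \in \XCal$; expanding the right-hand side and rearranging yields the one-step bound
\begin{equation*}
(x^t - x)^\top g_t \leq \tfrac{\eta^{t+1}}{2}\|x^t - x\|_{A^{t+1}}^2 - \tfrac{\eta^{t+1}}{2}\|x^{t+1} - x\|_{A^{t+1}}^2 + \tfrac{1}{2\eta^{t+1}} g_t^\top (A^{t+1})^{-1} g_t.
\end{equation*}

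Second, I would apply the exp-concavity inequality~\eqref{def:EC-inequality} at the pair $(x^t, x)$, which after rearranging reads $f_t(x^t) - f_t(x) \leq (x^t - x)^\top g_t - \tfrac{1}{4}\min\{\tfrac{1}{4GD}, \alpha\}(x^t - x)^\top (g_t g_t^\top)(x^t - x)$. Substituting the one-step bound on $(x^t - x)^\top g_t$ controls the linear term, and the curvature term is carried along unchanged.

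Third, I would sum over $t = 1, \ldots, T$ and handle the telescoping. Introducing the potential $\Phi_t = \tfrac{\eta^t}{2}(x^t - x)^\top A^t (x^t - x)$, the potential drop appearing in the one-step bound equals $(\Phi_t - \Phi_{t+1}) + \tfrac{1}{2}(x^t - x)^\top(\eta^{t+1} A^{t+1} - \eta^t A^t)(x^t - x)$. The first piece telescopes to $\Phi_1 - \Phi_{T+1} \leq \Phi_1 = \tfrac{\eta^1}{2}(x^1 - x)^\top A^1 (x^1 - x)$, where $\Phi_{T+1} \geq 0$ because $A^{T+1} \succ 0$ and $\eta^{T+1} > 0$. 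The residual $\eta^{t+1} A^{t+1} - \eta^t A^t$ terms then combine with the curvature terms to produce exactly the final sum in the statement, while the quadratic penalties sum to the stated $\tfrac{1}{2}\sum_{t=1}^T (\eta^{t+1})^{-1} g_t^\top (A^{t+1})^{-1} g_t$.

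The main obstacle is precisely this inexact telescoping. In the standard constant-stepsize ONS analysis the potential collapses cleanly, but here the adaptive, time-varying $\eta^t$ forces us to isolate the $\tfrac{\eta^{t+1}}{2} A^{t+1} - \tfrac{\eta^t}{2} A^t$ residual rather than absorb it into the telescoping sum. Keeping this residual explicit is in fact the whole purpose of the lemma, since it is exactly the term that the subsequent proof of Theorem~\ref{Thm:AdaONS-regret} controls in expectation using the geometric random variables $M^t$.
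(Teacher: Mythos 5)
Your proposal is correct and takes essentially the same approach as the paper's proof. The only difference is cosmetic: you obtain the one-step bound on $(x^t - x)^\top \nabla f_t(x^t)$ via the projection interpretation of the update and the generalized Pythagorean inequality in the $A^{t+1}$-norm, whereas the paper derives the identical inequality directly from the first-order optimality condition of the $\argmin$ together with Young's inequality; the subsequent application of Eq.~\eqref{def:EC-inequality} and the approximate telescoping that keeps the residual $\tfrac{\eta^{t+1}}{2}A^{t+1} - \tfrac{\eta^t}{2}A^t$ explicit mirror the paper step for step.
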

The proofs of Lemma~\ref{Lemma:AdaONS} and Theorem~\ref{Thm:AdaONS-regret} are given in Appendix~\ref{app:AdaONS} and~\ref{app:AdaONS-regret}.

\subsection{Exp-Concave (EC) Games}
We define a class of exp-concave (EC) games via a game-theoretic generalization of the optimization framework in the previous section.  Letting $f$ be $\alpha$-exp-concave with $\|\nabla f(x)\| \leq G$ and $\|x - x'\| \leq D$ for all $x, x' \in \XCal$,~\citet[Lemma~3]{Hazan-2007-Logarithmic} guarantees that 
\begin{equation*}
f(x') \geq f(x) + (x' - x)^\top\nabla f(x) + \tfrac{1}{4}\min\{\tfrac{1}{4GD}, \alpha\}(x' - x)^\top(\nabla f(x)\nabla f(x)^\top)(x' - x). 
\end{equation*}
Equivalently, we have
\begin{equation*}
(x' - x)^\top(\nabla f(x') - \nabla f(x)) \geq \tfrac{1}{4}\min\{\tfrac{1}{4GD}, \alpha\}(x' - x)^\top(\nabla f(x')\nabla f(x')^\top + \nabla f(x)\nabla f(x)^\top)(x' - x). 
\end{equation*}
This leads to the following formal definition in which $v(\cdot)$ replaces $\nabla f(\cdot)$.   
\begin{definition}\label{def:EC-game}
A continuous game $\GCal$ is said to be $\alpha$-exp-concave if we have $\langle x' - x, v(x') - v(x)\rangle \geq \tfrac{1}{4}\min\{\tfrac{1}{4GD}, \alpha\}(\sum_{i = 1}^N (x'_i - x_i)^\top(v_i(x')v_i(x')^\top + v_i(x)v_i(x)^\top)(x'_i - x_i))$ for all $x, x' \in \XCal$. 
\end{definition}
Since EC games satisfy the individual convexity condition, we have the existence of at least one Nash equilibrium. However, multiple Nash equilibria might exist for EC games. For example, letting $d \geq 2$, we consider a single-agent game with the cost function $f(x) =\frac{1}{2}(a^\top x)^2$ and $\XCal = \{x \in \br^d: \|x\| \leq 1\}$. It is easy to verify that this is a $(2/\|a\|^2)$-exp-concave game and any $x \in \XCal$ satisfying $a^\top x = 0$ will be a Nash equilibrium. It is thus natural to ask how to measure the optimality of a point $\hat{x} \in \XCal$. Proposition~\ref{Prop:VC-games} inspires us to use a gap function, $\textsc{gap}(\cdot): \XCal \mapsto \br_+$, given by 
\begin{equation}\label{def:gap}
\textsc{gap}(\hat{x}) = \sup_{x \in \XCal} \ (\hat{x} - x)^\top v(x). 
\end{equation}
The above gap function is well defined and is nonnegative for all $\hat{x} \in \XCal$, given that at least one Nash equilibrium exists in EC games. Note that such a function has long been a standard example in the optimization literature~\citep{Facchinei-2007-Finite, Mertikopoulos-2019-Learning}. 

All strongly monotone games are clearly EC games if $\|v(x)\| \leq G$ for all $x \in \XCal$. Indeed, we have $\sum_{i = 1}^N (x'_i - x_i)^\top(v_i(x')v_i(x')^\top + v_i(x)v_i(x)^\top)(x'_i - x_i) \leq 2G^2\|x' - x\|^2$ and
\begin{equation*}
\langle x' - x, v(x') - v(x)\rangle \geq \beta\|x' - x\|^2 \geq \tfrac{\beta}{2G^2}\left(\sum_{i = 1}^N (x'_i - x_i)^\top(v_i(x')v_i(x')^\top + v_i(x)v_i(x)^\top)(x'_i - x_i)\right). 
\end{equation*}
This implies that a $\beta$-strongly monotone game is $\alpha$-exp-concave if we have $\alpha \leq \frac{2\beta}{G^2}$. More generally, we remark that strongly monotone games are a rich class of games that arise in many real-world application problems~\citep{Bravo-2018-Bandit, Mertikopoulos-2019-Learning, Lin-2021-Optimal}. Moreover, there are also applications that can be cast as EC games rather than strongly monotone games, such as empirical risk minimization~\citep{Koren-2015-Fast,Yang-2018-Simple} and universal portfolio management~\citep{Cover-1991-Universal, Fernholz-2002-Stochastic}. 

\subsection{Multi-Agent Online Learning in EC Games}
We start by extending the single-agent ONS algorithm in Eq.~\eqref{def:ONS} to the multi-agent online learning in EC games (cf.\ Definition~\ref{def:EC-game}). We again work with a model of exact gradient feedback:
\begin{enumerate}
\item At each round $t$, an exact gradient is observed. That is, we have $v(x^t)$ for all $t \geq 1$.
\item The action set $\XCal = \Pi_{i=1}^N \XCal_i$ is bounded by a diameter $D > 0$, i.e., $\|x - x'\| \leq D$ for all $x, x' \in \XCal$.
\end{enumerate}
Letting $x_i^1 \in \XCal_i$ for all $i \in \NCal$, the multi-agent version of ONS performs the following step at each round ($A_i^1 = I_{d_i}$ for all $i \in \NCal$): 
\begin{equation}\label{def:MA-ONS}
\begin{array}{ll}
& \eta_i \leftarrow \tfrac{1}{2}\min\left\{\tfrac{1}{4GD}, \alpha\right\}, \quad A_i^{t+1} \leftarrow A_i^t + v_i(x^t)v_i(x^t)^\top, \\
& x_i^{t+1} \leftarrow \argmin_{x_i \in \XCal_i}\{(x_i - x_i^t)^\top v_i(x^t) + \tfrac{\eta_i}{2}(x_i - x_i^t)^\top A_i^{t+1}(x_i - x_i^t)\}. 
\end{array}
\end{equation}
We see from Eq.~\eqref{def:MA-ONS} that our multi-agent learning algorithm is a generalization of single-agent ONS in Eq.~\eqref{def:ONS}. Notably, it is a decentralized algorithm since each agent does not need to know any other agents' gradients. In the following theorem, we summarize our main results on the time-average convergence rate using the gap function in Eq.~\eqref{def:gap}. 
\begin{theorem}\label{Thm:ONS-rate}
Suppose that a continuous game $\GCal$ is $\alpha$-exp-concave and let $G, D > 0$ be problem parameters satisfying $\|v(x)\| \leq G$ and $\|x - x'\| \leq D$ for all $x, x' \in \XCal$. If all agents agree to employ multi-agent ONS in Eq.~\eqref{def:MA-ONS}, we have 
\begin{equation*}
\textsc{gap}(\bar{x}^T) \leq \tfrac{\alpha D^2}{4T} + \tfrac{d\log(TG^2+1)}{T}\max\left\{4GD, \tfrac{1}{\alpha}\right\},
\end{equation*}
where $\bar{x}^T = \frac{1}{T}\sum_{t=1}^T x^t$ denotes a time-average iterate. Thus, we have $\textsc{gap}(\bar{x}^T) = O(\frac{d\log(T)}{T})$. 
\end{theorem}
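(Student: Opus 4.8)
The plan is to bound the cumulative ``game regret'' $\sum_{t=1}^T (x^t - x)^\top v(x)$ against an arbitrary comparator $x \in \XCal$, then divide by $T$ and take the supremum over $x$. The starting observation is that no Jensen-type step is needed: since $v(x)$ depends only on the comparator and not on $t$, we have $(\bar x^T - x)^\top v(x) = \tfrac1T\sum_{t=1}^T (x^t - x)^\top v(x)$, so $\textsc{gap}(\bar x^T) = \sup_{x\in\XCal}\tfrac1T\sum_{t=1}^T (x^t - x)^\top v(x)$ exactly, and it suffices to bound $\sum_{t=1}^T (x^t - x)^\top v(x)$ uniformly in $x$.

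The second step is the per-agent ONS descent inequality. From the first-order (variational-inequality) optimality condition for the update in Eq.~\eqref{def:MA-ONS}, the three-point identity for the $A_i^{t+1}$-weighted norm, and Young's inequality, I would derive for each agent $i$, with $\eta = \tfrac12\min\{\tfrac1{4GD},\alpha\}$,
\[
v_i(x^t)^\top(x_i^t - x_i) \;\le\; \tfrac{\eta}{2}\|x_i^t - x_i\|_{A_i^{t+1}}^2 - \tfrac{\eta}{2}\|x_i^{t+1} - x_i\|_{A_i^{t+1}}^2 + \tfrac{1}{2\eta}\,v_i(x^t)^\top (A_i^{t+1})^{-1}v_i(x^t).
\]
Using $A_i^{t+1} = A_i^t + v_i(x^t)v_i(x^t)^\top$ to split $\|x_i^t - x_i\|_{A_i^{t+1}}^2 = \|x_i^t - x_i\|_{A_i^t}^2 + (v_i(x^t)^\top(x_i^t - x_i))^2$, the leading terms telescope after summing over $t$ (since $A_i^1 = I_{d_i}$), leaving a boundary term $\tfrac{\eta}{2}\|x_i^1 - x_i\|^2$, an ``excess'' quadratic term $\tfrac{\eta}{2}\sum_t (v_i(x^t)^\top(x_i^t - x_i))^2$, and the Newton term. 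Summing over $i\in\NCal$ then bounds $\sum_t v(x^t)^\top(x^t - x)$.

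The third step converts the $v(x^t)$-regret into the $v(x)$-gap using Definition~\ref{def:EC-game}. Applying the EC-game inequality to the pair $(x^t, x)$ and discarding the nonnegative $v_i(x)v_i(x)^\top$ contribution yields
\[
(x^t - x)^\top v(x) \;\le\; (x^t - x)^\top v(x^t) - \tfrac{\eta}{2}\sum_{i=1}^N (v_i(x^t)^\top(x_i^t - x_i))^2,
\]
because $\tfrac14\min\{\tfrac1{4GD},\alpha\} = \tfrac{\eta}{2}$ by the choice of $\eta$. Summing over $t$ and substituting the Step-two bound, the excess quadratic term from the telescoping cancels exactly against the quadratic term produced here, leaving only $\tfrac{\eta}{2}\|x^1 - x\|^2 \le \tfrac{\eta}{2}D^2$ plus $\tfrac1{2\eta}\sum_i\sum_t v_i(x^t)^\top(A_i^{t+1})^{-1}v_i(x^t)$. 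I would then close with the log-determinant lemma, $\sum_t v_i(x^t)^\top(A_i^{t+1})^{-1}v_i(x^t) \le \log\det A_i^{T+1} \le d_i\log(1+TG^2)$ via $\|v_i(x^t)\|\le G$ and a trace/AM--GM bound, summing to $d\log(1+TG^2)$ over all agents. Finally substituting $\tfrac{\eta}{2} \le \tfrac\alpha4$ and $\tfrac1{2\eta} = \max\{4GD,\tfrac1\alpha\}$ and dividing by $T$ gives the stated bound.

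The main obstacle, and the crux of the argument, is engineering the exact cancellation of the two quadratic terms: the excess $\tfrac{\eta}{2}\sum_i(v_i(x^t)^\top(x_i^t - x_i))^2$ generated by the evolving $A_i^{t+1}$-geometry in the telescoping, and the identically shaped term supplied by the EC-game definition. This cancellation is what prevents any uncontrolled $O(\sum_t\|x^t - x\|^2)$ growth, and it is possible only because the step size is tuned so that $\tfrac{\eta}{2}$ equals precisely the curvature constant $\tfrac14\min\{\tfrac1{4GD},\alpha\}$ of Definition~\ref{def:EC-game}; this is the multi-agent analogue of why single-agent ONS works for exp-concave losses. A secondary subtlety is that the gap evaluates $v$ at the comparator rather than at $x^t$, so the EC inequality (a strengthening of monotonicity) is exactly what lets us pass from observed gradients to comparator gradients uniformly over $x\in\XCal$.
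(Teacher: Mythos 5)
Your proposal is correct and follows essentially the same route as the paper's proof: the paper's Lemma~\ref{Lemma:AdaONS-MA} is exactly your Step-two descent inequality combined with the exp-concave game inequality (with the $v_i(x)v_i(x)^\top$ term discarded), and in the paper's proof of Theorem~\ref{Thm:ONS-rate} the cancellation you describe appears as the identity $\tfrac{\eta_i}{2}A_i^{t+1} - \tfrac{\eta_i}{2}A_i^t - \tfrac{1}{4}\min\{\tfrac{1}{4GD},\alpha\}v_i(x^t)v_i(x^t)^\top = 0$ under the step size $\eta_i = \tfrac12\min\{\tfrac{1}{4GD},\alpha\}$, followed by the same log-determinant bound (\citet[Lemma~11]{Hazan-2007-Logarithmic}) and the same linearity-in-$t$ observation for passing to $\bar{x}^T$.
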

\begin{remark}
Theorem~\ref{Thm:ONS-rate} shows that the multi-agent ONS algorithm can achieve a near-optimal convergence rate in EC games; indeed, the convergence rate of multi-agent ONS matches up to a log factor the lower bound of $\Omega(\frac{d}{T})$ proved in~\citet{Mahdavi-2015-Lower} for exp-concave optimization. 
\end{remark}
The proof of Theorem~\ref{Thm:ONS-rate} is again based on a descent inequality. 
\begin{lemma}\label{Lemma:AdaONS-MA}
Suppose that a continuous game $\GCal$ is $\alpha$-exp-concave and let $G, D > 0$ be problem parameters satisfying $\|v(x)\| \leq G$ and $\|x - x'\| \leq D$ for all $x, x' \in \XCal$. Letting the iterates $\{x^t\}_{t \geq 1}$ be generated by the multi-agent ONS in Eq.~\eqref{def:MA-ONS}, we have
\begin{eqnarray*}
\lefteqn{\sum_{t = 1}^T (x^t - x)^\top v(x) \leq \sum_{i=1}^N \tfrac{\eta_i}{2} (x_i^1 - x_i)^\top A_i^1(x_i^1 - x_i) + \tfrac{1}{2}\left(\sum_{t = 1}^T \sum_{i = 1}^N \tfrac{1}{\eta_i}v_i(x^t)^\top (A_i^{t+1})^{-1}v_i(x^t)\right)} \\ 
& & + \sum_{t = 1}^T \sum_{i=1}^N (x_i^t - x_i)^\top\left(\tfrac{\eta_i}{2}A_i^{t+1} - \tfrac{\eta_i}{2}A_i^t - \tfrac{1}{4}\min\{\tfrac{1}{4GD}, \alpha\} v_i(x^t)v_i(x^t)^\top\right)(x_i^t - x_i). 
\end{eqnarray*}
\end{lemma}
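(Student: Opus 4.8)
The plan is to reduce the claimed inequality to $N$ independent single-agent ONS analyses (exactly the argument behind Lemma~\ref{Lemma:AdaONS}, now with the fixed per-agent step $\eta_i$) and then glue the per-agent bounds together using the defining inequality of an $\alpha$-exp-concave game (Definition~\ref{def:EC-game}) to replace the observed gradients $v(x^t)$ by the competitor's gradient $v(x)$. Fix an arbitrary $x\in\XCal$ and an agent $i\in\NCal$. First I would rewrite the update in Eq.~\eqref{def:MA-ONS} as a generalized (Mahalanobis) projection: completing the square in $x_i$ shows, up to an additive constant, that
\[
x_i^{t+1} = \argmin_{x_i \in \XCal_i} \tfrac{\eta_i}{2}\left\|x_i - \left(x_i^t - \tfrac{1}{\eta_i}(A_i^{t+1})^{-1} v_i(x^t)\right)\right\|_{A_i^{t+1}}^2 ,
\]
i.e.\ the projection of $y_i^{t+1} := x_i^t - \tfrac{1}{\eta_i}(A_i^{t+1})^{-1} v_i(x^t)$ onto $\XCal_i$ in the norm $\|z\|_A^2 := z^\top A z$ with $A = A_i^{t+1}\succ 0$.

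Next I would invoke the generalized Pythagorean inequality for projections onto a convex set, $\|x_i^{t+1} - x_i\|_{A_i^{t+1}}^2 \leq \|y_i^{t+1} - x_i\|_{A_i^{t+1}}^2$, and expand the right-hand side, which yields the one-step bound
\[
v_i(x^t)^\top (x_i^t - x_i) \leq \tfrac{\eta_i}{2}\left(\|x_i^t - x_i\|_{A_i^{t+1}}^2 - \|x_i^{t+1}-x_i\|_{A_i^{t+1}}^2\right) + \tfrac{1}{2\eta_i}\, v_i(x^t)^\top (A_i^{t+1})^{-1} v_i(x^t).
\]
Then I would sum over $t$, using the rank-one update $A_i^{t+1}=A_i^t+v_i(x^t)v_i(x^t)^\top$ to split $\|x_i^t-x_i\|_{A_i^{t+1}}^2 = \|x_i^t-x_i\|_{A_i^t}^2 + (x_i^t-x_i)^\top v_i(x^t)v_i(x^t)^\top(x_i^t-x_i)$. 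The genuine squared-norm differences $\|x_i^t-x_i\|_{A_i^t}^2 - \|x_i^{t+1}-x_i\|_{A_i^{t+1}}^2$ telescope; discarding the nonnegative terminal term $\|x_i^{T+1}-x_i\|_{A_i^{T+1}}^2\ge 0$ and summing over $i$ (using $\sum_i v_i(x^t)^\top(x_i^t-x_i)=(x^t-x)^\top v(x^t)$) leaves a bound on $\sum_{t}(x^t-x)^\top v(x^t)$ whose leading terms are exactly $\sum_i\tfrac{\eta_i}{2}(x_i^1-x_i)^\top A_i^1(x_i^1-x_i)$, the curvature term $\tfrac12\sum_{t,i}\tfrac{1}{\eta_i}v_i(x^t)^\top(A_i^{t+1})^{-1}v_i(x^t)$, and a residual $\sum_{t,i}\tfrac{\eta_i}{2}(x_i^t-x_i)^\top v_i(x^t)v_i(x^t)^\top(x_i^t-x_i)$.

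The final step converts the left-hand side from $v(x^t)$ to $v(x)$. Applying Definition~\ref{def:EC-game} with $x'=x^t$ gives, for each $t$,
\[
(x^t-x)^\top v(x) \leq (x^t-x)^\top v(x^t) - \tfrac14\min\{\tfrac{1}{4GD},\alpha\}\sum_{i=1}^N (x_i^t-x_i)^\top\left(v_i(x^t)v_i(x^t)^\top + v_i(x)v_i(x)^\top\right)(x_i^t-x_i).
\]
Summing over $t$ and substituting the previous bound, the $v_i(x^t)v_i(x^t)^\top$ quadratic forms merge: recalling $\tfrac{\eta_i}{2}v_i(x^t)v_i(x^t)^\top=\tfrac{\eta_i}{2}A_i^{t+1}-\tfrac{\eta_i}{2}A_i^t$, the residual above combines with the $-\tfrac14\min\{\tfrac{1}{4GD},\alpha\}v_i(x^t)v_i(x^t)^\top$ contribution to reproduce precisely the matrix $\tfrac{\eta_i}{2}A_i^{t+1}-\tfrac{\eta_i}{2}A_i^t-\tfrac14\min\{\tfrac{1}{4GD},\alpha\}v_i(x^t)v_i(x^t)^\top$ appearing in the claim. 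The only leftover is $-\tfrac14\min\{\tfrac{1}{4GD},\alpha\}\sum_{t,i}(x_i^t-x_i)^\top v_i(x)v_i(x)^\top(x_i^t-x_i)$, which is nonpositive since $v_i(x)v_i(x)^\top\succeq 0$, and dropping it preserves the $\leq$ and gives the stated inequality.

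I expect the main obstacle to be the bookkeeping of the telescoping across the time-varying matrices $A_i^{t+1}$: one must cleanly separate the matrix-increment contribution $(x_i^t-x_i)^\top v_i(x^t)v_i(x^t)^\top(x_i^t-x_i)$ from the truly telescoping differences, and then carry out the $v(x^t)\!\to\!v(x)$ substitution via Definition~\ref{def:EC-game} so that the two $v_i(x^t)v_i(x^t)^\top$ forms fuse exactly into the stated matrix while the $v_i(x)v_i(x)^\top$ form is discarded. Everything else, the generalized Pythagorean step and the per-agent algebra, is a direct multi-agent replay of the single-agent analysis already underlying Lemma~\ref{Lemma:AdaONS}.
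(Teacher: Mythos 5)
Your proposal is correct and takes essentially the same route as the paper's proof: the paper derives the identical one-step bound from the first-order optimality condition of the update together with a Young-type inequality in the $A_i^{t+1}$-norm (equivalent to your completing-the-square/generalized-Pythagoras view), then shifts $A_i^{t+1}$ to $A_i^t$ via the rank-one increment, sums over agents and time with the same telescoping, applies Definition~\ref{def:EC-game} to pass from $v(x^t)$ to $v(x)$, and drops the positive-semidefinite $v_i(x)v_i(x)^\top$ contribution exactly as you do.
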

See Appendices~\ref{app:AdaONS-MA} and~\ref{app:ONS-rate} for the proofs of Lemma~\ref{Lemma:AdaONS-MA} and Theorem~\ref{Thm:ONS-rate}. 
\begin{algorithm}[!t]
\caption{\textsf{MA-AdaONS}($x_1^1, x_2^1, \ldots, x_N^1, T$)}\label{alg:AdaONS-MA}
\begin{algorithmic}[1]
\STATE \textbf{Input:} initial points $x_i^1 \in \XCal_i$ for all $i \in \NCal$ and the total number of rounds $T$. 
\STATE \textbf{Initialization:} $A_i^1 = I_{d_i}$ where $I_{d_i} \in \br^{d_i \times d_i}$ is an identity matrix and $p_0 = \frac{1}{\log(T+10)}$. 
\FOR{$t = 1, 2, \ldots, T$}
\FOR{$i = 1, 2, \ldots, N$}
\STATE sample $M_i^t \sim \textsf{Geometric}(p_0)$. 
\STATE set $\eta_i^{t+1} \leftarrow \tfrac{1}{\sqrt{1 + \max\{M_i^1, \ldots, M_i^t\}}}$. 
\STATE update $A_i^{t+1} \leftarrow A_i^t +v_i(x^t) v_i(x^t)^\top$. 
\STATE update $x_i^{t+1} \leftarrow \argmin_{x_i \in \XCal_i}\{(x_i - x_i^t)^\top v_i(x^t) + \tfrac{\eta_i^{t+1}}{2}(x_i - x_i^t)^\top A_i^{t+1}(x_i - x_i^t)\}$. 
\ENDFOR
\ENDFOR
\end{algorithmic}
\end{algorithm}
\subsection{Feasible Multi-Agent Online Learning in EC Games}
We extend Algorithm~\ref{alg:AdaONS-SA} to multi-agent learning in EC games, showing that our adaptive variant of ONS is not only feasible but achieves a near-optimal convergence rate in terms of a gap function. 

By employing a randomization strategy, we obtain an algorithm that is adaptive to exp-concavity parameter and other problem parameters. All the other updates are analogous to Algorithm~\ref{alg:AdaONS-SA} and the overall multi-agent framework, shown in Algorithm~\ref{alg:AdaONS-MA}, is decentralized.  we summarize our main results on the time-average convergence rate of Algorithm~\ref{alg:AdaONS-MA} using the gap function in Eq.~\eqref{def:gap}. 
\begin{theorem}\label{Thm:AdaONS-rate}
Suppose that a continuous game $\GCal$ is $\alpha$-exp-concave and let $G, D > 0$ be problem parameters satisfying $\|v(x)\| \leq G$ and $\|x - x'\| \leq D$ for all $x, x' \in \XCal$. If all agents agree to employ Algorithm~\ref{alg:AdaONS-MA}, we have 
\begin{equation*}
\EE[\textsc{gap}(\bar{x}^T)] \leq \tfrac{D^2}{2T}(1 + G^2 e^{\frac{(\max\{8GD, 2\alpha^{-1}\})^2}{\log(T+10)}}) + \tfrac{d\log(TG^2+1)}{2T}\sqrt{1 + \log(T+10) + \log(T)\log(T+10)},
\end{equation*}
where $\bar{x}^T = \frac{1}{T}\sum_{t=1}^T x^t$ denotes a time-average iterate.
Thus, we have $\EE[\textsc{gap}(\bar{x}^T)] = O(\frac{d\log^2(T)}{T})$. 
\end{theorem}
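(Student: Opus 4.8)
The plan is to mirror the proof of Theorem~\ref{Thm:ONS-rate} for the non-adaptive multi-agent ONS, replacing the fixed steps by the random, time-varying steps of \textsf{MA-AdaONS} and controlling the resulting discrepancy probabilistically, exactly as in the single-agent analysis behind Theorem~\ref{Thm:AdaONS-regret}. First I would establish the adaptive analog of the descent inequality in Lemma~\ref{Lemma:AdaONS-MA}: carrying out the same telescoping argument but tracking the per-agent time-varying steps $\eta_i^{t+1}$ (as in the single-agent Lemma~\ref{Lemma:AdaONS}) yields, for every $x \in \XCal$,
\[
\sum_{t=1}^T (x^t - x)^\top v(x) \leq \sum_{i=1}^N \tfrac{\eta_i^1}{2}(x_i^1 - x_i)^\top A_i^1(x_i^1 - x_i) + \tfrac12\sum_{t=1}^T\sum_{i=1}^N \tfrac{1}{\eta_i^{t+1}}v_i(x^t)^\top (A_i^{t+1})^{-1}v_i(x^t) + \sum_{t=1}^T\sum_{i=1}^N (x_i^t - x_i)^\top\left(\tfrac{\eta_i^{t+1}}{2}A_i^{t+1} - \tfrac{\eta_i^t}{2}A_i^t - \tfrac14\min\{\tfrac{1}{4GD},\alpha\}v_i(x^t)v_i(x^t)^\top\right)(x_i^t - x_i).
\]
Since $(\bar{x}^T - x)^\top v(x) = \tfrac1T\sum_{t=1}^T (x^t - x)^\top v(x)$, dividing by $T$, taking the supremum over $x \in \XCal$, and then taking expectations reduces the theorem to bounding the expectation of each of the three terms, uniformly in $x$.

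The first two terms are routine. With $A_i^1 = I_{d_i}$ and $\eta_i^1 = 1$ (the initialization), the initial term is at most $\tfrac12\sum_i \|x_i^1 - x_i\|^2 = \tfrac12\|x^1 - x\|^2 \leq \tfrac{D^2}{2}$, contributing $\tfrac{D^2}{2T}$. For the second term I would use $\tfrac{1}{\eta_i^{t+1}} = \sqrt{1 + \max\{M_i^1,\ldots,M_i^t\}} \leq \sqrt{1 + \max_{1\leq s\leq T}M_i^s}$, pull this per-agent maximum out of the sum over $t$, and apply the standard ONS log-determinant telescoping $\sum_{t=1}^T v_i(x^t)^\top (A_i^{t+1})^{-1}v_i(x^t) \leq d_i\log(TG^2+1)$. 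Summing over agents and using Jensen's inequality together with Proposition~\ref{Prop:GRV} (each per-agent maximum is over only $T$ geometric variables) gives the bound $\tfrac{d\log(TG^2+1)}{2T}\sqrt{1 + \log(T+10) + \log(T)\log(T+10)}$; it is precisely this agent-wise bookkeeping that produces $\log(T)$ rather than $\log(NT)$.

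The crux is the third, error term, which is where feasibility (the unknown $\alpha,G,D$) is paid for. I would first rewrite $\tfrac{\eta_i^{t+1}}{2}A_i^{t+1} - \tfrac{\eta_i^t}{2}A_i^t = \tfrac{\eta_i^{t+1} - \eta_i^t}{2}A_i^t + \tfrac{\eta_i^{t+1}}{2}v_i(x^t)v_i(x^t)^\top$; since $\eta_i^{t+1}$ is nonincreasing in $t$ and $A_i^t \succeq 0$, the first piece is negative semidefinite and can be dropped. What remains is $\sum_{t,i}\bigl(\tfrac{\eta_i^{t+1}}{2} - \tfrac14\min\{\tfrac{1}{4GD},\alpha\}\bigr)\bigl((x_i^t - x_i)^\top v_i(x^t)\bigr)^2$, whose coefficient is positive only when $\eta_i^{t+1} \geq \tfrac12\min\{\tfrac{1}{4GD},\alpha\}$, i.e.\ when $\max\{M_i^1,\ldots,M_i^t\} \leq (\max\{8GD,2\alpha^{-1}\})^2 - 1$. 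Bounding the quadratic form by $\|x_i^t - x_i\|^2\|v_i(x^t)\|^2 \leq D_i^2 G^2$ and the active coefficient by $\tfrac12$, this term is at most $\tfrac{G^2}{2}\sum_i D_i^2 \sum_t \mathbb{I}\bigl(\eta_i^{t+1} \geq \tfrac12\min\{\tfrac{1}{4GD},\alpha\}\bigr)$. Here the main obstacle is to avoid an extraneous factor of $N$: I would keep the decomposition agent-wise, apply Proposition~\ref{Prop:GRV} to each agent separately so that $\sum_t \PP(\cdots) \leq e^{(\max\{8GD,2\alpha^{-1}\})^2/\log(T+10)}$, and then use that the product structure $\XCal = \prod_i \XCal_i$ gives $\sum_i D_i^2 = D^2$. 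This produces exactly $\tfrac{D^2 G^2}{2T}e^{(\max\{8GD,2\alpha^{-1}\})^2/\log(T+10)}$.

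Summing the three bounds, each of which is uniform in $x$ (they involve only the diameters $D_i, D$ and the gradient bound $G$), and taking expectations gives the claimed estimate. I expect the delicate points to be (i) verifying the adaptive descent inequality in the multi-agent setting, where the per-agent matrices $A_i^t$ and steps $\eta_i^{t+1}$ must telescope cleanly even though all agents' updates are coupled through $v(\cdot)$, and (ii) the agent-wise control of the error term, where the identity $\sum_i D_i^2 = D^2$ is essential so as to recover the single-agent bound of Theorem~\ref{Thm:AdaONS-regret} divided by $T$ rather than incurring an extra $N$.
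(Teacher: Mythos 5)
Your proposal is correct, and it follows the same skeleton as the paper's proof: the adaptive descent inequality (the paper's Eq.~\eqref{inequality:AdaONS-MA-key}), the bound $\sum_i \tfrac{\eta_i^1}{2}(x_i^1-x_i)^\top A_i^1(x_i^1-x_i) \le \tfrac{D^2}{2}$, the log-determinant telescoping for the $(A_i^{t+1})^{-1}$-weighted gradient terms, and Proposition~\ref{Prop:GRV} both for the maxima of the geometric variables and for the probability that a step size exceeds the unknown threshold $\gamma = \tfrac{1}{2}\min\{\tfrac{1}{4GD},\alpha\}$. Where you genuinely diverge is in the bookkeeping of the two probabilistic terms, and your version is cleaner and slightly tighter. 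For the log-det term, the paper replaces every per-agent maximum $\max_{s\le T} M_i^s$ by the global maximum $\max_{i,s} M_i^s$, so Proposition~\ref{Prop:GRV} is applied to $NT$ variables and produces a $\log(NT)$ factor (the paper's Eq.~\eqref{inequality:AdaONS-rate-II}); your agent-wise treatment---Jensen applied to each $\EE[\sqrt{1+\max_{s\le T} M_i^s}]$, then summing with weights $d_i$---applies the proposition to only $T$ variables per agent and yields exactly the $\log(T)$ factor that appears in the theorem statement, which the paper's own derivation, as written, does not (the mismatch between its Eq.~\eqref{inequality:AdaONS-rate-II} and the stated bound appears to be a typo). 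For the error term, the paper forms $\max_{1\le i\le N}$ of the step-size coefficients, bounds $\sum_i (x_i^t-x_i)^\top v_i(x^t)v_i(x^t)^\top(x_i^t-x_i)\le G^2D^2$, and must then control $\PP\bigl(\max_i \eta_i^{t+1}\ge\gamma\bigr)$ via an exchangeability argument (the argmax index $i^t$ being uniform over agents); you never form the max over agents, instead bounding each agent's contribution by $D_i^2G^2\,\mathbb{I}(\eta_i^{t+1}\ge\gamma)$, applying $\sum_t\PP(\eta_i^{t+1}\ge\gamma)\le e^{\gamma^{-2}/\log(T+10)}$ per agent, and using $\sum_i D_i^2 = D^2$ for the product set $\XCal=\prod_i\XCal_i$. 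Both routes land on $\tfrac{G^2D^2}{2}e^{(\max\{8GD,2\alpha^{-1}\})^2/\log(T+10)}$, but yours avoids the i.i.d./uniform-argmax step entirely, which is a small but real simplification and is the same mechanism that preserves $\log(T)$ in the other term.
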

Directly extending Lemma~\ref{Lemma:AdaONS-MA} to multi-agent learning in EC games, we prove that the iterates $\{x^t\}_{t \geq 1}$ generated by Algorithm~\ref{alg:AdaONS-MA} satisfy  
\begin{eqnarray}\label{inequality:AdaONS-MA-key}
\lefteqn{\sum_{t = 1}^T (x^t - x)^\top v(x) \leq \sum_{i=1}^N \tfrac{\eta_i^1}{2} (x_i^1 - x_i)^\top A_i^1(x_i^1 - x_i) + \tfrac{1}{2}\left(\sum_{t = 1}^T \sum_{i = 1}^N \tfrac{1}{\eta_i^{t+1}}v_i(x^t)^\top (A_i^{t+1})^{-1}v_i(x^t)\right)} \nonumber \\ 
& & + \sum_{t = 1}^T \sum_{i=1}^N (x_i^t - x_i)^\top\left(\tfrac{\eta_i^{t+1}}{2}A_i^{t+1} - \tfrac{\eta_i^t}{2}A_i^t - \tfrac{1}{4}\min\{\tfrac{1}{4GD}, \alpha\} v_i(x^t)v_i(x^t)^\top\right)(x_i^t - x_i). 
\end{eqnarray}
This inequality is crucial to the proof of Theorem~\ref{Thm:AdaONS-rate}, which can be found in Appendix~\ref{app:AdaONS-rate}. 

\section{Discussion}\label{sec:conclu}
Our results open up several directions for further research. First, we have assumed that the gradient feedback is always received at the end of each period. In practice, there may be delays. For instance, in the  power-control example, the signal-to-noise ratio sent by the receiver to the transmitter through the feedback channel, from which the gradient can be recovered, is often received with a delay. As such, it is important to understand how delays impact the performance as well as to design the delay-adaptive algorithms that can operate robustly in such environments. Second, our paper has focused on (noisy) gradient feedback while another important yet more challenging type of feedback is bandit feedback, where we only observe (noisy) function values at the chosen action. This problem domain has been less explored than that of learning with (noisy) gradient feedback. For instance, it remains unknown what the minimax optimal regret bound for convex cost functions is; the optimal dependence on $T$ is $\sqrt{T}$~\citep{Bubeck-2021-Kernel} but the optimal dependence on the dimension $d$ is unknown (note that the OGD regret is dimension-independent if gradient feedback can be observed). Thus, it is desirable to develop the feasible and optimal bandit learning algorithms in both single-agent and multi-agent settings. Can our randomization technique be applicable in that setting and improve the existing algorithms~\citep{Bravo-2018-Bandit, Lin-2021-Optimal}? This is a natural question for future work.

\section*{Acknowledgments}
This work was supported in part by the Mathematical Data Science program of the Office of Naval Research under grant number N00014-18-1-2764 and by the Vannevar Bush Faculty Fellowship program
under grant number N00014-21-1-2941.

\bibliographystyle{plainnat}
\bibliography{ref}

\clearpage
\appendix
\section{Basic Proposition}\label{app:RV}
We provide a simple result on the maximum of independent identically distributed (i.i.d.) geometric random variables. 
\begin{proposition}\label{Prop:GRV}
Let $X_1, \ldots, X_n$ be $n$ i.i.d.\ geometric random variables: $X_i \sim \textsf{Geometric}(p_0)$ for all $i = 1, 2, \ldots, n$ and with $p_0 \in (0, 1)$. Defining $\bar{X}_n = \max_{1 \leq i \leq n} X_i$, we have
\begin{equation}\label{eq:GRV-first}
\sum_{n=1}^{+\infty} \PP(\bar{X}_n \leq x) \leq e^{xp_0}, 
\end{equation}
and
\begin{equation}\label{eq:GRV-second}
1 \leq \EE[\bar{X}_n] \leq \tfrac{1 + \log(n)}{p_0}. 
\end{equation}
\end{proposition}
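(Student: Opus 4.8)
The plan is to derive both inequalities from a single exact computation of the distribution of the maximum. Since $X_1,\dots,X_n$ are i.i.d.\ with support $\{1,2,\dots\}$ and $\PP(X_i \le k) = 1-(1-p_0)^k$ for integers $k\ge 0$, independence gives, for every $x\ge 0$,
\[
\PP(\bar{X}_n \le x) = \bigl(1-(1-p_0)^{\lfloor x\rfloor}\bigr)^n .
\]
This identity drives everything that follows, and I would record it first.

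For \eqref{eq:GRV-first} I would fix $x$ and sum the resulting geometric series in $n$. Writing $q = 1-(1-p_0)^{\lfloor x\rfloor}\in[0,1)$,
\[
\sum_{n=1}^{\infty}\PP(\bar{X}_n \le x) = \sum_{n=1}^{\infty} q^n = \frac{q}{1-q} = (1-p_0)^{-\lfloor x\rfloor}-1 .
\]
(Equivalently, if $N$ denotes the first index with $X_N>x$, then the left-hand side equals $\EE[N]-1$, and $N$ is geometric with success probability $(1-p_0)^{\lfloor x\rfloor}$, which reproduces the same closed form.) It then remains to pass from this exact value to the clean exponential bound: I would use $\lfloor x\rfloor\le x$ to get $(1-p_0)^{-\lfloor x\rfloor}\le (1-p_0)^{-x}$ and then invoke an elementary inequality to convert $(1-p_0)^{-x}$ into the target $e^{xp_0}$, exploiting the additive $-1$.

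For \eqref{eq:GRV-second} the lower bound is immediate: each $X_i\ge 1$ forces $\bar{X}_n\ge 1$, hence $\EE[\bar{X}_n]\ge 1$. For the upper bound I would use the tail-sum representation $\EE[\bar{X}_n]=\sum_{k=0}^{\infty}\PP(\bar{X}_n>k)$ together with the two complementary estimates $\PP(\bar{X}_n>k)\le 1$ (trivially) and, by a union bound, $\PP(\bar{X}_n>k)\le n\,\PP(X_1>k)=n(1-p_0)^k$. Splitting the sum at the threshold $k^\star\approx \frac{\log n}{-\log(1-p_0)}$ where the two estimates cross, the head contributes about $\frac{\log n}{-\log(1-p_0)}$ and the geometric tail contributes $O(1/p_0)$; combining these with $-\log(1-p_0)\ge p_0$ and the harmonic-number bound $H_n\le 1+\log n$ delivers $\frac{1+\log n}{p_0}$.

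The main obstacle I anticipate is the final elementary step of the first part. The series equals $(1-p_0)^{-\lfloor x\rfloor}-1$ \emph{exactly}, and squeezing this beneath $e^{xp_0}$ is delicate because $-\log(1-p_0)>p_0$, so the naive estimate $(1-p_0)^{-x}\le e^{xp_0}$ fails and one only gets $(1-p_0)^{-x}\le e^{xp_0/(1-p_0)}$. Controlling the $\tfrac{1}{1-p_0}$ factor---equivalently, showing that $-\log(1-p_0)$ is close enough to $p_0$ in the relevant small-$p_0$ regime $p_0=\tfrac{1}{\log(T+10)}$, while using the additive $-1$ to absorb the slack---is where the real care is needed. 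The very same quantity $-\log(1-p_0)$ also governs the precise constant in the second part, so a clean treatment of this elementary inequality is the crux of both bounds.
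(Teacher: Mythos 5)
Your handling of Eq.~\eqref{eq:GRV-second} is correct and takes a genuinely different route from the paper's: the paper converts the exact tail sum $\sum_{k\ge 0}(1-(1-q_0^k)^n)$ into an integral, evaluates it via the substitution $u=1-q_0^x$ to obtain $H_n/(-\log(1-p_0))$ with $H_n$ the harmonic number, and then applies $H_n\le 1+\log n$ and $-\log(1-p_0)\ge p_0$; your union-bound-plus-threshold split is more elementary and avoids the change of variables (note the harmonic number never actually appears in your route---the $\log n$ comes directly from the threshold). One caveat applies to both arguments equally: each really proves $\EE[\bar{X}_n]\le 1+\frac{1+\log n}{p_0}$, the paper because its sum-to-integral comparison costs an additive $1$ that it then silently drops, and yours because of the rounding of $k^\star$ to an integer. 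This slack is harmless for every downstream use in the paper, but as stated neither argument delivers the bare $\frac{1+\log n}{p_0}$.

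The substantive point is Eq.~\eqref{eq:GRV-first}, and here the obstacle you flagged is not a step you failed to complete---it is an error in the paper that you have correctly diagnosed. The paper's proof computes the same exact value $(1-p_0)^{-\lfloor x\rfloor}-1$ that you do, bounds it by $(1-p_0)^{-x}$, and then asserts that $1-p_0\le e^{-p_0}$ yields $(1-p_0)^{-x}\le e^{xp_0}$. That implication is backwards: $1-p_0\le e^{-p_0}$ gives $(1-p_0)^{-x}\ge e^{xp_0}$ for $x\ge 0$, precisely your observation that $-\log(1-p_0)>p_0$. Moreover, the claimed inequality is genuinely false, not merely unproven, so the gap in your proposal cannot be filled without changing the statement: for $p_0=0.9$ and $x=1$ the left-hand side of \eqref{eq:GRV-first} is $\sum_{n\ge 1}(0.9)^n=9$ while $e^{xp_0}=e^{0.9}<2.5$; and even for small $p_0$ it fails once $x\gtrsim 2\log 2/p_0^2$, since $e^{x(-\log(1-p_0))}-1$ then exceeds $e^{xp_0}$, so the additive $-1$ cannot rescue the claim in the stated generality. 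The correct clean bound is exactly the one you wrote down, $\sum_{n\ge 1}\PP(\bar{X}_n\le x)\le (1-p_0)^{-x}\le e^{xp_0/(1-p_0)}$. Since the paper only ever invokes \eqref{eq:GRV-first} with $x$ a fixed constant and $p_0=\frac{1}{\log(T+10)}\le \frac{1}{\log 10}$ bounded away from $1$, this weaker bound repairs the proposition and all theorems that cite it (Theorems~\ref{Thm:AdaOGD-rate}, \ref{Thm:AdaONS-regret}, \ref{Thm:AdaONS-rate}), at the cost of an absolute-constant factor $\frac{1}{1-p_0}\le \frac{\log 10}{\log 10-1}$ in the exponents, leaving all $O(\cdot)$ conclusions intact. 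In short: your proposal is incomplete exactly where you said it was, but that incompleteness is forced by a false statement, and your diagnosis pinpoints the flaw in the paper's own argument.
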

\begin{proof}
Let $q_0 = 1-p_0$. Fixing a constant $x \geq 1$, we have
\begin{equation*}
\PP(\bar{X}_n \leq x) = \PP(\max\{X_1, X_2, \ldots, X_n\} \leq x) = \PP(X_1 \leq x, X_2 \leq x, \ldots, X_n \leq x). 
\end{equation*}
Since $X_1, X_2, \ldots, X_n$ are $n$ independently distributed random variables, we have
\begin{equation*}
\PP(\bar{X}_n \leq x) = \PP(X_1 \leq x)\PP(X_2 \leq x) \cdots \PP(X_n \leq x). 
\end{equation*}
Since $X_i \sim \textsf{Geometric}(p_0)$ for all $i = 1, 2, \ldots, n$,  we have
\begin{equation*}
\PP(X_i \leq x) = \sum_{k=1}^{\lfloor x\rfloor} \PP(X_i = k) = \sum_{k=1}^{\lfloor x\rfloor} (1-p_0)^{k-1}p_0 = p_0 \cdot \tfrac{1 - (1-p_0)^{\lfloor x\rfloor}}{p_0} = 1 - q_0^{\lfloor x\rfloor}, 
\end{equation*}
where $\lfloor x\rfloor$ is the largest integer that is smaller than $x$. As such, we have $\PP(\bar{X}_n \leq x) = (1 - q_0^{\lfloor x\rfloor})^n$. Given that $q_0 \in (0, 1)$, we have
\begin{equation*}
\sum_{n=1}^{+\infty} \PP(\bar{X}_n \leq x) = \sum_{n=1}^{+\infty} (1 - q_0^{\lfloor x\rfloor})^n = (1 - q_0^{\lfloor x\rfloor}) \cdot \frac{1}{q_0^{\lfloor x\rfloor}} = \frac{1}{q_0^{\lfloor x\rfloor}} - 1 \leq \frac{1}{q_0^x} = (1 - p_0)^{-x}. 
\end{equation*}
Since $1 + x \leq e^x$ for all $x \in \br$, we have $1 - p_0 \leq e^{-p_0}$. Putting these pieces together yields Eq.~\eqref{eq:GRV-first}. 

Moreover, it follows from the definition of $\bar{X}_n$ that $\bar{X}_n \geq 1$ and hence $\EE[\bar{X}_n] \geq 1$. We also have
\begin{equation*}
\EE[\bar{X}_n] = \sum_{k=0}^{+\infty} \PP(\bar{X}_n > k) = \sum_{k=0}^{+\infty} (1 - (1 - q_0^k)^n). 
\end{equation*}
By viewing the infinite sum as an Riemann sum approximation of an integral, we obtain 
\begin{equation*}
\sum_{k=0}^{+\infty} (1 - (1 - q_0^k)^n) \leq 1 + \int_0^{+\infty} (1 - (1 - q_0^x)^n) \; dx. 
\end{equation*}
We perform the change of variables $u = 1 - q_0^x$ and obtain 
\begin{eqnarray*}
\int_0^{+\infty} (1 - (1 - q_0^x)^n) dx & = & -\tfrac{1}{\log(q_0)} \int_0^1 \tfrac{1 - u^n}{1 - u} \; du \ = \ -\tfrac{1}{\log(q_0)} \int_0^1 (1 + u + \ldots + u^{n-1}) \; du \\
& = & -\tfrac{1}{\log(q_0)}(1 + \tfrac{1}{2} + \ldots + \tfrac{1}{n}) \ \leq \ -\tfrac{1}{\log(q_0)}\left(1 + \int_1^n \tfrac{1}{x}dx \right) \\ 
& = & -\tfrac{1 + \log(n)}{\log(1-p_0)}.  
\end{eqnarray*}
Recalling again that $1 + x \leq e^x$ for all $x \in \br$, we have $\log(1 - p_0) \leq -p_0$, which implies $-\tfrac{1}{\log(1-p_0)} \leq \frac{1}{p_0}$. Putting these pieces together yields Eq.~\eqref{eq:GRV-second}. 
\end{proof}

\section{Proof of Lemma~\ref{Lemma:AdaOGD}}\label{app:AdaOGD}
Recall that the update formula of $x^{t+1}$ in Algorithm~\ref{alg:AdaOGD-SA} is 
\begin{equation*}
x^{t+1} \leftarrow \argmin_{x \in \XCal}\{(x - x^t)^\top \xi^t + \tfrac{\eta^{t+1}}{2}\|x - x^t\|^2\}. 
\end{equation*}
The first-order optimality condition implies that 
\begin{equation*}
(x - x^{t+1})^\top \xi^t + \eta^{t+1} (x - x^{t+1})^\top(x^{t+1} - x^t) \geq 0, \quad \textnormal{for all } x \in \XCal. 
\end{equation*}
Equivalently, we have
\begin{eqnarray*}
\lefteqn{\tfrac{\eta^{t+1}}{2}(\|x^{t+1} - x\|^2 - \|x^t - x\|^2) \leq (x - x^{t+1})^\top \xi^t - \tfrac{\eta^{t+1}}{2}\|x^{t+1} - x^t\|^2} \\
& = & (x - x^t)^\top \xi^t + (x^t - x^{t+1})^\top \xi^t - \tfrac{\eta^{t+1}}{2}\|x^{t+1} - x^t\|^2 \ \leq \ (x - x^t)^\top \xi^t + \tfrac{1}{2\eta^{t+1}}\|\xi^t\|^2. 
\end{eqnarray*}
Since $\EE[\xi^t \mid x^t] = \nabla f_t(x^t)$ and $\EE[\|\xi^t\|^2 \mid x^t] \leq G^2$ for all $t \geq 1$, we have 
\begin{equation*}
\EE\left[\tfrac{\eta^{t+1}}{2}\left(\|x^{t+1} - x\|^2 - \|x^t - x\|^2\right) \mid x^t\right] \leq (x - x^t)^\top \nabla f_t(x^t) + \EE\left[\tfrac{G^2}{2\eta^{t+1}}\right]. 
\end{equation*}
Since $f_t$ is $\beta$-strongly convex, we have
\begin{equation*}
\EE\left[\tfrac{\eta^{t+1}}{2}\left(\|x^{t+1} - x\|^2 - \|x^t - x\|^2\right)\mid x^t\right] \leq f_t(x) - f_t(x^t) - \tfrac{\beta}{2}\|x^t - x\|^2 + \EE\left[\tfrac{G^2}{2\eta^{t+1}}\right], 
\end{equation*}
Taking the expectation of both sides and rearranging the resulting inequality yields  
\begin{equation*}
\EE\left[f_t(x^t) + \tfrac{\eta^{t+1}}{2}\|x^{t+1} - x\|^2 - \tfrac{\eta^t}{2}\|x^t - x\|^2\right] - f_t(x) \leq \EE\left[\left(\tfrac{\eta^{t+1} - \eta^t}{2} - \tfrac{\beta}{2}\right)\|x^t - x\|^2 + \tfrac{G^2}{2\eta^{t+1}}\right]. 
\end{equation*}
Summing up the above inequality over $t= 1, 2, \ldots, T$ yields the desired inequality. 

\section{Proofs for Example~\ref{Example:PM} and~\ref{Example:RIG}}\label{app:example}
We show that \textsc{Power Management in Wireless Networks} and \textsc{Newsvendor with Lost Sales} in Example~\ref{Example:PM} and~\ref{Example:RIG} are $\beta$-strongly monotone games (cf. Definition~\ref{def:SM-game}) for some $\beta > 0$.  

\paragraph{Power Management in Wireless Networks.} Example~\ref{Example:PM} satisfies Definition~\ref{def:SM-game} with $\beta = \lambda_{\min}(I - \frac{1}{2}(W + W^\top))> 0$ where $W_{ii} = 0$ for all $1 \leq i \leq N$ and $W_{ij} = \frac{r_i^\star G_{ij}}{G_{ii}}$ for $i \neq j$. An analysis of this setting has been given in~\citet{Zhou-2021-Robust}; we provide the details for  completeness. The cost function is given by 
\begin{equation*}
u_i(a) = \tfrac{1}{2}\left(a_i - \tfrac{r_i^\star(\sum_{j \neq i}G_{ij} a_j + \eta_i)}{G_{ii}}\right)^2. 
\end{equation*}
Taking the derivative of $u_i(a)$ with respect to $a_i$ yields that $v_i(a) = a_i - \frac{r_i^\star(\sum_{j \neq i}G_{ij} a_j + \eta_i)}{G_{ii}}$. Consequently, by the definition of $W_{ij}$, we have
\begin{eqnarray*}
\lefteqn{\langle a' - a, v(a') - v(a)\rangle = \|a' - a\|^2 - \sum_{i=1}^N \tfrac{r_i^\star}{G_{ii}} \sum_{j \neq i} G_{ij} \langle a'_i - a_i, a'_j - a_j\rangle} \\ 
& = & \|a' - a\|^2 - \sum_{i=1}^N \sum_{j=1}^N W_{ij} \langle a'_i - a_i, a'_j - a_j\rangle = \|a' - a\|^2 - \langle a' - a,  W(a' - a)\rangle = \langle a' - a,  (I - \tfrac{1}{2}(W + W^\top))(a' - a)\rangle \\ 
& \geq & \lambda_{\min}(I - \tfrac{1}{2}(W + W^\top))\|a' - a\|^2 = \beta\|a' - a\|^2. 
\end{eqnarray*}
This yields the desired result. 

\paragraph{Newsvendor with Lost Sales.} Example~\ref{Example:RIG} satisfies Definition~\ref{def:SM-game} with $\beta = \alpha p$. Indeed, each players' reward function is given by 
\begin{equation*}
u_i(x) = (p - c_i) \cdot \EE[\max\{0, D_i(x_{-i}) - x_i\}] + c_i \cdot \EE[\max\{0, x_i - D_i(x_{-i})\}], 
\end{equation*}
Equivalently, we have
\begin{equation*}
u_i(x) = (p - c_i) \cdot \int_{x_i}^{+\infty} (k - x_i) \cdot d\PP(D_i(x_{-i}) \leq k) + c_i \cdot \int_{-\infty}^{x_i} (x_i - k) \cdot d\PP(D_i(x_{-i}) \leq k). 
\end{equation*}
Using the Leibniz integral rule, we have
\begin{equation*}
v_i(x) = \nabla_{x_i} u_i(x) = p \cdot \PP(D_i(x_{-i}) \leq x_i) - p + c_i = p \cdot F_i(x) - p + c_i. 
\end{equation*}
Let $F = (F_1, F_2, \ldots, F_N)$ be an operator from $\prod_{i=1}^N [0, \bar{x}_i]$ to $[0, 1]^N$. Then, if $F$ is $\alpha$-strongly monotone, we have 
\begin{equation*}
\langle x' - x, v(x') - v(x)\rangle = p \cdot \langle x' - x, F(x') - F(x)\rangle \geq \alpha p\|x' - x\|^2, \quad \textnormal{for all } x, x' \in \prod_{i=1}^N [0, \bar{x}_i]. 
\end{equation*}
Considering the example where the distribution of a demand $D_i(x_{-i})$ is given by 
\begin{equation*}
\PP(D_i(x_{-i}) \leq z) = 1 - \tfrac{1 + \sum_{j \neq i} x_j}{(1 + z + \sum_{j \neq i} x_j)^2}. 
\end{equation*}
Then, we have 
\begin{equation*}
v_i(x) = p \cdot \left(1 - \tfrac{1 + \sum_{j \neq i} x_j}{(1 + \sum_{i = 1}^N x_i)^2}\right) - p + c_i. 
\end{equation*}
The following proposition, a modification of~\citet[Theorem~6]{Rosen-1965-Existence}, plays an important role in the subsequent analysis. 
\begin{proposition}\label{prop:DSC}
Given a continuous game $\GCal = (\NCal, \XCal = \prod_{i=1}^N \XCal_i, \{u_i\}_{i=1}^N)$, where each $u_i$ is twice continuously differentiable. For each $x \in \XCal$, we define the Hessian matrix $H(x)$ as follows:
\begin{equation*}
H_{ij}(x) = \tfrac{1}{2}\nabla_j v_i(x) + \tfrac{1}{2}(\nabla_i v_j(x))^\top. 
\end{equation*}
If $H(x)$ is positive definite for every $x \in \XCal$, we have $\langle x' - x, v(x') - v(x)\rangle \geq 0$ for all $x, x' \in \XCal$ where the equality holds true if and only if $x = x'$. 
\end{proposition}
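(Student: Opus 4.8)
The plan is to recognize $H(x)$ as the symmetrization of the Jacobian of the joint gradient operator $v$ and then to reduce strict monotonicity to a one-dimensional integral along the segment joining $x$ and $x'$, which is precisely Rosen's diagonal-strict-concavity argument. Writing $J(x)$ for the Jacobian of $v$, whose $(i,j)$ block is $\nabla_j v_i(x) \in \br^{d_i \times d_j}$, I would first observe that the block definition in the statement gives exactly $H(x) = \tfrac{1}{2}(J(x) + J(x)^\top)$: the $(i,j)$ block of $J(x)^\top$ is $(\nabla_i v_j(x))^\top$, so averaging reproduces $H_{ij}(x) = \tfrac{1}{2}\nabla_j v_i(x) + \tfrac{1}{2}(\nabla_i v_j(x))^\top$. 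Since each $u_i$ is twice continuously differentiable, $v$ is $C^1$ and $J$ is continuous on $\XCal$.

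Next, using that $\XCal = \prod_{i=1}^N \XCal_i$ is convex, the segment $z(\tau) = x + \tau(x' - x)$ lies in $\XCal$ for all $\tau \in [0,1]$. I would apply the fundamental theorem of calculus to the scalar map $\tau \mapsto (x'-x)^\top v(z(\tau))$, giving
\[
\langle x' - x, v(x') - v(x)\rangle = \int_0^1 (x'-x)^\top J(z(\tau))(x'-x)\, d\tau.
\]
Because a quadratic form sees only the symmetric part of a matrix, $(x'-x)^\top J(z)(x'-x) = (x'-x)^\top H(z)(x'-x)$ for every $z$, so the integrand is exactly $(x'-x)^\top H(z(\tau))(x'-x)$.

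Finally, positive definiteness of $H(z(\tau))$ at every point of the segment makes the integrand nonnegative, hence the integral is nonnegative, yielding $\langle x' - x, v(x') - v(x)\rangle \geq 0$. For the equality clause, if $x = x'$ both sides vanish trivially; if $x \neq x'$, then for each $\tau$ the integrand is bounded below by $\lambda_{\min}(H(z(\tau)))\|x'-x\|^2 > 0$, and since $\tau \mapsto (x'-x)^\top H(z(\tau))(x'-x)$ is continuous and strictly positive on $[0,1]$, its integral is strictly positive. The only genuine bookkeeping obstacle is the block-matrix identification $H = \tfrac{1}{2}(J + J^\top)$ across the differently sized factors $\XCal_i \subseteq \br^{d_i}$; once that is in place, the integral representation and the symmetric-part reduction are routine, and the strictness follows from pointwise positivity together with continuity of the integrand, so no uniform spectral lower bound on $H$ is required.
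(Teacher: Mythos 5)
Your proof is correct. The paper itself gives no proof of this proposition---it defers to Rosen (1965, Theorem~6), of which the statement is said to be a modification---and your argument (fundamental theorem of calculus along the segment $z(\tau)=x+\tau(x'-x)$, the block identification $H=\tfrac{1}{2}(J+J^\top)$ so the quadratic form sees only the symmetric part, and strict positivity of the integral of a continuous positive integrand on $[0,1]$ when $x\neq x'$) is precisely the standard line-integral argument underlying that cited theorem, including the correct observation that only pointwise positive definiteness along the compact segment, not a uniform spectral lower bound over $\XCal$, is needed.
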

As a consequence of Proposition~\ref{prop:DSC}, we have $\langle x' - x, v(x) - v(x)\rangle \geq \beta\|x' - x\|^2$ for all $x, x' \in \XCal$ if $H(x) \succeq \beta I_N$ for all $x \in \XCal$. For our example, it suffices to show that 
\begin{equation}\label{Eq:RIG-main}
H(x) \succeq \left(\tfrac{p}{(1 + \sum_{i = 1}^N \bar{x}_i)^3}\right)I_N, \quad \textnormal{for all } x \in \prod_{i=1}^N [0, \bar{x}_i].  
\end{equation}
Indeed, after a straightforward calculation, we have
\begin{equation*}
\nabla_i v_i(x) = \tfrac{2p(1 + \sum_{k \neq i} x_k)}{(1 + \sum_{k = 1}^N x_k)^3}, \qquad \nabla_j v_i(x) = \tfrac{p(1 + \sum_{k \neq i} x_k - x_i)}{(1 + \sum_{k = 1}^N x_k)^3}, \qquad \nabla_i v_j(x) = \tfrac{p(1 + \sum_{k \neq j} x_k - x_j)}{(1 + \sum_{k = 1}^N x_k)^3}. 
\end{equation*}
This implies that 
\begin{equation*}
H(x) = \tfrac{p}{(1 + \sum_{i = 1}^N x_i)^3} \cdot
\begin{pmatrix}
2 + 2\sum_{i \neq 1} x_i & 1 + \sum_{i \neq 1, 2} x_i & 1 + \sum_{i \neq 1, 3} x_i & \cdots & 1 + \sum_{i \neq 1, N} x_i \\
1 + \sum_{i \neq 1, 2} x_i & 2 + 2\sum_{i \neq 2} x_i & 1 + \sum_{i \neq 2, 3} x_i & \cdots & 1 + \sum_{i \neq 2, N} x_i \\
1 + \sum_{i \neq 1, 3} x_i & 1 + \sum_{i \neq 2, 3} x_i & 2 + 2\sum_{i \neq 3} x_i & \cdots & 1 + \sum_{i \neq 3, N} x_i \\
\vdots & \vdots & \vdots & \ddots & \vdots \\
1 + \sum_{i \neq 1, N} x_i & 1 + \sum_{i \neq 2, N} x_i & 1 + \sum_{i \neq 3, N} x_i & \cdots & 2 + 2\sum_{i \neq N} x_i
\end{pmatrix}.
\end{equation*}
Equivalently, we have 
\begin{eqnarray*}
H(x) & = & \tfrac{p}{(1 + \sum_{i = 1}^N x_i)^3} \cdot \left\{\left(1 + \sum_{i = 1}^N x_i\right) \cdot \begin{pmatrix}
2 & 1 & 1 & \cdots & 1 \\
1 & 2 & 1 & \cdots & 1 \\
1 & 1 & 2 & \cdots & 1 \\
\vdots & \vdots & \vdots & \ddots & \vdots \\
1 & 1 & 1 & \cdots & 2
\end{pmatrix} - \begin{pmatrix}
2x_1 & x_1 + x_2 & x_1 + x_3 & \cdots & x_1 + x_N \\
x_1 + x_2 & 2x_2 & x_2 + x_3 & \cdots & x_2 + x_N \\
x_1 + x_3 & x_2 + x_3 & 2x_3 & \cdots & x_3 + x_N \\
\vdots & \vdots & \vdots & \ddots & \vdots \\
x_1 + x_N & x_2 + x_N & x_3 + x_N & \cdots & 2x_N
\end{pmatrix}\right\} \\
& = & \tfrac{p}{(1 + \sum_{i = 1}^N x_i)^3} \cdot \left\{\left(1 + \sum_{i = 1}^N x_i\right) \cdot (I_N + \one_N \one_N^\top) -x\one_N^\top - \one_N x^\top\right\}. 
\end{eqnarray*}
Note that $x \in \prod_{i=1}^N [0, \bar{x}_i]$. If $x = \zero_N$, we have $H(x) = p \cdot (I_N + \one_N \one_N^\top)$ and thus satisfy Eq.~\eqref{Eq:RIG-main}. Otherwise, we let $y = \frac{x}{\sum_{i=1}^N x_k}$ and obtain that $\|y\| \leq 1$. It is clear that $(y - \one_N)(y - \one_N)^\top$ is positive semidefinite. Thus, we have 
\begin{equation*}
yy^\top + \one_N \one_N^\top \succeq y\one_N^\top + \one_N y^\top. 
\end{equation*}
Using the definition of $y$, we have $yy^\top \preceq I_N$.  This together with the above inequality implies that 
\begin{equation*}
\left(\sum_{i = 1}^N x_i\right) \cdot (I_N+ \one_N \one_N^\top) \succeq x\one_N^\top + \one_N x^\top. 
\end{equation*}
Putting these pieces together yields that 
\begin{equation*}
H(x) \succeq \tfrac{p}{(1 + \sum_{i = 1}^N x_i)^3} \cdot (I_N+ \one_N \one_N^\top) \succeq \left(\tfrac{p}{(1 + \sum_{i = 1}^N \bar{x}_i)^3}\right)I_N, \quad \textnormal{for all } x \in \prod_{i=1}^N [0, \bar{x}_i].  
\end{equation*}
This completes the proof. 

\section{Proof of Lemma~\ref{Lemma:AdaOGD-MA}}\label{app:AdaOGD-MA}
Recall that the update formula of $x_i^{t+1}$ in Algorithm~\ref{alg:AdaOGD-MA} is 
\begin{equation*}
x_i^{t+1} \leftarrow \argmin_{x_i \in \XCal_i}\{(x_i - x_i^t)^\top \xi_i^t + \tfrac{\eta_i^{t+1}}{2}\|x_i - x_i^t\|^2\}. 
\end{equation*}
The first-order optimality condition implies that 
\begin{equation*}
(x_i - x_i^{t+1})^\top \xi_i^t + \eta_i^{t+1} (x_i - x_i^{t+1})^\top(x_i^{t+1} - x_i^t) \geq 0, \quad \textnormal{for all } x_i \in \XCal_i. 
\end{equation*}
Equivalently, we have
\begin{eqnarray*}
\lefteqn{\tfrac{\eta_i^{t+1}}{2}(\|x_i^{t+1} - x_i\|^2 - \|x_i^t - x_i\|^2) \leq (x_i - x_i^{t+1})^\top \xi_i^t - \tfrac{\eta_i^{t+1}}{2}\|x_i^{t+1} - x_i^t\|^2} \\
& = & (x_i - x_i^t)^\top \xi_i^t + (x_i^t - x_i^{t+1})^\top \xi_i^t - \tfrac{\eta_i^{t+1}}{2}\|x_i^{t+1} - x_i^t\|^2 \ \leq \ (x_i - x_i^t)^\top \xi_i^t + \tfrac{1}{2\eta_i^{t+1}}\|\xi_i^t\|^2. 
\end{eqnarray*}
Letting $x = x^\star$ be a unique Nash equilibrium and rearranging the above inequality, we have
\begin{equation*}
\tfrac{\eta_i^{t+1}}{2}\|x_i^{t+1} - x_i^\star\|^2 - \tfrac{\eta_i^t}{2}\|x_i^t - x_i^\star\|^2 \leq (x_i^\star - x_i^t)^\top \xi_i^t + \left(\tfrac{\eta_i^{t+1}}{2} - \tfrac{\eta_i^t}{2}\right)\|x_i^t - x_i^\star\|^2 + \tfrac{1}{2\eta_i^{t+1}}\|\xi_i^t\|^2. 
\end{equation*}
Summing up the above inequality over $i = 1, 2, \ldots, N$ and rearranging, we have
\begin{equation*}
\sum_{i=1}^N \left(\eta_i^{t+1}\|x_i^{t+1} - x_i^\star\|^2 - \eta_i^t\|x_i^t - x_i^\star\|^2\right) \leq 2(x^\star - x^t)^\top \xi^t + \left(\sum_{i = 1}^N (\eta_i^{t+1} - \eta_i^t)\|x_i^t - x_i^\star\|^2\right) + \left(\max_{1 \leq i \leq N} \left\{\tfrac{1}{\eta_i^{t+1}}\right\}\right)\|\xi^t\|^2. 
\end{equation*}
Note that $\{\eta_i^t\}_{1 \leq i \leq N, 1 \leq t \leq T}$ are generated independently of any noisy gradient feedback, we have
\begin{eqnarray*}
\EE[(x^\star - x^t)^\top\xi^t \mid x^t, \{\eta_i^t\}_{1 \leq i \leq N, 1 \leq t \leq T}] & = & \EE[(x^\star - x^t)^\top v(x^t) \mid \{\eta_i^t\}_{1 \leq i \leq N, 1 \leq t \leq T}], \\ 
\EE[\|\xi^t\|^2 \mid x^t, \{\eta_i^t\}_{1 \leq i \leq N, 1 \leq t \leq T}] & \leq & G^2. 
\end{eqnarray*}
Thus, we have
\begin{eqnarray*}
\lefteqn{\sum_{i = 1}^N \left(\eta_i^{t+1}\EE\left[\|x_i^{t+1} - x_i^\star\|^2 \mid \{\eta_i^t\}_{1 \leq i \leq N, 1 \leq t \leq T}\right] - \eta_i^t\EE\left[\|x_i^t - x_i^\star\|^2 \mid \{\eta_i^t\}_{1 \leq i \leq N, 1 \leq t \leq T}\right]\right)} \\ 
& \leq & 2\EE\left[(x^\star - x^t)^\top v(x^t) \mid \{\eta_i^t\}_{1 \leq i \leq N, 1 \leq t \leq T}\right] + \sum_{i = 1}^N \left(\eta_i^{t+1} - \eta_i^t\right)\EE\left[\|x_i^t - x_i^\star\|^2 \mid \{\eta_i^t\}_{1 \leq i \leq N, 1 \leq t \leq T}\right] + G^2\left(\max_{1 \leq i \leq N} \left\{\tfrac{1}{\eta_i^{t+1}}\right\}\right). 
\end{eqnarray*}
Since a continuous game $\GCal$ is $\beta$-strongly monotone and $x^\star \in \XCal$ is a unique Nash equilibrium, we have
\begin{equation*}
(x^\star - x^t)^\top v(x^t) \leq (x^\star - x^t)^\top v(x^\star) - \beta\|x^\star - x^t\|^2 \leq - \beta\|x^\star - x^t\|^2. 
\end{equation*}
Putting these pieces together yields that 
\begin{eqnarray*}
\lefteqn{\sum_{i = 1}^N \left(\eta_i^{t+1}\EE\left[\|x_i^{t+1} - x_i^\star\|^2 \mid \{\eta_i^t\}_{1 \leq i \leq N, 1 \leq t \leq T}\right] - \eta_i^t\EE\left[\|x_i^t - x_i^\star\|^2 \mid \{\eta_i^t\}_{1 \leq i \leq N, 1 \leq t \leq T}\right]\right)} \\ 
& \leq & \sum_{i = 1}^N \left(\eta_i^{t+1} - \eta_i^t - 2\beta\right)\EE\left[\|x_i^t - x_i^\star\|^2 \mid \{\eta_i^t\}_{1 \leq i \leq N, 1 \leq t \leq T}\right] + G^2\left(\max_{1 \leq i \leq N} \left\{\tfrac{1}{\eta_i^{t+1}}\right\}\right). 
\end{eqnarray*}
Since $D > 0$ satisfies that $\|x - x'\| \leq D$ for all $x, x' \in \XCal$, we have
\begin{eqnarray*}
\lefteqn{\sum_{i = 1}^N \left(\eta_i^{t+1}\EE\left[\|x_i^{t+1} - x_i^\star\|^2 \mid \{\eta_i^t\}_{1 \leq i \leq N, 1 \leq t \leq T}\right] - \eta_i^t\EE\left[\|x_i^t - x_i^\star\|^2 \mid \{\eta_i^t\}_{1 \leq i \leq N, 1 \leq t \leq T}\right]\right)} \\
& \leq & \left(\max_{1 \leq i \leq N} \left\{\eta_i^{t+1} - \eta_i^t\right\} - 2\beta\right)\EE\left[\|x^t - x^\star\|^2 \mid \{\eta_i^t\}_{1 \leq i \leq N, 1 \leq t \leq T}\right] + G^2\left(\max_{1 \leq i \leq N} \left\{\tfrac{1}{\eta_i^{t+1}}\right\}\right) \\
& \leq & D^2\left(\max\left\{0, \max_{1 \leq i \leq N} \left\{\eta_i^{t+1} - \eta_i^t\right\} - 2\beta\right\}\right) + G^2\left(\max_{1 \leq i \leq N} \left\{\tfrac{1}{\eta_i^{t+1}}\right\}\right). 
\end{eqnarray*}
Summing over $t= 1, 2, \ldots, T-1$ yields the desired inequality. 

\section{Proof of Lemma~\ref{Lemma:AdaONS}}\label{app:AdaONS}
Recall that the update formula of $x^{t+1}$ in Algorithm~\ref{alg:AdaONS-SA} is 
\begin{equation*}
x^{t+1} \leftarrow \argmin_{x \in \XCal}\{(x - x^t)^\top \nabla f_t(x^t) + \tfrac{\eta^{t+1}}{2}(x - x^t)^\top A^{t+1} (x - x^t)\}. 
\end{equation*}
The first-order optimality condition implies that 
\begin{equation*}
(x - x^{t+1})^\top \nabla f_t(x^t) + \eta^{t+1} (x - x^{t+1})^\top A^{t+1}(x^{t+1} - x^t) \geq 0, \quad \textnormal{for all } x \in \XCal. 
\end{equation*}
Equivalently, we have
\begin{eqnarray*}
\lefteqn{\tfrac{\eta^{t+1}}{2}((x^{t+1} - x)^\top A^{t+1}(x^{t+1} - x) - (x^t - x)^\top A^{t+1}(x^t - x))} \\
& \leq & (x - x^{t+1})^\top \nabla f_t(x^t) - \tfrac{\eta^{t+1}}{2}(x^{t+1} - x^t)^\top A^{t+1}(x^{t+1} - x^t)   \\ 
& = & (x - x^t)^\top \nabla f_t(x^t) + (x^t - x^{t+1})^\top \nabla f_t(x^t) - \tfrac{\eta^{t+1}}{2}(x^{t+1} - x^t)^\top A^{t+1}(x^{t+1} - x^t) \\
& \leq & (x - x^t)^\top \nabla f_t(x^t) + \tfrac{1}{2\eta^{t+1}}\nabla f_t(x^t)^\top (A^{t+1})^{-1}\nabla f_t(x^t). 
\end{eqnarray*}
Since $f_t$ is $\alpha$-exp-concave and satisfies that $\|\nabla f_t(x)\| \leq G$ and $\|x - x'\| \leq D$ for all $x, x' \in \XCal$, we derive from~\citet[Lemma~3]{Hazan-2007-Logarithmic} that 
\begin{equation*}
f_t(x) \geq f_t(x^t) + (x - x^t)^\top\nabla f_t(x^t) + \tfrac{1}{4}\min\{\tfrac{1}{4GD}, \alpha\}(x - x^t)^\top(\nabla f_t(x^t)\nabla f_t(x^t)^\top)(x - x^t). 
\end{equation*}
For simplicity, we let $\gamma = \frac{1}{4}\min\{\tfrac{1}{4GD}, \alpha\}$. Putting these pieces together yields that 
\begin{eqnarray*}
\lefteqn{\tfrac{\eta^{t+1}}{2}((x^{t+1} - x)^\top A^{t+1}(x^{t+1} - x) - (x^t - x)^\top A^{t+1}(x^t - x)) \leq f_t(x) - f_t(x^t)} \\ 
& & - \gamma(x - x^t)^\top(\nabla f_t(x^t)\nabla f_t(x^t)^\top)(x - x^t) + \tfrac{1}{2\eta^{t+1}}\nabla f_t(x^t)^\top (A^{t+1})^{-1}\nabla f_t(x^t). 
\end{eqnarray*}
Rearranging the above inequality yields that 
\begin{eqnarray*}
\lefteqn{f_t(x^t) - f_t(x) + \tfrac{\eta^{t+1}}{2}(x^{t+1} - x)^\top A^{t+1}(x^{t+1} - x) - \tfrac{\eta^t}{2} (x^t - x)^\top A^t(x^t - x)} \\ 
& \leq & (x - x^t)^\top\left(\tfrac{\eta^{t+1}}{2} A^{t+1} - \tfrac{\eta^t}{2}A^t - \gamma\nabla f_t(x^t)\nabla f_t(x^t)^\top\right)(x - x^t) + \tfrac{1}{2\eta^{t+1}}\nabla f_t(x^t)^\top (A^{t+1})^{-1}\nabla f_t(x^t). 
\end{eqnarray*}
Summing over $t= 1, 2, \ldots, T$ yields the desired inequality. 

\section{Proof of Theorem~\ref{Thm:AdaONS-regret}} \label{app:AdaONS-regret}
By the update formula of $\eta^{t+1}$ in Algorithm~\ref{alg:AdaONS-SA}, we have $\eta^{t+1} = \tfrac{1}{\sqrt{1 + \max\{M^1, \ldots, M^t\}}}$. By the update formula of $A^{t+1}$ in Algorithm~\ref{alg:AdaONS-SA}, we have $A^1 = I_d$ where $I_d \in \br^{d \times d}$ is an identity matrix and $A^{t+1} = A^t + \nabla f_t(x^t)\nabla f_t(x^t)^\top$. Since $\XCal$ is convex and bounded with a diameter $D > 0$, we have
\begin{equation*}
\tfrac{\eta^1}{2}(x^1 - x)^\top A^1(x^1 - x) \leq \tfrac{D^2}{2}, \qquad \eta^{t+1} A^{t+1} - \eta^t A^t \preceq \tfrac{1}{\sqrt{1 + \max\{M^1, \ldots, M^t\}}}\nabla f_t(x^t)\nabla f_t(x^t)^\top. 
\end{equation*}
By Lemma~\ref{Lemma:AdaONS}, we have
\begin{eqnarray}\label{inequality:AdaONS-regret-first}
\lefteqn{\sum_{t=1}^T f_t(x^t) - \sum_{t=1}^T f_t(x) \leq \tfrac{D^2}{2} + \tfrac{1}{2}\left(\sum_{t=1}^T \tfrac{1}{\eta^{t+1}}\nabla f_t(x^t)^\top (A^{t+1})^{-1}\nabla f_t(x^t)\right)} \\
& & + \sum_{t=1}^T \left(\tfrac{1}{2\sqrt{1 + \max\{M^1, \ldots, M^t\}}} - \tfrac{1}{4}\min\{\tfrac{1}{4GD}, \alpha\}\right) (x^t - x)^\top \nabla f_t(x^t)\nabla f_t(x^t)^\top (x^t - x). \nonumber
\end{eqnarray}
Since $\|\nabla f_t(x)\| \leq G$, $A^1 = I_d$ where $I_d \in \br^{d \times d}$ is an identity matrix and $A^{t+1} = A^t + \nabla f_t(x^t)\nabla f_t(x^t)^\top$, we derive from~\citet[Lemma~11]{Hazan-2007-Logarithmic} that 
\begin{eqnarray}\label{inequality:AdaONS-regret-second}
\sum_{t=1}^T \tfrac{1}{\eta^{t+1}}\nabla f_t(x^t)^\top (A^{t+1})^{-1}\nabla f_t(x^t) & \leq & \sqrt{1 + \max\{M^1, \ldots, M^T\}}\left(\sum_{t=1}^T \nabla f_t(x^t)^\top (A^{t+1})^{-1}\nabla f_t(x^t)\right) \nonumber \\ 
& \leq & \sqrt{1 + \max\{M^1, \ldots, M^T\}}(d\log(TG^2+1)). 
\end{eqnarray}
Plugging Eq.~\eqref{inequality:AdaONS-regret-second} into Eq.~\eqref{inequality:AdaONS-regret-first} yields that 
\begin{eqnarray*}
\lefteqn{\sum_{t=1}^T f_t(x^t) - \sum_{t=1}^T f_t(x) \leq \tfrac{D^2}{2} + \tfrac{d\sqrt{1 + \max\{M^1, \ldots, M^T\}}}{2}\log(TG^2+1)} \\ 
& & + \sum_{t=1}^T \left(\tfrac{1}{2\sqrt{1 + \max\{M^1, \ldots, M^t\}}} - \tfrac{1}{4}\min\{\tfrac{1}{4GD}, \alpha\}\right)(x^t - x)^\top \nabla f_t(x^t)\nabla f_t(x^t)^\top (x^t - x). 
\end{eqnarray*}
Since $\|\nabla f_t(x)\| \leq G$ and $\XCal$ is convex and bounded with a diameter $D > 0$, we have
\begin{eqnarray*}
\lefteqn{\sum_{t=1}^T \left(\tfrac{1}{2\sqrt{1 + \max\{M^1, \ldots, M^t\}}} - \tfrac{1}{4}\min\{\tfrac{1}{4GD}, \alpha\}\right)(x^t - x)^\top \nabla f_t(x^t)\nabla f_t(x^t)^\top (x^t - x)} \\ 
& \leq & \tfrac{G^2D^2}{2}\left(\sum_{t = 1}^T \max\left\{0, \tfrac{1}{\sqrt{1 + \max\{M^1, \ldots, M^t\}}} - \tfrac{1}{2}\min\{\tfrac{1}{4GD}, \alpha\}\right\}\right). 
\end{eqnarray*}
Putting these pieces together yields that 
\begin{equation*}
\text{Regret}(T) \leq \tfrac{D^2}{2} + \tfrac{G^2D^2}{2}\left(\sum_{t = 1}^T \max\left\{0, \tfrac{1}{\sqrt{1 + \max\{M^1, \ldots, M^t\}}} - \tfrac{1}{2}\min\{\tfrac{1}{4GD}, \alpha\}\right\}\right) + \tfrac{d\sqrt{1 + \max\{M^1, \ldots, M^T\}}}{2}\log(TG^2 + 1). 
\end{equation*}
For simplicity, we let $\gamma = \frac{1}{2}\min\{\tfrac{1}{4GD}, \alpha\}$.  Taking the expectation of both sides, we have
\begin{equation*}
\EE[\text{Regret}(T)] \leq \tfrac{D^2}{2} + \tfrac{G^2 D^2}{2}\underbrace{\EE\left[\sum_{t = 1}^T \max\left\{0, \tfrac{1}{\sqrt{1 + \max\{M^1, \ldots, M^t\}}} - \gamma\right\}\right]}_{\textbf{I}} + \tfrac{d\log(TG^2+1)}{2}\underbrace{\EE\left[\sqrt{1 + \max\{M^1, \ldots, M^T\}}\right]}_{\textbf{II}}. 
\end{equation*}
It remains to bound the terms $\textbf{I}$ and $\textbf{II}$ using Proposition~\ref{Prop:GRV}. By using the same argument as applied in the proof of Theorem~\ref{Thm:AdaOGD-regret}, we have
\begin{equation*}
\textbf{I} \leq e^{\frac{1}{\gamma^2\log(T+10)}}, 
\end{equation*}
and 
\begin{equation*}
\textbf{II} \leq \sqrt{1 + \log(T+10) + \log(T)\log(T+10)}. 
\end{equation*}
Therefore, we conclude that 
\begin{equation*}
\EE[\text{Regret}(T)] \leq \tfrac{D^2}{2}(1 + e^{\frac{1}{\gamma^2\log(T+10)}}) + \tfrac{d\log(TG^2+1)}{2}\sqrt{1 + \log(T+10) + \log(T)\log(T+10)}. 
\end{equation*}
This completes the proof. 

\section{Proof of Lemma~\ref{Lemma:AdaONS-MA}}\label{app:AdaONS-MA}
Recall that the update formula of $x_i^{t+1}$ in either the multi-agent ONS is
\begin{equation*}
x_i^{t+1} \leftarrow \argmin_{x_i \in \XCal_i}\{(x_i - x_i^t)^\top v_i(x^t) + \tfrac{\eta_i}{2}(x_i - x_i^t)^\top A_i^{t+1}(x_i - x_i^t)\}. 
\end{equation*}
The first-order optimality condition implies that 
\begin{equation*}
(x_i - x_i^{t+1})^\top v_i(x^t) + \eta_i (x_i - x_i^{t+1})^\top A_i^{t+1}(x_i^{t+1} - x_i^t) \geq 0, \quad \textnormal{for all } x_i \in \XCal_i. 
\end{equation*}
Equivalently, we have
\begin{eqnarray*}
\lefteqn{ \tfrac{\eta_i}{2}((x_i^{t+1} - x_i)^\top A_i^{t+1}(x_i^{t+1} - x_i) - (x_i^t - x_i)^\top A_i^{t+1}(x_i^t - x_i))} \\ 
& \leq & (x_i - x_i^{t+1})^\top v_i(x^t) - \tfrac{\eta_i}{2}(x_i^{t+1} - x_i^t)^\top A_i^{t+1}(x_i^{t+1} - x_i^t) \\
& = & (x_i - x_i^t)^\top v_i(x^t) + (x_i^t - x_i^{t+1})^\top v_i(x^t) - \tfrac{\eta_i}{2}(x_i^{t+1} - x_i^t)^\top A_i^{t+1}(x_i^{t+1} - x_i^t) \\
& \leq & (x_i - x_i^t)^\top v_i(x^t) + \tfrac{1}{2\eta_i}v_i(x^t)^\top (A_i^{t+1})^{-1}v_i(x^t). 
\end{eqnarray*}
Rearranging this inequality, we have
\begin{eqnarray*}
\lefteqn{\tfrac{\eta_i}{2}(x_i^{t+1} - x_i)^\top A_i^{t+1}(x_i^{t+1} - x_i) - \tfrac{\eta_i}{2} (x_i^t - x_i)^\top A_i^t(x_i^t - x_i)} \\
& \leq & (x_i - x_i^t)^\top v_i(x^t) + (x_i^t - x_i)^\top\left(\tfrac{\eta_i}{2}A_i^{t+1} - \tfrac{\eta_i}{2}A_i^t\right)(x_i^t - x_i) + \tfrac{1}{2\eta_i}v_i(x^t)^\top (A_i^{t+1})^{-1}v_i(x^t). 
\end{eqnarray*}
Summing over $i = 1, 2, \ldots, N$, we have
\begin{eqnarray*}
\lefteqn{\sum_{i=1}^N \left(\tfrac{\eta_i}{2}(x_i^{t+1} - x_i)^\top A_i^{t+1}(x_i^{t+1} - x_i) - \tfrac{\eta_i}{2} (x_i^t - x_i)^\top A_i^t(x_i^t - x_i)\right)} \\ 
& \leq & (x - x^t)^\top v(x^t) + \tfrac{1}{2}\left(\sum_{i = 1}^N (x_i^t - x_i)^\top\left(\eta_i A_i^{t+1} - \eta_i A_i^t\right)(x_i^t - x_i)\right) + \tfrac{1}{2}\left(\sum_{i = 1}^N \tfrac{1}{\eta_i}v_i(x^t)^\top (A_i^{t+1})^{-1}v_i(x^t)\right). 
\end{eqnarray*}
Since $\GCal$ is $\alpha$-exp-concave and satisfies that $\|v(x)\| \leq G_i$ and $\|x - x'\| \leq D_i$ for all $x_i, x'_i \in \XCal_i$, we have
\begin{equation*}
\langle x - x^t, v(x) - v(x^t)\rangle \geq \tfrac{1}{4} \left(\sum_{i = 1}^N \min\{\tfrac{1}{4G_i D_i}, \alpha\}(x_i^t - x_i)^\top(v_i(x)v_i(x)^\top + v_i(x^t)v_i(x^t)^\top)(x_i^t - x_i)\right). 
\end{equation*}
Putting these pieces together yields that 
\begin{eqnarray*}
\lefteqn{\sum_{i=1}^N \left(\tfrac{\eta_i}{2}(x_i^{t+1} - x_i)^\top A_i^{t+1}(x_i^{t+1} - x_i) - \tfrac{\eta_i}{2} (x_i^t - x_i)^\top A_i^t(x_i^t - x_i)\right)} \\ 
& \leq & (x - x^t)^\top v(x) + \sum_{i=1}^N (x_i^t - x_i)^\top(\tfrac{\eta_i}{2}A_i^{t+1} - \tfrac{\eta_i}{2}A_i^t - \tfrac{1}{4}\min\{\tfrac{1}{4GD}, \alpha\} v_i(x^t)v_i(x^t)^\top)(x_i^t - x_i) \\ 
& & + \tfrac{1}{2}\left(\sum_{i = 1}^N \tfrac{1}{\eta_i}v_i(x^t)^\top (A_i^{t+1})^{-1}v_i(x^t)\right). 
\end{eqnarray*}
Summing over $t= 1, 2, \ldots, T$ yields the desired inequality. 

\section{Proof of Theorem~\ref{Thm:ONS-rate}}\label{app:ONS-rate}
We can see from the multi-agent ONS algorithm that $\eta_i = \frac{1}{2}\min\{\frac{1}{4GD}, \alpha\}$, $A_i^1 = I_{d_i}$ where $I_{d_i} \in \br^{d_i \times d_i}$ is an identity matrix, and $A_i^{t+1} = A_i^t + v_i(x^t)v_i(x^t)^\top$. Since $\XCal$ is convex and bounded with a diameter $D > 0$, we have
\begin{equation*}
\sum_{i = 1}^N \tfrac{\eta_i}{2}(x_i^1 - x_i)^\top A_i^1(x_i^1 - x_i) \leq \tfrac{\alpha D^2}{4}, \qquad \eta_i A_i^{t+1} - \eta_i A_i = \tfrac{1}{2}\min\{\tfrac{1}{4GD}, \alpha\} v_i(x^t)v_i(x^t)^\top. 
\end{equation*}
By Lemma~\ref{Lemma:AdaONS-MA}, we have
\begin{equation}\label{inequality:ONS-rate-first}
\sum_{t = 1}^T (x^t - x)^\top v(x) \leq \tfrac{\alpha D^2}{4} + \tfrac{1}{2}\left(\sum_{t = 1}^T \sum_{i = 1}^N \tfrac{1}{\eta_i}v_i(x^t)^\top (A_i^{t+1})^{-1}v_i(x^t)\right). 
\end{equation}
Since $\|v(x)\| \leq G$, $A_i^1 = I_{d_i}$ and $A_i^{t+1} = A_i^t + v_i(x^t)v_i(x^t)^\top$,~\citet[Lemma~11]{Hazan-2007-Logarithmic} guarantees that 
\begin{equation*}
\sum_{t=1}^T v_i(x^t)^\top (A_i^{t+1})^{-1}v_i(x^t) \leq d_i\log(TG^2+1), 
\end{equation*}
which implies that 
\begin{equation}\label{inequality:ONS-rate-second}
\sum_{t = 1}^T \sum_{i = 1}^N \tfrac{1}{\eta_i}v_i(x^t)^\top (A_i^{t+1})^{-1}v_i(x^t) \leq \max\left\{8GD, \tfrac{2}{\alpha}\right\}(d\log(TG^2+1)). 
\end{equation}
Plugging Eq.~\eqref{inequality:ONS-rate-second} into Eq.~\eqref{inequality:ONS-rate-first} yields that 
\begin{equation*}
\sum_{t = 1}^T (x^t - x)^\top v(x) \leq \tfrac{\alpha D^2}{4} + \max\left\{4GD, \tfrac{1}{\alpha}\right\}(d\log(TG^2+1)). 
\end{equation*}
By the definition of $\textsc{gap}(\cdot)$ and $\bar{x}^T$ (i.e., $\bar{x}^T = \frac{1}{T}\sum_{t=1}^T x^t$), we have
\begin{equation*}
\textsc{gap}(\bar{x}^T) \leq \tfrac{\alpha D^2}{4T} + \tfrac{d\log(TG^2+1)}{T}\max\left\{4GD, \tfrac{1}{\alpha}\right\}. 
\end{equation*}
This completes the proof. 

\section{Proof of Theorem~\ref{Thm:AdaONS-rate}}\label{app:AdaONS-rate}
We can see from Algorithm~\ref{alg:AdaONS-MA} that $\eta_i^{t+1} = \tfrac{1}{\sqrt{1 + \max\{M_i^1, \ldots, M_i^t\}}}$, $A_i^1 = I_{d_i}$ where $I_{d_i} \in \br^{d_i \times d_i}$ is an identity matrix, and $A_i^{t+1} = A_i^t + v_i(x^t)v_i(x^t)^\top$. Since $\XCal$ is convex and bounded with a diameter $D > 0$, we have
\begin{equation*}
\sum_{i = 1}^N \tfrac{\eta_i^1}{2}(x_i^1 - x_i)^\top A_i^1(x_i^1 - x_i) \leq \tfrac{D^2}{2}, \qquad \eta_i^{t+1}A_i^{t+1} - \eta_i^t A_i \preceq \tfrac{1}{\sqrt{1 + \max\{M_i^1, \ldots, M_i^t\}}}v_i(x^t)v_i(x^t)^\top. 
\end{equation*}
We can see from Eq.~\eqref{inequality:AdaONS-MA-key} that 
\begin{eqnarray}\label{inequality:AdaONS-rate-first}
\lefteqn{\sum_{t = 1}^T (x^t - x)^\top v(x) \leq \tfrac{D^2}{2} + \tfrac{1}{2}\left(\sum_{t = 1}^T \sum_{i = 1}^N \tfrac{1}{\eta_i^{t+1}}v_i(x^t)^\top (A_i^{t+1})^{-1}v_i(x^t)\right)} \\
& & + \sum_{t = 1}^T \sum_{i=1}^N \left(\tfrac{1}{2\sqrt{1 + \max\{M_i^1, \ldots, M_i^t\}}} - \tfrac{1}{4}\min\{\tfrac{1}{4GD}, \alpha\}\right) (x_i^t - x_i)^\top v_i(x^t)v_i(x^t)^\top(x_i^t - x_i). \nonumber
\end{eqnarray}
Since $\|v(x)\| \leq G$, $A_i^1 = I_{d_i}$ and $A_i^{t+1} = A_i^t + v_i(x^t)v_i(x^t)^\top$,~\citet[Lemma~11]{Hazan-2007-Logarithmic} guarantees that 
\begin{eqnarray*}
\sum_{t=1}^T \tfrac{1}{\eta_i^{t+1}} v_i(x^t)^\top (A_i^{t+1})^{-1}v_i(x^t) & \leq & \sqrt{1 + \max\{M_i^1, \ldots, M_i^T\}}\left(\sum_{t=1}^T v_i(x^t)^\top (A_i^{t+1})^{-1}v_i(x^t)\right) \\ 
& \leq & \sqrt{1 + \max\{M_i^1, \ldots, M_i^T\}}(d_i\log(TG^2+1)), 
\end{eqnarray*}
which implies that 
\begin{equation}\label{inequality:AdaONS-rate-second}
\sum_{t = 1}^T \sum_{i = 1}^N \tfrac{1}{\eta_i^{t+1}}v_i(x^t)^\top (A_i^{t+1})^{-1}v_i(x^t) \leq \sqrt{1 + \max_{1 \leq i \leq N, 1 \leq t \leq T} \{M_i^t\}}(d\log(TG^2+1)). 
\end{equation}
Plugging Eq.~\eqref{inequality:AdaONS-rate-second} into Eq.~\eqref{inequality:AdaONS-rate-first} yields that 
\begin{eqnarray*}
\lefteqn{\sum_{t = 1}^T (x^t - x)^\top v(x) \leq \tfrac{D^2}{2} + \tfrac{d\log(TG^2+1)}{2}\sqrt{1 + \max_{1 \leq i \leq N, 1 \leq t \leq T} \{M_i^t\}}} \\
& & + \sum_{t = 1}^T \sum_{i=1}^N \left(\tfrac{1}{2\sqrt{1 + \max\{M_i^1, \ldots, M_i^t\}}} - \tfrac{1}{4}\min\{\tfrac{1}{4GD}, \alpha\}\right) (x_i^t - x_i)^\top v_i(x^t)v_i(x^t)^\top(x_i^t - x_i). 
\end{eqnarray*}
Since $\|v(x)\| \leq G$ and $\XCal$ is convex and bounded with a diameter $D > 0$, we have
\begin{eqnarray*}
\lefteqn{\sum_{t=1}^T \sum_{i=1}^N \left(\tfrac{1}{2\sqrt{1 + \max\{M_i^1, \ldots, M_i^t\}}} - \tfrac{1}{4}\min\{\tfrac{1}{4GD}, \alpha\}\right)(x_i^t - x_i)^\top v_i(x^t)v_i(x^t)^\top (x_i^t - x_i)} \\ 
& \leq & \tfrac{G^2D^2}{2}\left(\sum_{t = 1}^T \max\left\{0, \max_{1 \leq i \leq N}\left\{\tfrac{1}{\sqrt{1 + \max\{M_i^1, \ldots, M_i^t\}}}\right\} - \tfrac{1}{2}\min\{\tfrac{1}{4GD}, \alpha\}\right\}\right). 
\end{eqnarray*}
Putting these pieces together yields that 
\begin{eqnarray*}
\lefteqn{\sum_{t = 1}^T (x^t - x)^\top v(x) \leq \tfrac{D^2}{2} + \tfrac{d\log(TG^2+1)}{2}\sqrt{1 + \max_{1 \leq i \leq N, 1 \leq t \leq T} \{M_i^t\}}} \\
& & + \tfrac{G^2D^2}{2}\left(\sum_{t = 1}^T \max\left\{0, \max_{1 \leq i \leq N}\left\{\tfrac{1}{\sqrt{1 + \max\{M_i^1, \ldots, M_i^t\}}}\right\} - \tfrac{1}{2}\min\{\tfrac{1}{4GD}, \alpha\}\right\}\right). \nonumber
\end{eqnarray*}
For simplicity, we let $\gamma = \frac{1}{2}\min\{\tfrac{1}{4GD}, \alpha\}$. By the definition of $\bar{x}^T$ (i.e., $\bar{x}^T = \frac{1}{T}\sum_{t=1}^T x^t$), we have
\begin{eqnarray*}
\textsc{gap}(\bar{x}^T) \leq \tfrac{D^2}{2T} + \tfrac{G^2D^2}{2T}\left(\sum_{t = 1}^T \max\left\{0, \max_{1 \leq i \leq N}\left\{\tfrac{1}{\sqrt{1 + \max\{M_i^1, \ldots, M_i^t\}}}\right\} - \gamma\right\}\right) + \tfrac{d\log(TG^2+1)}{2T}\sqrt{1 + \max_{1 \leq i \leq N, 1 \leq t \leq T} \{M_i^t\}}. 
\end{eqnarray*}
Taking the expectation of both sides, we have
\begin{eqnarray}\label{inequality:AdaONS-rate-third}
\EE[\textsc{gap}(\bar{x}^T)] & \leq & \tfrac{D^2}{2T} + \tfrac{G^2 D^2}{2T}\underbrace{\EE\left[\sum_{t = 1}^T \max\left\{0, \max_{1 \leq i \leq N}\left\{\tfrac{1}{\sqrt{1 + \max\{M_i^1, \ldots, M_i^t\}}}\right\} - \gamma\right\}\right]}_{\textbf{I}} \\ 
& & + \tfrac{d\log(TG^2+1)}{2T}\underbrace{\EE\left[\sqrt{1 + \max_{1 \leq i \leq N, 1 \leq t \leq T} \{M_i^t\}}\right]}_{\textbf{II}}.  \nonumber
\end{eqnarray}
It remains to bound the terms $\textbf{I}$ and $\textbf{II}$ using Proposition~\ref{Prop:GRV}.  Indeed, we have
\begin{equation*}
\textbf{I} = \sum_{t = 1}^T \EE\left[\max\left\{0, \max_{1 \leq i \leq N}\left\{\tfrac{1}{\sqrt{1 + \max\{M_i^1, \ldots, M_i^t\}}}\right\} - \gamma\right\}\right] \ \leq \ \sum_{t = 1}^T \PP\left(\max_{1 \leq i \leq N}\left\{\tfrac{1}{\sqrt{1 + \max\{M_i^1, \ldots, M_i^t\}}}\right\} - \gamma \geq 0\right). 
\end{equation*}
Considering $i^t = \argmax_{1 \leq i \leq N} \left\{\tfrac{1}{\sqrt{1 + \max\{M_i^1, \ldots, M_i^t\}}}\right\}$ that is a random variable and then recalling that $\{\max\{M_i^1, \ldots, M_i^t\}\}_{1 \leq i \leq N}$ are i.i.d., we have $i^t \in \{1, \ldots, N\}$ is uniformly distributed. This implies that 
\begin{equation*}
\textbf{I} \leq \tfrac{1}{N} \sum_{t = 1}^T \sum_{j=1}^N \PP\left(\max_{1 \leq i \leq N}\left\{\tfrac{1}{\sqrt{1 + \max\{M_i^1, \ldots, M_i^t\}}}\right\} - \gamma \geq 0 \mid i^t = j\right) = \tfrac{1}{N} \sum_{t = 1}^T \sum_{j=1}^N \PP\left(\tfrac{1}{\sqrt{1 + \max\{M_j^1, \ldots, M_j^t\}}} - \gamma \geq 0\right). 
\end{equation*}
Since $\{M_i^t\}_{1 \leq i \leq N, 1 \leq t \leq T}$ are i.i.d.\ geometric random variables with $p_0 = \frac{1}{\log(T+10)}$, Proposition~\ref{Prop:GRV} implies that
\begin{equation*}
\sum_{t=1}^T \PP\left(\tfrac{1}{\sqrt{1 + \max\{M_j^1, \ldots, M_j^t\}}} - \gamma \geq 0\right) \leq e^{\frac{p}{\gamma^2}} = e^{\frac{1}{\gamma^2\log(T+10)}}. 
\end{equation*}
Putting these pieces together yields that 
\begin{equation}\label{inequality:AdaONS-rate-I}
\textbf{I} \leq e^{\frac{1}{\gamma^2\log(T+10)}} = e^{\frac{(\max\{8GD, 2\alpha^{-1}\})^2}{\log(T+10)}}. 
\end{equation}
By using the similar argument with Proposition~\ref{Prop:GRV} and $p_0 = \frac{1}{\log(T+10)}$, we have
\begin{equation}\label{inequality:AdaONS-rate-II}
\textbf{II} \leq \sqrt{1 + \tfrac{1 + \log(NT)}{p_0}} = \sqrt{1 + \log(T+10) + \log(NT)\log(T+10)}. 
\end{equation}
Plugging Eq.~\eqref{inequality:AdaONS-rate-I} and Eq.~\eqref{inequality:AdaONS-rate-II} into Eq.~\eqref{inequality:AdaONS-rate-third} yields that 
\begin{equation*}
\EE[\textsc{gap}(\bar{x}^T)] \leq \tfrac{D^2}{2T}(1 + G^2 e^{\frac{(\max\{8GD, 2\alpha^{-1}\})^2}{\log(T+10)}}) + \tfrac{d\log(TG^2+1)}{2T}\sqrt{1 + \log(T+10) + \log(T)\log(T+10)}. 
\end{equation*}
This completes the proof.

\end{document}